\documentclass[journal]{IEEEtran}
\usepackage{cite}
\usepackage{amsmath,amssymb,amsfonts,amsthm}
\usepackage{graphicx}
\usepackage{multirow}
\usepackage{subfigure}
\usepackage{textcomp}
\usepackage{enumerate,paralist}
\usepackage{mathrsfs} 
\usepackage{setspace}
\usepackage{nomencl}
\usepackage{algpseudocode,algorithm}
\usepackage{threeparttable}
\usepackage{bm}
\usepackage{comment} 
\makenomenclature

\theoremstyle{definition}
\newtheorem{theorem}{\bf Theorem}
\newtheorem{lemma}{\bf Lemma}
\newtheorem{remark}{\bf Remark}

\graphicspath{{Graphics/}}	
\usepackage[colorlinks, linkcolor=black, anchorcolor=black, citecolor=black]{hyperref}
\def\BibTeX{{\rm B\kern-.05em{\sc i\kern-.025em b}\kern-.08em
		T\kern-.1667em\lower.7ex\hbox{E}\kern-.125emX}}

\begin{document}
\title{Generalized Multi-kernel Maximum Correntropy Kalman Filter for Disturbance Estimation}
\author{Shilei Li, Dawei Shi, Yunjiang Lou, Wulin Zou, Ling Shi
\thanks{Manuscript received September 10, 2022; revised June 20, 2023; accepted
September 23, 2023. The work of D. Shi was supported by National Natural Science Foundation of China under grants 62261160575 and 61973030. This work of Y. Lou was supported by the National Key Research and Development Program of China under Grant 2020YFB1313900.}
\thanks{Shilei Li, and Ling Shi are with the Department of Electronic and Computer Engineering, The Hong Kong University of Science and Technology, Hong Kong, China (e-mail: slidk@connect.ust.hk, eesling@ust.hk).} 
\thanks{Dawei Shi is with the School of Automation, Beijing Institute of Technology, China (e-mail: daweishi@bit.edu.cn).}
\thanks{Yunjiang Lou is with the State Key Laboratory of Robotics and System, School of Mechanical Engineering and Automation, Harbin Institute of Technology Shenzhen, Shenzhen 518055, China (e-mail: louyj@hit.edu.cn).}
\thanks{Wulin Zou is with Xeno Dynamics, Control Department,  Xeno Dynamics Co., Ltd, Shenzhen 518055, China (e-mail: zouwulin@xeno.com).}}
\maketitle
\begin{abstract}
Disturbance observers have been attracting continuing research efforts and are widely used in many applications. Among them, the Kalman filter-based disturbance observer is an attractive one since it estimates both the state and the disturbance simultaneously, and is optimal for a linear system with Gaussian noises. Unfortunately, The noise in the disturbance channel typically exhibits a heavy-tailed distribution because the nominal disturbance dynamics usually do not align with the practical ones. To handle this issue, we propose a generalized multi-kernel maximum correntropy Kalman filter for disturbance estimation, which is less conservative by adopting different kernel bandwidths for different channels and exhibits excellent performance both with and without external disturbance. The convergence of the fixed point iteration and the complexity of the proposed algorithm are given. Simulations on a robotic manipulator reveal that the proposed algorithm is very efficient in disturbance estimation with moderate algorithm complexity.
\end{abstract}
\begin{IEEEkeywords}
	disturbance observer,  multi-kernel correntropy, generalized loss, robotic manipulator
\end{IEEEkeywords}

\section{Introduction}	
Disturbance widely exists in mechanical systems and aeronautic systems, such as industrial robotic manipulators~\cite{b1}, motion servo systems~\cite{b2}, disk drive systems~\cite{b3}, missiles~\cite{b4}, and spacecrafts~\cite{b5}. It deteriorates the control performance significantly and even induces system instability. Hence, disturbance rejection has been a key component of the controller design. 

One approach for mitigating disturbance is to use feedforward control, which can be effective when the disturbance is measurable. However, in some cases, the cost of sensors may be prohibitive or direct measurement of the disturbance may not be possible. An alternative approach is to design robust controllers. However, there is an intrinsic trade-off between the controller's robustness and its nominal performance, which is referred to as the single degree of freedom control structure~\cite{b6}. The disturbance observer (DOB) is a promising technique to address the aforementioned issues. It acts as an add-on component for the baseline controller and can increase its robustness against disturbance and recover the controller's nominal performance when disturbance disappears. Therefore, it is favored by many researchers.

Various linear disturbance observers have been designed by different researchers for different applications, which include the frequency domain-based DOB~\cite{b7}, the extended state observer (ESO) in active disturbance rejection control (ADRC)~\cite{b8,b9}, the unknown input observer (UIO) in disturbance accommodation control (DAC)~\cite{b10}, the Kalman filter-based disturbance observer (KF-DOB)~\cite{b11}; the uncertainty disturbance estimator (UDE)~\cite{b12}, and the equivalent input disturbance estimator (EID)~\cite{b13}. The frequency domain-based DOB was proposed by Ohishi et al.~\cite{b7} and the inverse of the plant model accompanied by a filter was used to estimate the lumped disturbance; the ESO was designed by Han~\cite{c9} for the purpose of estimating the lumped disturbance; the UIO was developed by Johnson~\cite{b10} which estimated the state and the disturbance by assuming that the disturbance dynamics was the \emph{a priori} knowledge; the KF-DOB~\cite{b11} also estimated the state and the disturbance simultaneously by involving the disturbance as a new state and constructing an augmented state Kalman filter; the mechanism of the UDE~\cite{b12} was quite close to the frequency domain-based DOB where a filter was utilized to make the disturbance estimation implementable; the EID ~\cite{b13} can be regarded as an alternative to the ESO by deliberately selecting the parameters. One can refer to \cite{b6} for a more comprehensive review about the DOB.      

Although many linear disturbance observers are available with different characteristics, they usually use a constant gain to update the estimate of the state or disturbance~\cite{b7,b8,b9,b10,b11,b12,b13}, which intrinsically induces a  trade-off among disturbance estimation, state estimation, and noise suppression. The constant gain cannot handle the time-varying noise characteristics effectively. For example, the KF-DOB is derived under the well-known minimum mean square error (MMSE) criterion and is the minimum variance estimator under Gaussian assumption (note that the KF gain is constant under the steady state). However, its performance degenerates significantly with heavy-tailed noise induced by disturbance. A prescription for this issue is to re-tune the noise covariance matrices at the price of sacrificing the nominal performance. However, this method usually is unsatisfactory especially when outliers are involved. Many robust techniques have been applied to KF to increase its robustness, such as the modified influence function-based KF by Masreliez et al.~\cite{c13}, Huber-based KF~\cite{c14,c15}, robust Student’s $t$-based KF~\cite{c16,c17}. Those methods improve the robustness of the KF by a bounded influence function~\cite{c13,c14,c15}, or by employing the heavy-tailed Student’s $t$-distribution~\cite{c16,c17}. However, they mainly focus on non-Gaussian noises existing in \emph{all measurements or all process channels}, rather than only existing in some specific ones.

The correntropy provides a potential tool for improving the robustness of the KF. It originates from information-theoretic learning (ITL) and has been widely used as a robust cost for machine learning~\cite{b14}, adaptive filtering~\cite{b15}, regression~\cite{b16}, and state estimation~\cite{b17,b18}. Correntropy is a local similarity measure of two random variables, which captures higher-order statistics~\cite{b18} compared with the conventional second-order error moment and hence is more suitable for applications with heavy-tailed noise. A good property of the correntropy is that the correntropy induced metric (CIM) varies from an $\ell_2$ norm to an $\ell_{0}$ norm with the growth of the error~\cite{b19}. Using this property, the maximum correntropy KF (MCKF) was derived in ~\cite{b18,c18,c19}. Its sequential form, Chandrasekhar-type recursion, and square-root form were derived in ~\cite{d19,d20,d21}. It was also extended to the nonlinear system with MC-EKF~\cite{d22}, MC-UKF~\cite{d23}, MC-GHKF~\cite{d24}, and was applied to systems with state constraints~\cite{d25,d26}, distributed state estimation~\cite{e21}, and interacting multiple model~\cite{e22}. The above correntropy-based algorithms are mainly derived under the Gaussian kernel. Actually, they can also be derived based on others kernels, e.g., the generalized Gaussian kernel and Cauchy kernel. The generalized Gaussian kernel first used by Chen et al.~\cite{b23} for adaptive filtering. After that, it had been utilized in active noise control~\cite{b24} and multiple-hypothesis detection~\cite{b25}. The Cauchy kernel was initially employed by Wang et al.~\cite{b47} for target tracking, and then it was utilized in the distributed filtering subject to cyber-attacks~\cite{b52}. Unfortunately, although these correntropy-based algorithms are robust to outliers or heavy-tailed distributions in general, they use \emph{a unified kernel bandwidth} for all channels, which are \emph{very conservative} when only some channels contain non-Gaussian noises and the others are Gaussian.

To handle this issue, in our previous works~\cite{b20,b46,b49}, we extended the definition of correntropy from random variables to random vectors and presented the multi-kernel maximum correntropy Kalman filter (MKMCKF) where the bandwidth of each channel can be tuned flexibly. With this modification, the behavior of the CIM in different channels can be designed independently. More specifically, the infinite bandwidth is applied to the Gaussian channel so that the CIM in this type of channel is an $\ell_2$ norm. As for the non-Gaussian channel, a suitable bandwidth is selected so that the CIM changes from an $\ell_2$ norm to an $\ell_0$ norm with the growth of the error. The MKMCKF is \emph{not conservative} compared with the traditional correntropy-based algorithms. However, it still has some defects: it is derived based on the Gaussian kernel, which is less powerful than the generalized Gaussian kernel; the connection between the objective function (or the kernel parameter selection) and noise distribution for a general estimation problem is vague; the detailed convergence analysis of the fixed-point algorithm in the MKMCKF is missing.

This paper aims to cope with the aforementioned problems. We first extend our previous multi-kernel correntropy under the Gaussian kernel to a generalized multi-kernel correntropy (GMKC) under the generalized Gaussian kernel and provide the corresponding generalized loss (GL) function. Then, we provide some important properties of the GMKC and build a connection between the GL and the noise distribution based on the maximum \emph{a posteriori} probability (MAP). Finally, we derive a generalized multi-kernel maximum correntropy Kalman filter (GMKMCKF) for disturbance estimation and give a sufficient condition for the convergence of the fixed-point algorithm in GMKMCKF. We also analyze the complexity and kernel parameter sensitiveness of the GMKMCKF, and compare it with the ESO~\cite{b8}, KF-DOB~\cite{b11}, MCKF~\cite{b18}, and particle filter (PF)~\cite{b34}. The major contributions of this paper lie in three aspects: Firstly, the proposed ``multi-kernel correntropy" methodology can significantly mitigate the conservatism of traditional correntropy. Secondly, we associate the GL with the noise distribution based on MAP, which illustrates the conservatism of the traditional correntropy and provides general guidance for kernel parameter selection. Thirdly, the convergence analysis of the fixed-point iteration in the GMKMCKF is given and its performance is compared with some benchmark methods. The comprehensive contributions of this paper are summarized as follows:
\begin{enumerate}[1)]
\item We find that the noise distribution in the disturbance channel is heavy-tailed and the KF \emph{cannot} serve this type of noise effectively from the modeling perspective. To cope with this issue, we propose the GMKC and GL, demonstrate their properties (\textbf{Theorem \ref{Theorem1}--\ref{Theorem5}}), and compare the GL with the least mean $p$ power (LMP) criterion.
\item We reveal that the traditional KF can be derived by an MSE criterion and is sensitive to heavy-tailed noises. To increase its robustness, we derive a novel estimator GMKMCKF by employing the GL as the cost function, which is an extension of the MCKF and MKMCKF (\textbf{Theorem \ref{Theorem7}}) but less conservative.
\item The convergence of the fixed-point algorithm in GMKMCKF is provided (\textbf{Theorem \ref{theorem9}}). Moreover, the algorithm complexity is given and the parameter sensitiveness is numerically analyzed.  Simulations on a robotic manipulator verify the effectiveness of the proposed method.
\end{enumerate}

The remainder of this paper is organized as follows. In Section II, the GMKC and GL are introduced and their properties are given. In Section III, the GMKMCKF is derived and its convergence and complexity are discussed. In Section IV, simulations are conducted to verify the effectiveness of the proposed method. In Section V, a conclusion is drawn.

\emph{Notations}: The transpose of a matrix $A$ is denoted by $A^{\prime}$. The \emph{a priori} and the \emph{a posteriori} estimate of state $x$ is denoted by $x^{-}$ and $x^{+}$, respectively. The vector with $l$ dimensions is denoted by $\mathbb{R}^{l}$ and the matrix with $m$ rows and $n$ columns is denoted by $\mathbb{R}^{m \times n}$. $X\succ 0$ ($X \succcurlyeq 0$) denotes $X$ is positive definite (semi-positive definite) matrix. The Gaussian distribution with mean $\mu$ and covariance $\Sigma$ is denoted by $\mathcal{N}(\mu,\Sigma)$. The Laplace distribution with location parameter $\mu$ and scale parameter $s$ is denoted by $\mathcal{L}(\mu,s)$. The uniform distribution with bounds $a$ and $b$ is denoted by $\mathcal{U}(a,b)$. The $p$ norm of a vector $x$ or matrix $A$ is denoted by $\|x\|_p$ or $\|A\|_p$. The $p$ power of $p$ vector norm $x$ is denoted by $\|x\|_{p}^{p}$. The expectation of a random variable $X$ is denoted by $E(X)$.
\section{Generalized Multi-kernel Correntropy}
In this section, we first provide the traditional Kalman filter. Then, we formulate an estimation problem with unknown process disturbance. Finally, we introduce the GMKC and GL and provide their properties.
\subsection{Kalman Filter}
We consider a linear time-invariant (LTI) system:
\begin{equation}
\begin{aligned}
{x}_{k+1}&={A}{x}_{k}+{w}_{k}\\
{y}_k&={C}{x}_{k}+{v}_{k}
\label{linear}
\end{aligned}
\end{equation}
where $x_k \in \mathbb{R}^{n}$ is the state, $y_{k} \in \mathbb{R}^{m}$ is the measurement, and ${w}_{k}$ and ${v}_{k}$ are Gaussian noises with $w_k \sim \mathcal{N}(0,Q_k)$ and $v_k \sim \mathcal{N}(0,R_k)$ where $Q_{k}\succcurlyeq 0$ and $R_{k} \succ 0$. The pair $(A,\sqrt{Q}_{k})$ is assumed to be controllable and $(A,C)$ is observable. The initial state $x_0 \sim \mathcal{N}(0, \prod_{0})$ is assumed to be uncorrelated with $w_k$ and $v_k$ for $k>0$. Denote the measurement set until time step $k$ as $\{{y}_k\}:=\{y_1,y_2,\ldots,y_k\}$. In KF, we have
\begin{subequations}
\abovedisplayskip=3pt
\begin{align}
\hat{x}_{k}^{-}=A\hat{x}_{k-1}^{+}
\label{kf1}
\end{align}
\begin{align}
P_{k}^{-}=AP_{k-1}^{+}A^{\prime}+Q_k
\end{align}
\begin{align}
K_{k}=P_{k}^{-}C^{\prime}(CP_{k}^{-}C^{\prime}+R_k)^{-1}
\label{gain1}
\end{align}
\begin{align}
\hat{x}_{k}^{+}=\hat{x}_{k}^{-}+K_{k}(y_{k}-C\hat{x}_{k}^{-})
\end{align}
\begin{align}
P_{k}^{+}=(I-K_{k}C)P_{k}^{-}
\label{p1}
\end{align}
\end{subequations}
where $\hat{x}_{k}^{-}$ and $\hat{x}_{k}^{+}$ is the \emph{a priori} and \emph{a posteriori} estimate of $x_k$, and $P_{k}^{-}$ and $P_{k}^{+}$ is the \emph{a priori} and \emph{a posteriori} estimate of error covariance at time step $k$, respectively.
\subsection{Problem Formulation}
In many practical applications, systems contain unknown process disturbance, i.e., 
\begin{equation}
\begin{aligned}
{x}_{k+1}&={A}{x}_{k}+{\Gamma}{d}_{k}+{w}_{x,k}\\
{y}_k&={C}{x}_{k}+{v}_{k}
\label{lineardis}
\end{aligned}
\end{equation}
where ${d}_{k} \in \mathbb{R}^{q}$ is the unknown disturbance, $\Gamma \in \mathbb{R}^{n \times q}$ map the disturbance to the state, and ${w}_{x,k}$ and ${v}_{k}$ are nominal noises. To estimate the disturbance, we treat the disturbance as a new state and construct the augmented state as $\bar{x}_{k}=[{d}_{k}^{\prime},x_{k}^{\prime}]^{\prime}$ (the aim of putting $d_k$ ahead of $x_k$ can be found in Theorem 2 in \cite{b20}). We assume that disturbance dynamics follows
\begin{equation}
d_{k+1}= d_{k}+w_{d,k}
\label{dynmodel}
\end{equation}
since we do not have the \emph{a priori} knowledge about the disturbance dynamics (the assumption $d_{k+1}= d_{k}$ is equivalent to $\dot{d}=0$ in the continuous case which is employed in many existing works \cite{b1,b51}). Then, we obtain
\begin{equation}
\begin{aligned}
{\bar{x}}_{k+1}=\bar{A} {\bar{x}}_{k} + \bar{w}_{k}\\
{y}_{k}={\bar{C}} {\bar{x}}_{k}+ \bar{v}_{k}
\label{system}
\end{aligned}
\end{equation}
with
\begin{equation}\nonumber
\begin{aligned}
\bar{A}=\left[\begin{array}{cc}
{I}&0\\
\Gamma&A
\end{array}\right],~{\bar{C}}=\left[\begin{array}{cc}
{0}&{C}\\
\end{array}\right]
\end{aligned}
\end{equation}
where $\bar{w}_{k}=[w_{d,k}^{\prime},w_{x,k}]^{\prime}$ and $\bar{v}_{k}=v_{k}$. In the conventional Kalman filter, the initial state ${x}_{0}$ is assumed to be Gaussian with $\mathcal{N}(0,\Sigma_0)$ and the noises follow
\begin{equation}\nonumber
w_{d,k}\sim \mathcal{N}(0,Q_d), w_{x,k}\sim \mathcal{N}(0,Q_x), v_{k}\sim \mathcal{N}(0,R).
\end{equation} 
Moreover, process noise $[w_{d,k}^{\prime}, w_{x,k}^{\prime}]^{\prime}$, measurement noise $v_{k}$ and initial state $x_0$ are mutually uncorrelated for $k \ge 0$. However, the Gaussian assumption of $w_{d,k}$ usually is unrealistic. In a practical application, the disturbance dynamics generally is time-varying with $d_{k+1}=f(d_{k})+w^{*}_{k}$ where $f(d_{k})$ is a time-varying nonlinear function and $w^{*}$ is the nominal disturbance noise (conventionally it is assumed to be Gaussian). We use the nominal model \eqref{dynmodel} for implementation since we are not accessible to the practical disturbance dynamics $f(d_{k})$. In this case, $w_{d,k}=f(d_{k})-d_{k}+w^{*}_{k}$ which should be heavy-tailed since it contains both the modelling mismatch $f(d_{k})-d_{k}$ and the noise $w^{*}_{k}$. A possible representation for this kind of distribution may be the $\epsilon$-contaminated mixture model. For example, we can use the uniform distribution $\mathcal{U}(a,b)$ to capture the noise induced by the modelling mismatch $f(d_{k})-d_{k}$ and employ the Gaussian distribution $\mathcal{N}(0,Q_{w})$ for the nominal noise $w^{*}_{k}$, which follows
\begin{equation}\nonumber
w_{d,k} \sim \epsilon\mathcal{U}(a,b)+(1-\epsilon)\mathcal{N}(0,Q_{w}),\quad 0<\epsilon<1
\end{equation}
where $\epsilon$ is a weight that determines the probability of a distribution occurs. Unfortunately, this mixture model cannot be approximated by a single Gaussian distribution effectively (see Fig. \ref{pdfEstimation}). This reveals that KF is not an efficient estimator for this type of noise from the perspective of noise distribution. Moreover, in some cases, the nominal noises may follow other types of distributions (e.g., the heavy-tailed distribution in~\cite{b37}, Laplace distribution in~\cite{b39}). All these factors deteriorate the estimation accuracy of the KF-DOB. 
\begin{figure}[htbp]
	\centerline{\includegraphics[width=4.5cm]{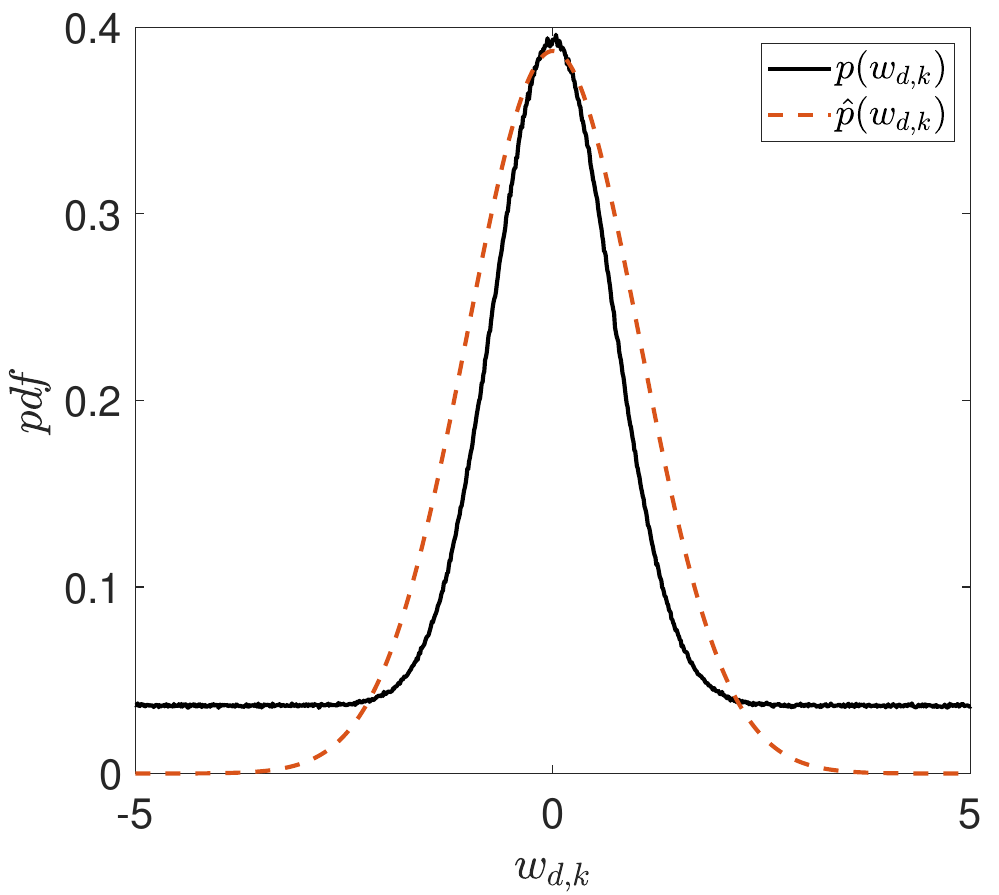}}
	\caption{Approximating a  $\epsilon$-contaminated mixture model using a Gaussian distribution. The Gaussian distribution is obtained by minimizing the mean squared error $\frac{1}{N}\sum_{k=1}^{N}\Big(p(w_{d,k})-\hat{p}(w_{d,k})\Big)^{2}$ where $p(w_{d,k}) = 0.37\mathcal{U}(-5,5)+0.63\mathcal{N}(0,0.5)$ is the target distribution and $\hat{p}(w_{d,k})$ is a Gaussian distribution to be determined. The estimated Gaussian distribution $\hat{p}(w_{d,k})$ follows $\mathcal{N}(0,1.03)$. One can see that the Gaussian distribution cannot approach a general mixture distribution effectively.}
	\label{pdfEstimation}
\end{figure}
\begin{remark}
 Although this paper focuses on process disturbance estimation, measurement disturbance actually can also be handled in a similar way by augmenting the disturbance as a new state (see Section III of \cite{b20} for details). A conventional way for the heavy-tailed distribution in the disturbed channel is to enlarge the covariance matrix $Q_d$. However, this would deteriorate its estimation performance with the disappearance of disturbance~\cite{b50}.
\end{remark}
\subsection{Generalized Multi-kernel Correntropy}
The correntropy is originally defined as a local similarity measure for two random variables $X,Y \in \mathbb{R}$ with joint distribution $F_{XY}(x,y)$  
\begin{equation}\nonumber
{C}({X},{Y})={E}[\kappa({X},{Y})]=\int \kappa(x,y) d {F}_{{X}{Y}}(x,y)
\label{correntropy}
\end{equation}  
where $\kappa(x,y)$ is a shift-invariant Mercer kernel, and $x$ and $y$ are the realizations of $X$ and $Y$. A common used kernel is the Gaussian density function with
\begin{equation}\nonumber
\kappa(x,y)=G_{\sigma}(x,y)=\exp(-\frac{e^2}{2\sigma^2})
\end{equation} 
where $e=x-y$ and $\sigma$ is the kernel bandwidth. In the case that
only $N$ samples of $x(k)$ and $y(k)$ are available and ${F}_{{X}{Y}}(x,y)$ is unknown, the correntropy can be obtained by the simple mean estimator
\begin{equation}\nonumber
{{C}}({X},{Y})=\frac{1}{N}\sum_{k=1}^{N}\kappa\big(x(k),y(k)\big)=\frac{1}{N}\sum_{k=1}^{N}G_{\sigma}\big(x(k),y(k)\big).
\label{tcorrentropy}
\end{equation}
In this paper, we adopt the generalized Gaussian density (GGD) function as the kernel
\begin{equation}
\kappa(x,y)=G_{\alpha,\beta}(x,y)=\exp({-|e/\beta|^{\alpha}})
\end{equation}
where $e=x-y$ is the error, $\alpha>0$ is the shape parameter, and $\beta>0$ is the kernel bandwidth. Under the GGD, we define the GMKC for random vectors $\mathcal{X}, \mathcal{Y} \in \mathbb{R}^{l}$ as follows (the $i$-th element of $\mathcal{X}$ and $\mathcal{Y}$ is $\mathcal{X}_{i}$ and $\mathcal{Y}_{i}$, respectively):
\begin{equation}\nonumber
\bar{C}(\mathcal{X},\mathcal{Y})=\sum_{i=1}^{l}{E}[\tilde{\kappa}_{i}(\mathcal{X}_i,\mathcal{Y}_{i})]=\sum_{i=1}^{l}\int \tilde{\kappa}_{i}({x}_{i},{y}_{i}) d {F}_{\mathcal{X}_{i}\mathcal{Y}_{i}}({x}_{i},{y}_{i})
\label{correntropyVector}
\end{equation}
with
\begin{equation}\nonumber
\tilde{\kappa}_{\alpha,\beta_i}({x}_{i},{y}_{i})=\beta_i^{\alpha}G_{\alpha,\beta_i}(x_i,y_i)=\beta_i^{\alpha}\exp({-|e_i/\beta_i|^{\alpha}})
\end{equation} 
where $x_i$ and $y_i$ are realizations of $\mathcal{X}_{i}$ and $\mathcal{Y}_{i}$, $e_i=x_i-y_i$ is the realization error, and $\beta_i$ is the $i$-th bandwidth for $\mathcal{X}_{i}$ and $\mathcal{Y}_{i}$. In a practical application, the joint distribution ${F}_{\mathcal{X}_{i}\mathcal{Y}_{i}}({x}_{i},{y}_{i})$ is not available and only $N$ samples can be obtained. In this case, we can estimate the GMKC as
\begin{equation}
\bar{{C}}(\mathcal{X},\mathcal{Y})=\sum_{i=1}^{l}\beta_i^{\alpha}{C}_{\alpha,\beta_i}(\mathcal{X}_{i},\mathcal{Y}_{i})
\label{mkcorrentropy}
\end{equation}
with
\begin{equation}
\begin{aligned}
{C}_{\alpha,\beta_i}(\mathcal{X}_{i},\mathcal{Y}_{i})&=\frac{1}{N}\sum_{k=1}^{N}G_{\alpha,\beta_i}\left(x_{i}(k),y_{i}(k)\right)
\label{correntropypair}
\end{aligned}
\end{equation}
where ${C}_{\alpha,\beta_i}(\mathcal{X}_{i},\mathcal{Y}_{i})$ is the correntropy for $\mathcal{X}_{i},\mathcal{Y}_{i}$ under $\alpha$ and $\beta_i$, and $x_i(k)$ and $y_i(k)$ is the $k$-th sample of random variables $\mathcal{X}_{i}$ and $\mathcal{Y}_{i}$, respectively. Correspondingly, the generalized loss (GL) can be defined as 
\begin{equation}
\begin{aligned}
J_{GL}(\mathcal{X},\mathcal{Y})&=\sum_{i=1}^{l}\beta_{i}^{\alpha}\left(1-{C_{\alpha,\beta_i}}(\mathcal{X}_{i},\mathcal{Y}_{i})\right).
\label{gcloss}
\end{aligned}
\end{equation}
\begin{remark}
	It is worth mentioning that the proposed {GMKC} is different from the concept in \cite{b40,b41}. The mechanism of our proposed method is to use different kernel bandwidths at different channels, while \cite{b40,b41} employ a combination of different kernels to generate a new kernel. The purpose of our method is to reject the heavy-tailed noises in the disturbing channel without sacrificing the performance of the other channels while the aim of \cite{b40,b41} is to accommodate more complex error distributions (i.e., skewed distributions, see Fig. 1 in \cite{b40}).
\end{remark}
\subsection{Properties of the Generalized Multi-kernel Correntropy}
In this section, we provide some properties of the GMKC and GL.
\begin{theorem}
	In the case of ~$0<\alpha \le 2$, the GMKC in \eqref{mkcorrentropy} can be regarded as a weighted summation of the second-order statistic in the mapped feature space.
	\label{Theorem1}
\end{theorem}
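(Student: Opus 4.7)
The plan is to reduce the claim to Mercer's theorem, which requires first establishing that the kernel $G_{\alpha,\beta}(x,y)=\exp(-|e/\beta|^{\alpha})$ is positive definite (a true Mercer kernel) precisely when $0<\alpha\le 2$. Once that is in place, each channel's correntropy can be rewritten as an inner product in a reproducing kernel Hilbert space, and \eqref{mkcorrentropy} becomes a $\beta_i^{\alpha}$-weighted sum of those inner products --- i.e., of second-order statistics in the mapped feature space.

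First, I would invoke Bochner's theorem: a continuous shift-invariant function $\psi(e)$ on $\mathbb{R}$ yields a positive-definite kernel $\kappa(x,y)=\psi(x-y)$ iff $\psi$ is the Fourier transform of a finite nonnegative Borel measure. For $\psi(e)=\exp(-|e/\beta|^{\alpha})$ this is, up to scaling, the characteristic function of a symmetric $\alpha$-stable distribution, which is a bona fide probability density exactly when $0<\alpha\le 2$. This classical fact (the boundary $\alpha=2$ recovers the ordinary Gaussian kernel, while $\alpha>2$ fails to be positive definite) is the technical heart of the statement; I would cite it and sketch the Fourier argument briefly rather than re-derive it.

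Next, by Mercer's theorem, for each channel $i$ there exists a Hilbert space $\mathcal{H}_i$ and a feature map $\phi_i:\mathbb{R}\to\mathcal{H}_i$ such that
\begin{equation}\nonumber
G_{\alpha,\beta_i}(x_i,y_i)=\langle \phi_i(x_i),\phi_i(y_i)\rangle_{\mathcal{H}_i}.
\end{equation}
Substituting this into \eqref{correntropypair} gives
\begin{equation}\nonumber
C_{\alpha,\beta_i}(\mathcal{X}_i,\mathcal{Y}_i)=\frac{1}{N}\sum_{k=1}^{N}\langle \phi_i(x_i(k)),\phi_i(y_i(k))\rangle_{\mathcal{H}_i},
\end{equation}
which is exactly an empirical second-order cross-moment of the mapped random variables $\phi_i(\mathcal{X}_i)$ and $\phi_i(\mathcal{Y}_i)$. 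Plugging this back into \eqref{mkcorrentropy} yields
\begin{equation}\nonumber
\bar{C}(\mathcal{X},\mathcal{Y})=\sum_{i=1}^{l}\beta_i^{\alpha}\,\frac{1}{N}\sum_{k=1}^{N}\langle \phi_i(x_i(k)),\phi_i(y_i(k))\rangle_{\mathcal{H}_i},
\end{equation}
which manifestly is a weighted summation, with nonnegative weights $\beta_i^{\alpha}$, of second-order statistics of the mapped features, proving the claim.

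The main obstacle is the positive-definiteness step for $0<\alpha\le 2$: unlike the Gaussian case ($\alpha=2$) where the mapping $\phi$ is standard, for general $\alpha$ one must rely on the Fourier/stable-law characterization, and the restriction $\alpha\le 2$ cannot be relaxed --- this is precisely why the theorem's hypothesis takes the form it does. Everything else (Mercer's decomposition and rewriting the sum) is routine once positive-definiteness is secured.
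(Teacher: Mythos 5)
Your proposal is correct and follows essentially the same route as the paper's proof: establish that $G_{\alpha,\beta_i}$ is a positive-definite (Mercer) kernel for $0<\alpha\le 2$, obtain the feature map $\Phi$ into an RKHS, rewrite each channel's correntropy as an expectation of inner products of mapped samples, and read off \eqref{mkcorrentropy} as a $\beta_i^{\alpha}$-weighted sum of these second-order statistics. The only difference is cosmetic --- you sketch the Bochner/$\alpha$-stable characteristic-function argument for positive definiteness, whereas the paper simply cites the classical reference for that fact.
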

The proof of this theorem is shown in Appendix \ref{proofTheorem1}.

\begin{theorem}
	When setting ${\beta_{i}^{\alpha}} \to \infty$,
	the GL in \eqref{gcloss} becomes the expectation of $\alpha$-order absolute moments with $\lim\limits_{\beta_{i}^{\alpha} \to \infty} J_{GL}(\mathcal{X},\mathcal{Y})={E}\|\mathcal{X}-\mathcal{Y}\|_{\alpha}^{\alpha}$.
	\label{Theorem2}
\end{theorem}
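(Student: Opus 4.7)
The plan is to work term-by-term in the sum defining $J_{GL}$ and push the $\beta_i^\alpha$ factor inside the expectation so that the limit reduces to a standard dominated-convergence argument.

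First I would rewrite the $i$-th summand using the sample-mean form of $C_{\alpha,\beta_i}$ (or equivalently its expectation form, since this is the large-sample limit): because $\int dF = 1$, we have
\begin{equation*}
\beta_i^\alpha\bigl(1-C_{\alpha,\beta_i}(\mathcal{X}_i,\mathcal{Y}_i)\bigr)
= E\!\left[\beta_i^\alpha\bigl(1-\exp(-|e_i/\beta_i|^\alpha)\bigr)\right],
\end{equation*}
where $e_i=\mathcal{X}_i-\mathcal{Y}_i$. Setting $u_i=|e_i/\beta_i|^\alpha$, the integrand becomes $\beta_i^\alpha(1-e^{-u_i})$; since $u_i\to 0$ pointwise as $\beta_i^\alpha\to\infty$ for every fixed realization of $e_i$, the elementary expansion $1-e^{-u}=u-u^2/2+O(u^3)$ gives the pointwise limit
\begin{equation*}
\beta_i^\alpha\bigl(1-e^{-|e_i/\beta_i|^\alpha}\bigr)\ \longrightarrow\ |e_i|^\alpha.
\end{equation*}

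Next I would justify the interchange of limit and expectation. The key inequality is the convex bound $1-e^{-u}\le u$ for $u\ge 0$, which yields
\begin{equation*}
0\ \le\ \beta_i^\alpha\bigl(1-e^{-|e_i/\beta_i|^\alpha}\bigr)\ \le\ |e_i|^\alpha
\end{equation*}
uniformly in $\beta_i$. Under the natural moment hypothesis $E|e_i|^\alpha<\infty$ (which is implicit for the stated limit to make sense), $|e_i|^\alpha$ is an integrable dominating function, so Lebesgue's dominated convergence theorem applies and delivers
\begin{equation*}
\lim_{\beta_i^\alpha\to\infty}\beta_i^\alpha\bigl(1-C_{\alpha,\beta_i}(\mathcal{X}_i,\mathcal{Y}_i)\bigr)=E|\mathcal{X}_i-\mathcal{Y}_i|^\alpha.
\end{equation*}
Summing over $i=1,\dots,l$ and using the definition $\|\mathcal{X}-\mathcal{Y}\|_\alpha^\alpha=\sum_{i=1}^{l}|\mathcal{X}_i-\mathcal{Y}_i|^\alpha$ finishes the proof.

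I do not expect any serious obstacle; the only delicate point is that the statement is really a statement about taking the bandwidth in each channel to infinity (possibly simultaneously or one at a time), and one must verify a uniform dominating function so the order of limits and expectations does not matter. The bound $1-e^{-u}\le u$ takes care of this cleanly, and the argument does not rely on any special structure of the joint distribution of $(\mathcal{X},\mathcal{Y})$ beyond finiteness of the marginal $\alpha$-th absolute moments of $\mathcal{X}_i-\mathcal{Y}_i$.
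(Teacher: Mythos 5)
Your proposal is correct and follows essentially the same route as the paper, which Taylor-expands $G_{\alpha,\beta_i}$ and lets $\beta_i^{\alpha}\to\infty$ termwise so that only the $|e_i|^{\alpha}$ term survives. The one genuine addition is your dominated-convergence step with the bound $0\le \beta_i^{\alpha}\bigl(1-e^{-|e_i/\beta_i|^{\alpha}}\bigr)\le |e_i|^{\alpha}$: the paper silently interchanges the limit with the expectation (and with the infinite series), and your argument supplies that justification under the natural hypothesis $E|e_i|^{\alpha}<\infty$.
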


The proof of this theorem is shown in \ref{proofTheorem2}. 
\begin{remark}
Theorem \ref{Theorem2} reveals that when setting all kernel parameters as $\beta_{i}^{\alpha} \to \infty$, the GL becomes the traditional least mean $p$-power (LMP) criterion with $\alpha=p$. One can refer to \cite{b30,b36} for more information about the LMP in the design of a filter.
\end{remark}
\begin{theorem}
	Denote the correntropy induced metric as $\mathrm{GCIM}(\mathcal{X},\mathcal{Y})=\left(J_{GL}(\mathcal{X},\mathcal{Y})\right)^{\frac{1}{2}}$. Then, it defines a metric in the $N$-dimensional sample vector space when $0<\alpha \le 2$.
	\label{Theorem3}
\end{theorem}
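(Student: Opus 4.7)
The plan is to verify the four metric axioms for
$$\mathrm{GCIM}(\mathcal{X},\mathcal{Y})=\sqrt{\tfrac{1}{N}\sum_{i=1}^{l}\sum_{k=1}^{N}\beta_i^{\alpha}\bigl(1-\exp(-|e_{i,k}/\beta_i|^{\alpha})\bigr)}$$
with $e_{i,k}=x_i(k)-y_i(k)$. Nonnegativity, the vanishing condition, and symmetry all fall out of the explicit formula: each summand lies in $[0,\beta_i^{\alpha}/N)$, vanishes only when the corresponding $e_{i,k}=0$, and depends on $|e_{i,k}|$. The substance of the proof lies entirely in the triangle inequality, which I would split into a scalar kernel step and an aggregation step.

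For the aggregation step, define the per-coordinate pseudo-distance
$$d_{i,k}(u,v):=\sqrt{\tfrac{\beta_i^{\alpha}}{N}\bigl(1-\exp(-|(u-v)/\beta_i|^{\alpha})\bigr)}.$$
If each $d_{i,k}$ satisfies the scalar triangle inequality, then
$$\mathrm{GCIM}(\mathcal{X},\mathcal{Z})=\sqrt{\sum_{i,k}d_{i,k}^{2}\bigl(x_i(k),z_i(k)\bigr)}\le\mathrm{GCIM}(\mathcal{X},\mathcal{Y})+\mathrm{GCIM}(\mathcal{Y},\mathcal{Z})$$
follows from Minkowski's inequality for the $\ell_2$ norm over the $lN$ coordinate indices, and combining with the nondegeneracy verified above promotes GCIM from a pseudo-metric to a metric on the sample vector space.

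For the scalar step, I would invoke Schoenberg's theorem: a symmetric kernel $\psi(u,v)\ge 0$ with $\psi(u,u)=0$ has $\sqrt{\psi}$ satisfying the triangle inequality if and only if $\psi$ is conditionally negative definite (CND), equivalently, $\exp(-t\psi)$ is positive definite (PD) for every $t>0$. Applied to $\psi_i(u,v):=\beta_i^{\alpha}\bigl(1-\exp(-|(u-v)/\beta_i|^{\alpha})\bigr)$, the key analytic input is the classical fact that $|u-v|^{\alpha}$ is CND on $\mathbb{R}$ for $0<\alpha\le 2$, which is equivalent to the positive-definiteness of the symmetric $\alpha$-stable characteristic function $\exp(-|t|^{\alpha})$ via Bochner. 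Hence $K(u,v):=\exp(-|u-v|^{\alpha}/\beta_i^{\alpha})$ is a PD kernel, and the identity
$$\exp\bigl(-t\psi_i(u,v)\bigr)=e^{-t\beta_i^{\alpha}}\exp\bigl(t\beta_i^{\alpha}K(u,v)\bigr)$$
presents $\exp(-t\psi_i)$ as a positive scalar times the pointwise exponential of a PD kernel, which is PD by Taylor expansion into nonnegative-coefficient Schur products. Schoenberg then returns the CND property of $\psi_i$ and hence the scalar triangle inequality for $\sqrt{\psi_i}=\sqrt{N}\,d_{i,k}$.

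The main obstacle is the CND property of $|u-v|^{\alpha}$ for $0<\alpha\le 2$; this is the sole nontrivial analytic input and it pins down the range stated in the theorem, since for $\alpha>2$ the function $\exp(-|t|^{\alpha})$ fails to be a characteristic function and the chain breaks at its first link. Everything else—Schoenberg's equivalences, closure of PD kernels under positive scaling, pointwise exponentiation, and Schur products, and Minkowski's inequality—is entirely standard.
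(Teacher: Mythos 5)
Your proposal is correct and follows the same overall architecture as the paper's proof: both rest on the positive definiteness of the generalized Gaussian kernel $\exp(-|e/\beta_i|^{\alpha})$ for $0<\alpha\le 2$ (the paper cites this as the Mercer property; you derive it from Bochner's theorem and the symmetric $\alpha$-stable characteristic function), both reduce the problem to a per-channel triangle inequality, and both aggregate across channels with Minkowski's inequality. The one genuine difference is how the per-channel step is closed. The paper uses the feature map $\Phi$ explicitly: since $\kappa_{\alpha,\beta_i}(u,u)=1$, one has $2N\bigl(1-C_{\alpha,\beta_i}(\mathcal{X}_i,\mathcal{Y}_i)\bigr)=D^{2}(\tilde{\mathcal{X}}_i,\tilde{\mathcal{Y}}_i)$ for lifted sample vectors in $\mathcal{F}^{N}$, so the square root is literally a Euclidean distance in Hilbert space and the triangle inequality is inherited for free. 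You instead route through Schoenberg's characterization, showing $\psi_i=\beta_i^{\alpha}(1-K)$ is conditionally negative definite because $\exp(-t\psi_i)$ is a positive multiple of the pointwise exponential of a PD kernel. The two arguments are equivalent in content (Schoenberg's theorem is precisely the statement that a CND kernel vanishing on the diagonal embeds isometrically, after taking square roots, into Hilbert space), but the paper's version is the more economical one: once you know $K$ is PD with unit diagonal, the identity $1-K=\tfrac12\|\Phi(u)-\Phi(v)\|^{2}$ gives the metric property in one line, without the detour through $\exp(-t\psi_i)$ and Schur products. Your observation that the chain breaks for $\alpha>2$ because $\exp(-|t|^{\alpha})$ is then not a characteristic function correctly identifies why the hypothesis $0<\alpha\le 2$ is needed, which the paper leaves implicit in its citation.
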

The proof is shown in Appendix ~\ref{proofTheorem3}. The contour plots of $J_{GL}(\mathcal{X},{0})^{\frac{1}{\alpha}}$ in 2D space with different shape parameters $\alpha$ and different bandwidths $\beta_i$ are shown in Fig. \ref{cim}. One can see that $J_{GL}(\mathcal{X},{0})^{\frac{1}{\alpha}}$ behaves like an $\ell_{\alpha}$ norm in the vertical direction when setting $\beta_2$ to be a big value (i.e., 100). Moreover, it changes from an $\ell_{\alpha}$ to $\ell_0$ in the horizontal direction when setting $\beta_1$ to be a relatively small value (i.e., 1). For the traditional correntropy, the contour plot is isotropic since it shares a unified bandwidth~\cite{b19}, which restrains its capability on the system that only some channels contain heavy-tailed noises. On the contrary, the contour plot of the proposed method can be anisotropic by using different bandwidths at different channels, which is very efficient when different channels contain different types of noise distributions. Another advantage of the proposed method is that the GGD is more powerful than the Gaussian density function (since it has an additional shape parameter $\alpha$) and hence can accommodate more types of noise distributions.
\begin{figure*}[htbp]
	\centering
	\subfigure[$\alpha=1$, $\beta_1=1$, $\beta_2=100$]{
		\begin{minipage}[t]{0.264\linewidth}
			\centering
			\includegraphics[width=1.0\columnwidth]{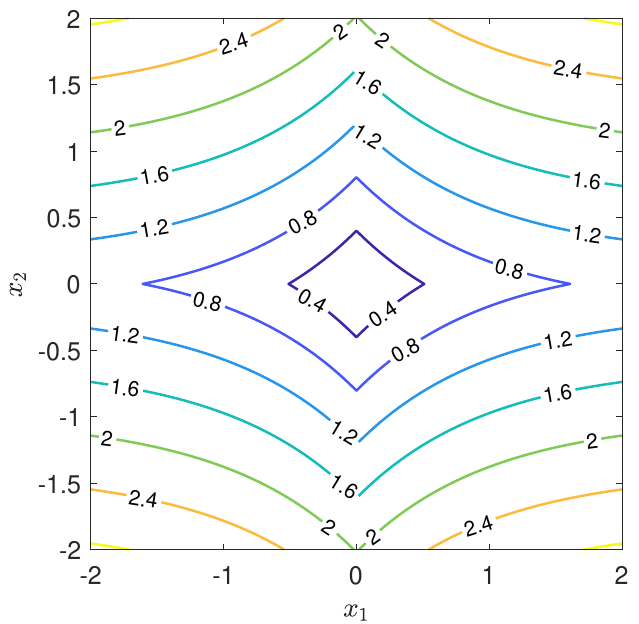}
			\label{kernels_c}
		\end{minipage}%
	}%
	\subfigure[$\alpha=2$, $\beta_1=1$, $\beta_2=100$]{
		\begin{minipage}[t]{0.260\linewidth}
			\centering
			\includegraphics[width=1.0\columnwidth]{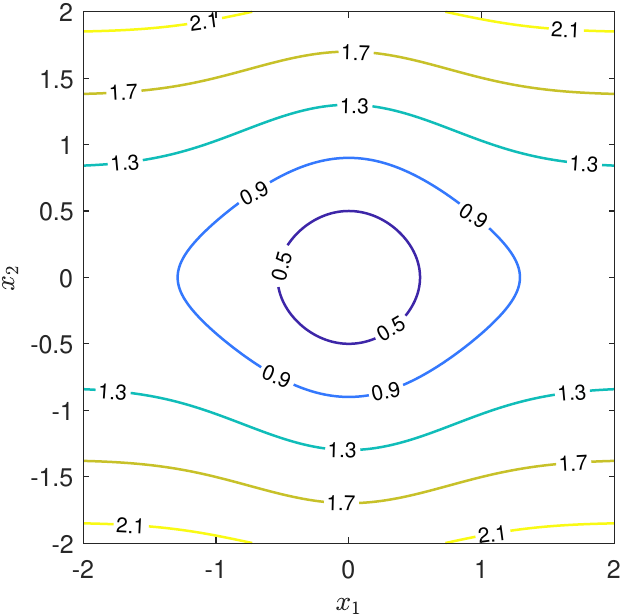}
			\label{kernels_f}
		\end{minipage}%
	}%
	\subfigure[$\alpha=4$, $\beta_1=1$, $\beta_2=100$]{
		\begin{minipage}[t]{0.262\linewidth}
			\centering
			\includegraphics[width=1.0\columnwidth]{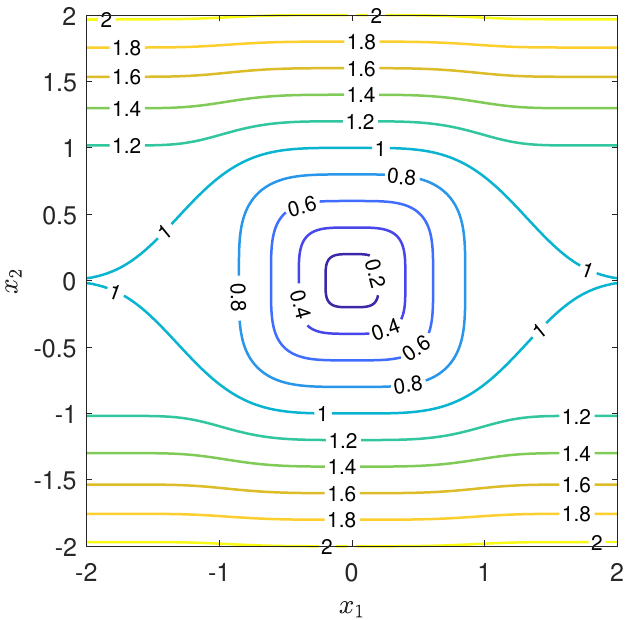}
			\label{kernels_i}
		\end{minipage}%
	}%
	\caption{Contours of $J_{GL}(\mathcal{X},{0})^{\frac{1}{\alpha}}$ in 2D space with different shape parameters and different bandwidths.}	
	\label{cim}
\end{figure*}

\subsection{Influence Function of the LMP and GL}
\label{infpec}
In many applications, we have only one measurement at each time instance. In this section, we discuss the property of the LMP and GL in this scenario, i.e., $N=1$.

The influence function measures the derivative of the loss function with respect to the error~\cite{b28,b29}, and gives a straightforward view of how errors influence the objective function. Therefore, it provides guidance for the objective function design. For the LMP criterion~\cite{b30,b36}, we have 
\begin{equation}
\begin{aligned}
J_{LMP}(e)&=\|e\|_{p}^{p}=\sum_{i=1}^{l}|e_{i}|^{p}
\label{LMP1}
\end{aligned}
\end{equation}
where $e \in \mathbb{R}^{l}$ and $e_{i}$ is the $i$-th element of $e$. Substituting \eqref{correntropypair} into \eqref{gcloss} with $N=1$, we have
\begin{equation}
\begin{aligned}
J_{GL}(e)=\sum_{i=1}^{l}\beta_{i}^{\alpha}\left(1-G_{\alpha,\beta_i}\left(e_{i}\right)\right).
\label{gcloss1}
\end{aligned}
\end{equation}
The influence functions can be obtained by calculating the gradients 
\begin{equation}
\begin{aligned}
\nabla  J_{LMP}(e)&=\frac{\partial J_{LMP}}{\partial e}=\left[\rho_{1},\rho_{2},\cdots,\rho_{l}\right]^{T}\\
\nabla  J_{GL}(e)&=\frac{\partial J_{GL}}{\partial e}=\left[\gamma_{1},\gamma_{2},\ldots,\gamma_{l}\right]^{T}
\label{infl}
\end{aligned}
\end{equation} 
with 
\begin{equation}\nonumber
\begin{aligned}
\rho_{i}&=p\frac{|e_{i}|^p}{e_{i}}, i=1,2,\ldots,l\\
\gamma_{i}&=\frac{\alpha\exp^{-\frac{|e_i|^{\alpha}}{\beta_{i}^{\alpha}}}|e_i|^{\alpha}}{e_i},i=1,2,\ldots,l.
\label{inf2}
\end{aligned}
\end{equation}
Then, we have the following two theorems.
\begin{theorem}
	The $J_{GL}(e)$ in \eqref{gcloss1} is identical to the $J_{LMP}(e)$ in \eqref{LMP1} when $\alpha=p$ and $\beta_{i}^{\alpha} \to \infty$. Moreover, in the case of ~$0<\alpha\le 1$, $J_{GL}(e)$ is concave with $e \neq 0$; in the case of~ $\alpha>1$, $J_{GL}(e)$ is convex within the region $|e_{i}|\le (\frac{\alpha-1}{\alpha})^{\frac{1}{\alpha}}\beta_{i}$.
	\label{Theorem4}
\end{theorem}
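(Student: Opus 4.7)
The strategy is to exploit the separable structure of $J_{GL}$ and reduce the three claims to a single-variable analysis in each coordinate. Writing $f_i(e_i) := \beta_i^{\alpha}\bigl(1 - \exp(-|e_i|^{\alpha}/\beta_i^{\alpha})\bigr)$ so that $J_{GL}(e) = \sum_{i=1}^{l} f_i(e_i)$, the Hessian $\nabla^{2} J_{GL}(e)$ is diagonal with entries $f_i''(e_i)$. Convexity (resp.\ concavity) of $J_{GL}$ therefore reduces to the same property holding for every $f_i$, which in turn is controlled by the sign of $f_i''$.

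For the first claim I would invoke the elementary limit $\lim_{u\to 0^{+}}(1-e^{-u})/u = 1$ with $u = |e_i|^{\alpha}/\beta_i^{\alpha}$. For fixed $e_i$ this yields $\lim_{\beta_i^{\alpha}\to\infty} f_i(e_i) = |e_i|^{\alpha}$ term by term, and summing recovers $J_{LMP}(e)=\sum_i|e_i|^{\alpha}$ exactly when $\alpha=p$. This is the same mechanism as in Theorem~\ref{Theorem2}, so little new work is required.

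For the concavity/convexity claims I would differentiate twice away from the origin. A direct computation (using that $f_i$ is even in $e_i$) gives, for $e_i\neq 0$,
\begin{equation*}
f_i''(e_i) = \alpha|e_i|^{\alpha-2}\exp\!\bigl(-|e_i|^{\alpha}/\beta_i^{\alpha}\bigr)\Bigl[(\alpha-1) - \alpha|e_i|^{\alpha}/\beta_i^{\alpha}\Bigr].
\end{equation*}
The prefactor is strictly positive, so $\operatorname{sign} f_i''$ equals the sign of the bracket. When $0<\alpha\le 1$, both $\alpha-1\le 0$ and $-\alpha|e_i|^{\alpha}/\beta_i^{\alpha}\le 0$, hence $f_i''\le 0$ on $\{e_i\neq 0\}$; therefore $\nabla^{2}J_{GL}$ is negative semi-definite and $J_{GL}$ is (locally) concave away from the origin. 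When $\alpha>1$, the bracket is non-negative iff $|e_i|^{\alpha}\le \beta_i^{\alpha}(\alpha-1)/\alpha$, which rearranges to the stated bound $|e_i|\le \bigl(\tfrac{\alpha-1}{\alpha}\bigr)^{1/\alpha}\beta_i$; intersecting over all $i$ gives a convex product region on which $\nabla^{2}J_{GL}$ is positive semi-definite, establishing convexity there.

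The only real subtlety is the behaviour at $e_i=0$. For $\alpha<1$ the first derivative $f_i'$ diverges at the origin, which is precisely why the concavity claim carries the qualifier $e\neq 0$ and why $J_{GL}$ cannot be globally concave. For $\alpha>1$ the derivative $f_i'$ is continuous at $0$ but $f_i''$ either diverges (when $1<\alpha<2$) or vanishes (when $\alpha\ge 2$); in either situation convexity on the open product region extends to its closure by continuity of $f_i$, which is all the statement requires. Beyond this bookkeeping at the origin, no deeper obstacle is anticipated.
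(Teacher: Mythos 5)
Your proposal is correct and follows essentially the same route as the paper: the limit $\beta_i^{\alpha}\to\infty$ is handled term by term (the paper uses the Taylor expansion of $G_{\alpha,\beta_i}$, you use the equivalent limit $(1-e^{-u})/u\to 1$), and the concavity/convexity claims are obtained from the diagonal Hessian, whose $i$-th entry matches the paper's $\zeta_i$ exactly after factoring out the positive prefactor. Your extra bookkeeping about the behaviour at $e_i=0$ is a welcome clarification but does not change the argument.
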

The proof of this theorem can be found in Appendix \ref{proofTheorem4}.
\begin{remark}
	In many situations, a non-convex loss function is beneficial to strengthen some particular features. For example, the conventional MSE loss gives a linear influence function (i.e., $p=2$ in \eqref{infl}), which provides each residual constant influence and hence cannot eliminate the effect of outliers (if exists). On the contrary, a redescending influence function that is induced by a non-convex loss (e.g., the GL in \eqref{gcloss1}) is preferable~\cite{b28}. Existing solutions for non-convex optimization include the fixed-point iteration~\cite{b18}, the gradient descent~\cite{b40}, and the evolutionary algorithms~\cite{c1}. 
\end{remark}
\begin{theorem}
	The GL in \eqref{gcloss1} is a differential invex function of $e$ with $\alpha>1$ and $e_{i}\le \varphi $ ($i=1,2,\cdots,l$) where $\varphi \in \mathbb{R}^{+}$ is an arbitrary positive number.
	\label{Theorem5}
\end{theorem}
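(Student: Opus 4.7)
The plan is to invoke the classical Ben-Israel--Mond characterization of invexity, which states that for a differentiable function on an open domain, invexity is equivalent to every stationary point being a global minimizer. Given the explicit gradient formula already computed in \eqref{infl}, this reduces the proof to two tasks: (i) determining the set of stationary points of $J_{GL}$, and (ii) checking that any stationary point is actually a global minimum on the stated region $\{e:e_i\le\varphi\}$.

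First I would establish that $J_{GL}$ is continuously differentiable on the relevant region. From \eqref{infl}, the $i$-th component of the gradient can be rewritten as $\gamma_i=\alpha\,\mathrm{sign}(e_i)\exp(-|e_i|^{\alpha}/\beta_i^{\alpha})|e_i|^{\alpha-1}$. The assumption $\alpha>1$ is used precisely here: it makes $|e_i|^{\alpha-1}$ vanish continuously as $e_i\to 0$, so $\gamma_i$ extends to a continuous function with $\gamma_i(0)=0$. Next I would solve $\nabla J_{GL}(e)=0$. Because the exponential factor and $\alpha$ are strictly positive, $\gamma_i=0$ is equivalent to $|e_i|^{\alpha-1}=0$, and with $\alpha>1$ this forces $e_i=0$. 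Hence $e=0$ is the unique stationary point on the domain.

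Then I would verify that $e=0$ is the global minimum. At the origin, $J_{GL}(0)=\sum_{i=1}^{l}\beta_i^{\alpha}(1-1)=0$. For any $e\neq 0$, at least one coordinate satisfies $e_j\neq 0$, which gives $G_{\alpha,\beta_j}(e_j)=\exp(-|e_j|^{\alpha}/\beta_j^{\alpha})<1$, and every other summand is non-negative, so $J_{GL}(e)>0$. Thus $e=0$ is the unique global minimum on $\{e:e_i\le\varphi\}$, and every critical point (there is only one) is a global minimum. Ben-Israel--Mond then yields invexity; if an explicit kernel is desired I would exhibit $\eta(x,y)=\frac{J_{GL}(x)-J_{GL}(y)}{\|\nabla J_{GL}(y)\|^{2}}\nabla J_{GL}(y)$ when $\nabla J_{GL}(y)\neq 0$ and $\eta(x,y)=0$ otherwise, which makes the invexity inequality hold with equality in the non-critical case and trivially in the critical case.

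The main obstacle, and the only delicate point, is handling the behavior of $\nabla J_{GL}$ at the origin. The condition $\alpha>1$ is exactly what prevents the cusp that would appear for $0<\alpha\le 1$ (where $|e_i|^{\alpha-1}$ blows up) and what lets us conclude that $e=0$ is a \emph{bona fide} interior stationary point rather than a non-differentiability. The bound $e_i\le\varphi$ plays no essential role in the argument beyond specifying the open domain on which invexity is claimed; since the unique minimizer $e=0$ lies inside any such region, the conclusion carries over without further work.
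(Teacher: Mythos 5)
Your proposal is correct and follows essentially the same route as the paper: both rest on the fact that $e=0$ is the unique stationary point and the global minimizer of $J_{GL}$, and both exhibit the same explicit kernel $q(e_1,e_2)=\frac{J_{GL}(e_2)-J_{GL}(e_1)}{\nabla J_{GL}(e_1)^{T}\nabla J_{GL}(e_1)}\nabla J_{GL}(e_1)$ (with $q=0$ at the critical point). Your version is in fact slightly more careful than the paper's, since you explicitly verify differentiability at the origin and the uniqueness of the stationary point, which is where the hypothesis $\alpha>1$ is actually used.
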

The proof of this theorem is shown in Appendix \ref{proofTheorem5}. We consider the loss function of \eqref{gcloss1} in one-dimensional case for simplicity. In this case, $J_{GL}(e)=\beta^{\alpha}(1-G_{\alpha,\beta}(e))$
and $J_{LMP}(e)=|e|^{p}$. The graphs of $J_{GL}(e)$, $\nabla J_{GL}(e)$, $J_{LMP}(e)$, and $\nabla J_{LMP}(e)$ are shown in Figs. \ref{obj_b}, \ref{inf_b}, \ref{obj_c}, and \ref{inf_c}. One can see that $J_{GL}$ approaches $J_{LMP}$ when setting $\beta=100$ (see Theorem \ref{Theorem2}), and it changes from $\|e\|_{\alpha}^{\alpha}$ to $\beta^{\alpha}$ with the growth of the error when setting $\beta=1$. The influence function $\nabla J_{GL}$ goes towards zero when the error is bigger than $\frac{\alpha-1}{\alpha}\beta$ and $\alpha>1$ (see Theorem \ref{Theorem4}), and is close to $ \nabla J_{LMP}(e)$ when the error is very small, which makes the performance of GL is similar to LMP when the error is small, but is highly resistant to outliers when the error is large.
\begin{remark}
	The well-known MSE and LMP actually is a subset of the GL. The MSE-based algorithm is sensitive to outliers since its influence function grows linearly with respect to $e$ (note that $\frac{\partial |e|^2}{\partial e}=2e$). On the contrary, this effect can be mitigated by the GL by using a relatively small kernel bandwidth with $\alpha>1$ since its corresponding influence function goes towards zero with the increment of the error. Due to this property, the GL is a more attractive loss function compared with the LMP and MSE criterion.
\end{remark}
\begin{figure}[htbp]
	\centering
	\subfigure[]{
		\begin{minipage}[t]{0.48\linewidth}
			\centering
			\includegraphics[width=1.0\columnwidth]{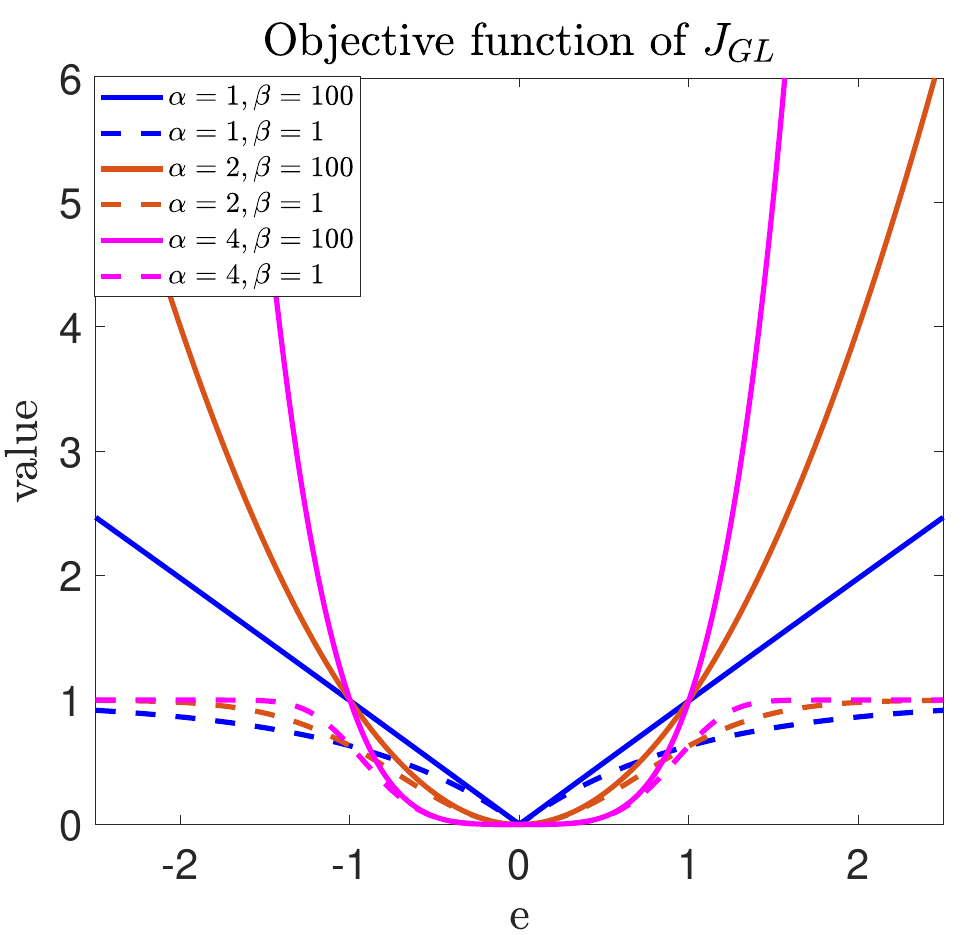}
			\label{obj_b}
		\end{minipage}%
	}%
	\subfigure[]{
		\begin{minipage}[t]{0.49\linewidth}
			\centering
			\includegraphics[width=1.0\columnwidth]{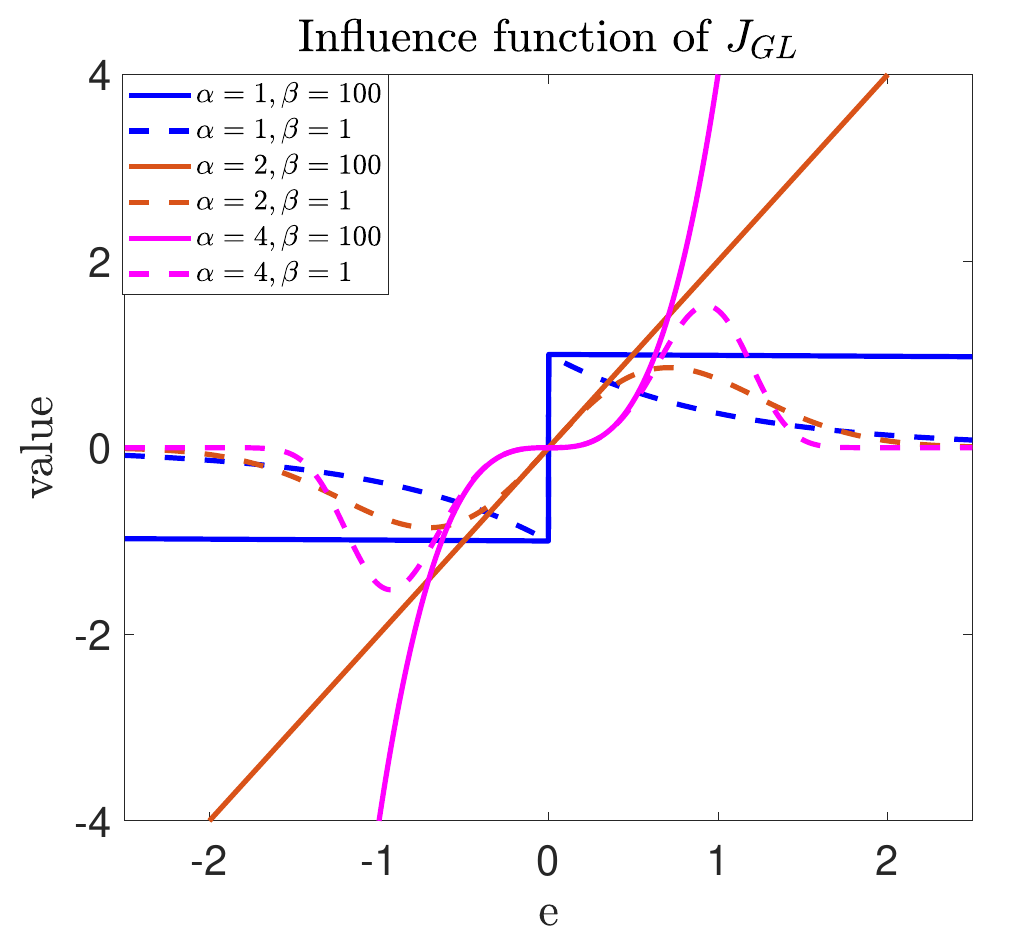}
			\label{inf_b}
		\end{minipage}%
	}%
	
	\subfigure[]{
		\begin{minipage}[t]{0.48\linewidth}
			\centering
			\includegraphics[width=1.0\columnwidth]{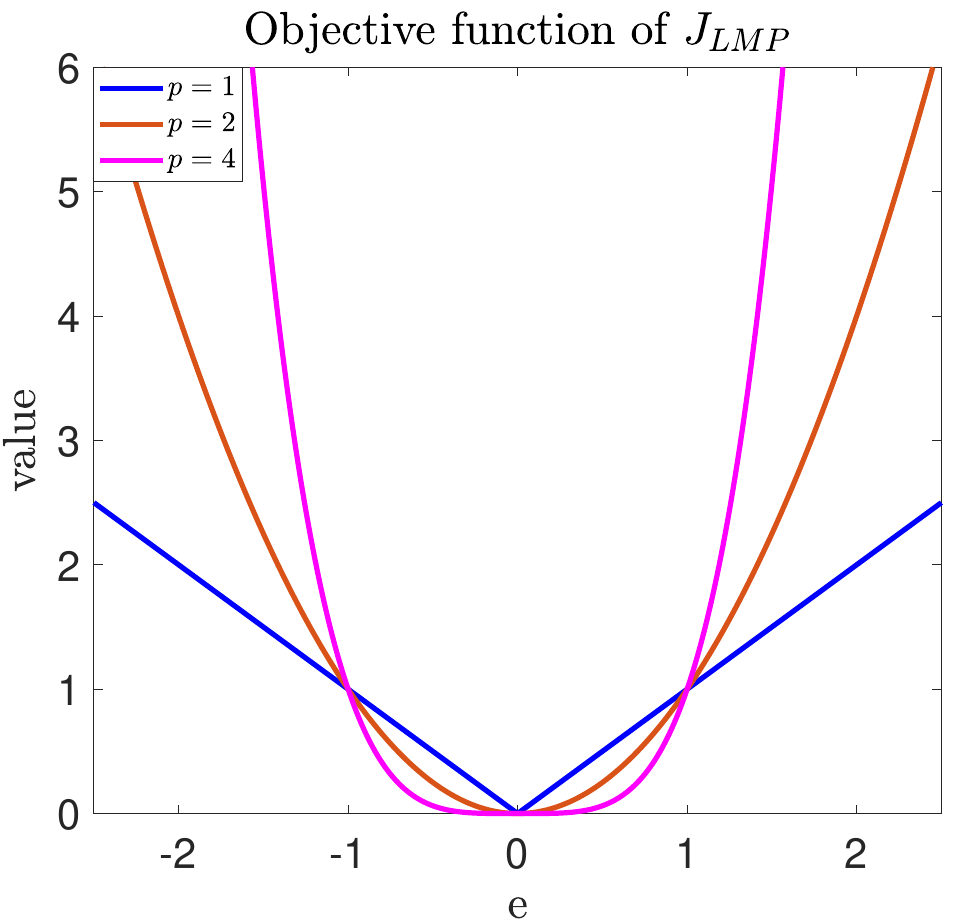}
			\label{obj_c}
		\end{minipage}%
	}%
	\subfigure[]{
		\begin{minipage}[t]{0.49\linewidth}
			\centering
			\includegraphics[width=1.0\columnwidth]{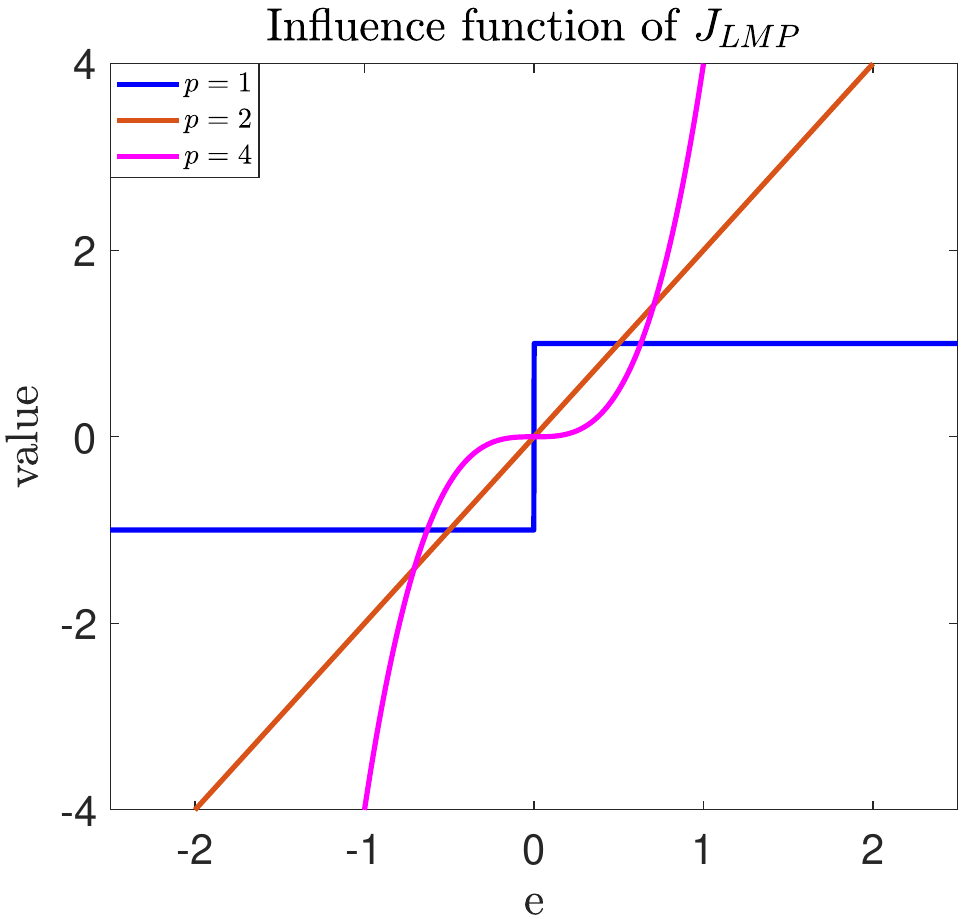}
			\label{inf_c}
		\end{minipage}%
	}%
	\caption{Objective functions and influence functions of $J_{GL}$ and $J_{LMP}$ with different $\alpha$, $\beta$ and $p$.}	
	\label{obj}
\end{figure}

\subsection{Relationship with the Kalman Filter}
\label{pdfpec}
For the linear system with Gaussian assumption described in \eqref{linear}, based on the Bayes' theorem, the \emph{a posteriori} probability of $x_k$ with measurement set $\{y_k\}$ has  
\begin{equation}
\begin{aligned}
&p(x_{k}|\{y_{k}\})=p(x_{k}|\{y_{k-1}\},y_k)=\frac{p(x_k,\{y_{k-1}\},y_k)}{p(\{y_{k-1}\},y_k)}\\
&=\frac{p(y_k|x_k,\{y_{k-1}\})p(x_k|\{y_{k-1}\})p(\{y_{k-1}\})}{p(y_k|\{y_{k-1}\})p(\{y_{k-1}\})}\\
&
\propto  p(y_k|x_k)p(x_k|\{y_{k-1}\})
\label{iterBaye}
\end{aligned}
\end{equation}
where $p(y_k|x_k)$ is the probability of $y_k$ conditional on the \emph{a priori} estimate of $x_k$, and $p(x_k|\{y_{k-1}\}$ is the \emph{a priori} estimate of $x_k$ with measurement set $\{y_{k-1}\}$. From the perspective of MAP, we have
\begin{equation}\nonumber
\arg \max_{x_k} p(x_{k}|\{y_{k}\}) = \arg \max_{x_k}  p(y_k|x_k)p(x_k|\{y_{k-1}\}).
\end{equation}
Since $w_k$ and $v_k$ are Gaussian (see the assumptions in \eqref{linear}), $p(y_k|x_k)$ and $p(x_k|\{y_{k-1}\}$ should also follow the Gaussian distribution after a linear transformation (see~\cite{b33}) with
\begin{equation}
\begin{small}
\begin{aligned}
p(y_k|x_k)&=\frac{\exp\big(-(y_k-C {x}_{k})^{\prime}R_{k}^{-1}(y_k-C {x}_{k})\big)}{\sqrt{(2\pi)^{m}|R_k|}}\\
p(x_k|y_{k-1})&=\frac{\exp\big(-(x_{k}-A\hat{x}_{k-1})^{\prime}(P_{k}^{-})^{-1}(x_{k}-A\hat{x}_{k-1})\big)}{\sqrt{(2\pi)^{n}|P_{k}^{-}|}}
\label{pdf}
\end{aligned}
\end{small}
\end{equation}
where $\hat{x}_{k-1}$ is the \emph{a posteriori} estimate of the state at time step $k-1$, $|R_k|$ is the determinant  of $R_k$, $P_{k}^{-}$ is the \emph{a priori} estimate of error covariance, and $|P_{k}^{-}|$ is the determinant  of $P_{k}^{-}$. Due to the fact that the normalization constants in the denominator of \eqref{pdf} are independent with the argument ${x}_{k}$, they can be ignored which follows that
\begin{equation}
\begin{small}
\begin{aligned}
&\arg \max_{{x}_{k}} p(x_{k}|\{y_{k}\}) =\arg \max_{{x}_{k}} \exp\big(-(y_k-C {x}_{k})^{\prime}R_{k}^{-1}\\
&\times(y_k-C {x}_{k})\big) \exp\big(-({x}_k-A\hat{x}_{k-1})^{\prime}(P_{k}^{-})^{-1}({x}_k-A\hat{x}_{k-1})\big).
\end{aligned}
\end{small}
\end{equation}
It is equivalent to minimizing the negative log:
\begin{equation}
\begin{aligned}
\arg \min_{{x}_{k}} J_{KF} =&\|R_{k}^{-1/2}(y_k-C {x}_{k})\|_{2}^{2}+\\
&\|(P_{k}^{-})^{-1/2}({x}_{k}-A\hat{x}_{k-1})\|_{2}^{2}.
\label{2norm}
\end{aligned}
\end{equation} 
By defining the measurement error $e_{r,k}$ and process error $e_{p,k}$ as
\begin{equation}
\begin{aligned}
e_{r,k} &\triangleq R_{k}^{-1/2}(y_k-C {x}_{k})\\
e_{p,k} &\triangleq (P_{k}^{-})^{-1/2}({x}_{k}-A_k\hat{x}_{k-1}),
\end{aligned}
\end{equation}
we obtain
\begin{equation}
\begin{aligned}
\arg \min_{{x}_{k}} J_{KF}=&\|e_{r,k}\|_{2}^{2}+\|e_{p,k}\|_{2}^{2}.
\label{2norm1}
\end{aligned}
\end{equation}
\begin{theorem}
	The KF in equations \eqref{kf1}-\eqref{p1} can be derived by the MSE criterion using \eqref{2norm1}.
	\label{Theorem6}
\end{theorem}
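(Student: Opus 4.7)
The plan is to interpret \eqref{2norm1} as an unconstrained quadratic program in $x_k$ and to extract both the Kalman gain update and the posterior covariance from its normal equations, while noting that the prediction equations \eqref{kf1}--(the definition of $P_k^-$) are already baked into the prior $p(x_k\mid\{y_{k-1}\})$ used to build $e_{p,k}$. Since the objective $J_{KF}$ is a sum of two quadratic forms in $x_k$ with positive-definite weight matrices $R_k^{-1}$ and $(P_k^-)^{-1}$, it is strictly convex and has a unique minimizer, so first-order optimality is both necessary and sufficient.

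First I would compute the gradient $\nabla_{x_k} J_{KF}$ and set it to zero, which yields the information-form normal equation
\begin{equation*}
\bigl[C^{\prime} R_k^{-1} C + (P_k^-)^{-1}\bigr]\,x_k
  \;=\; C^{\prime} R_k^{-1}\,y_k + (P_k^-)^{-1}\hat{x}_k^-,
\end{equation*}
where I substitute $\hat{x}_k^- = A\hat{x}_{k-1}^+$ from \eqref{kf1}. I then define the posterior covariance $P_k^+ \triangleq [C^{\prime}R_k^{-1}C+(P_k^-)^{-1}]^{-1}$, which is exactly the inverse Hessian of $J_{KF}$, and invoke the Woodbury matrix identity to rewrite it as $P_k^+ = P_k^- - P_k^- C^{\prime}(CP_k^- C^{\prime}+R_k)^{-1}CP_k^-$. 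Setting $K_k \triangleq P_k^- C^{\prime}(CP_k^-C^{\prime}+R_k)^{-1}$ recovers \eqref{gain1} and collapses the identity to $P_k^+ = (I-K_kC)P_k^-$, which is exactly \eqref{p1}.

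For the mean update I would multiply the normal equation on the left by $P_k^+$ to obtain $\hat{x}_k^+ = P_k^+ C^{\prime} R_k^{-1} y_k + P_k^+(P_k^-)^{-1}\hat{x}_k^-$. The two remaining algebraic facts I need are $P_k^+(P_k^-)^{-1} = I-K_kC$ (immediate from the Woodbury form above) and $P_k^+ C^{\prime} R_k^{-1} = K_k$ (verified by writing $(I-K_kC)P_k^-C^{\prime}R_k^{-1}$, substituting $P_k^-C^{\prime} = K_k(CP_k^-C^{\prime}+R_k)$, and cancelling the common term). Combining the two gives $\hat{x}_k^+ = \hat{x}_k^- + K_k(y_k - C\hat{x}_k^-)$, which is the remaining KF equation.

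The main obstacle I anticipate is purely algebraic: correctly applying the matrix inversion lemma to pass from the information form $[C^{\prime}R_k^{-1}C+(P_k^-)^{-1}]^{-1}$ to the gain form, and then showing $P_k^+C^{\prime}R_k^{-1} = K_k$. Everything else---convexity of $J_{KF}$, uniqueness of the minimizer, and the fact that the prediction equations already encode $\hat{x}_k^-$ and $P_k^-$ through the construction of $e_{p,k}$---is immediate from the setup preceding \eqref{2norm1}.
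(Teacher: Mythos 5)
Your derivation is correct and complete: the paper itself does not spell out a proof of Theorem~\ref{Theorem6} but simply defers to the cited references, and the argument you give (strict convexity of the quadratic objective, the information-form normal equation, the Woodbury identity to pass to the gain form, and the identities $P_k^{+}(P_k^{-})^{-1}=I-K_kC$ and $P_k^{+}C^{\prime}R_k^{-1}=K_k$) is precisely the standard least-squares/MAP derivation those references contain. Your remark that \eqref{2norm1} only determines the update step while the prediction equations are encoded in the construction of the prior is also the right way to read the theorem's claim.
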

The proof of this theorem can be found in some existing works~\cite{b33,b45,d21}. Equations \eqref{iterBaye}-\eqref{2norm1} reveal that KF is optimal for  a linear system with Gaussian noises from the perspective of MAP. However, when the noises $w_k$, $v_k$ are \textbf{non-Gaussian}, the probability density function (PDF) in \eqref{pdf} does not hold. In this case, the $\ell_2$ norm-based loss function is not the best. By analogy the negative logarithm relationship between the noise distribution and the loss function in \eqref{pdf} and \eqref{2norm}, we find that the GL induces the following distribution:
\begin{equation}
\begin{aligned}
p(y_k|x_k)&=\left\{\begin{array}{l}
c_r \exp(-J_{GL,\alpha,\beta_r}(e_{r,k})), e_{r,k} \in \mathscr{Y}\\
0, otherwise
\end{array}\right.\\
p(x_k|y_{k-1})&=\left\{\begin{array}{l}
c_p \exp(-J_{GL,\alpha,\beta_p}(e_{p,k})),  e_{p,k} \in \mathscr{X}\\
0, otherwise
\end{array}\right.
\label{pdf1}
\end{aligned}
\end{equation}
where $J_{GL,\alpha,\beta_r}(\cdot)$ and $J_{GL,\alpha,\beta_p}(\cdot)$ are the generalized loss functions with the $N=1$ (details shown in \eqref{gcloss1}), $\alpha \in \mathbb{R}$, $\beta_r =[\beta_{n+1},\beta_{n+2},\cdots,\beta_{n+m}]^{\prime} \in \mathbb{R}^{m}$,  and $\beta_p =[\beta_1,\beta_2,\cdots,\beta_n]^{\prime} \in \mathbb{R}^{n}$. The symbol $\mathscr{Y}$ is the domain of $e_{r,k}$, $\mathscr{X}$ is the domain of $e_{p,k}$, and $c_r$ and $c_p$ are two constants so that $p(y_k|x_k)$ and $p(x_k|y_{k-1})$ are two proper distributions. The error in \eqref{pdf1} is assumed to be bounded and this assumption is reasonable in practical applications. Compared with the Gaussian distribution in \eqref{pdf}, equation \eqref{pdf1} can represent a wide range of noise distributions. By this assumption with MAP, we have
\begin{equation}\nonumber
\arg \max_{{x}_{k}} p(x_{k}|\{y_{k}\}) = \arg \min_{{x}_{k}} J_{GL,KF}
\end{equation}
with 
\begin{equation}
J_{GL,KF}=J_{GL,\alpha,\beta_r}(e_{r,k})+ J_{GL,\alpha,\beta_p}(e_{p,k}).
\label{GLKF}
\end{equation}One can see that the $\ell_2$-norm based objective function in \eqref{2norm} is replaced by the GL function \eqref{GLKF}. To simplify the visualization of the noise distributions in \eqref{pdf1}, in one dimensional case, we have
\begin{equation}
\begin{aligned}
p(e)&=\left\{\begin{array}{l}
c \exp(-J_{GL,\alpha,\beta}(e)), e \in \mathscr{E}\\
0, otherwise
\end{array}\right..
\label{distribution}
\end{aligned}
\end{equation}
A comparison of the $p(e)$, the Laplace distribution,  the Gaussian distribution, and the $\varepsilon$-contaminated mixture model is 
shown in Fig. \ref{pdfcom}. One can see that $p(e)$ approaches $\mathcal{L}(0,1)$ with $\alpha=1$ and $\beta=100$ in Fig. \ref{lap}, and is close to  $\mathcal{N}(0,0.5)$ with  $\alpha=2$ and $\beta=100$ in Fig. \ref{gau}. Moreover, when selecting a proper bandwidth, it can approach a $\varepsilon$-contaminated mixture model effectively (see the magenta and the dot blue lines in Fig. \ref{lap} and Fig. \ref{gau}). Actually, $p(e)$ approaches a $\alpha$-order exponential distribution when setting $\beta \to \infty$ since $\lim\limits_{\beta \to \infty}c\exp(-J_{GL,\alpha,\beta}(e))=c\exp({-e^{\alpha}})$. When setting a relative small bandwidth, it represents a heavy-tailed distribution with order $\alpha$. This implies that the shape of $p(e)$ can be controlled by the bandwidth flexibly and it is a suitable representation for a large number of distributions.  
\begin{figure}[htbp]
	\centering
	\subfigure[]{
		\begin{minipage}[t]{0.49\linewidth}
			\centering
			\includegraphics[width=1.0\columnwidth]{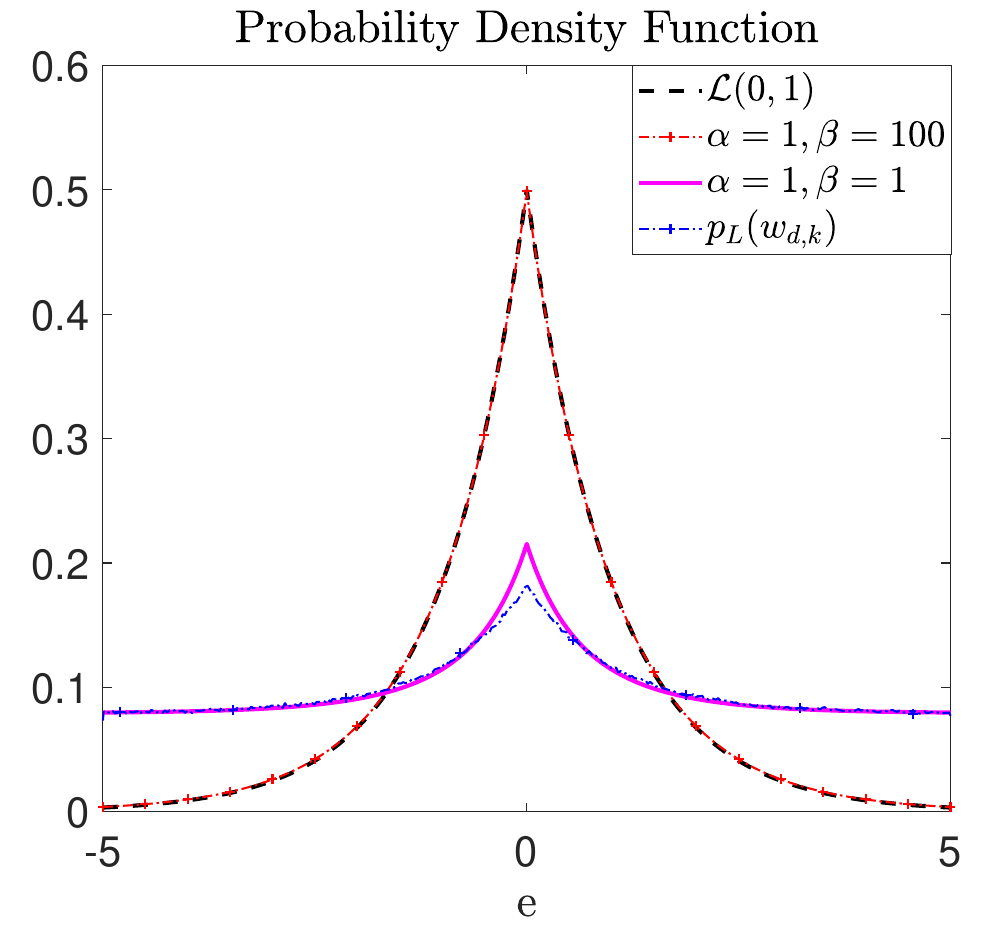}
			\label{lap}
		\end{minipage}%
	}%
	\subfigure[]{
		\begin{minipage}[t]{0.505\linewidth}
			\centering
			\includegraphics[width=1.0\columnwidth]{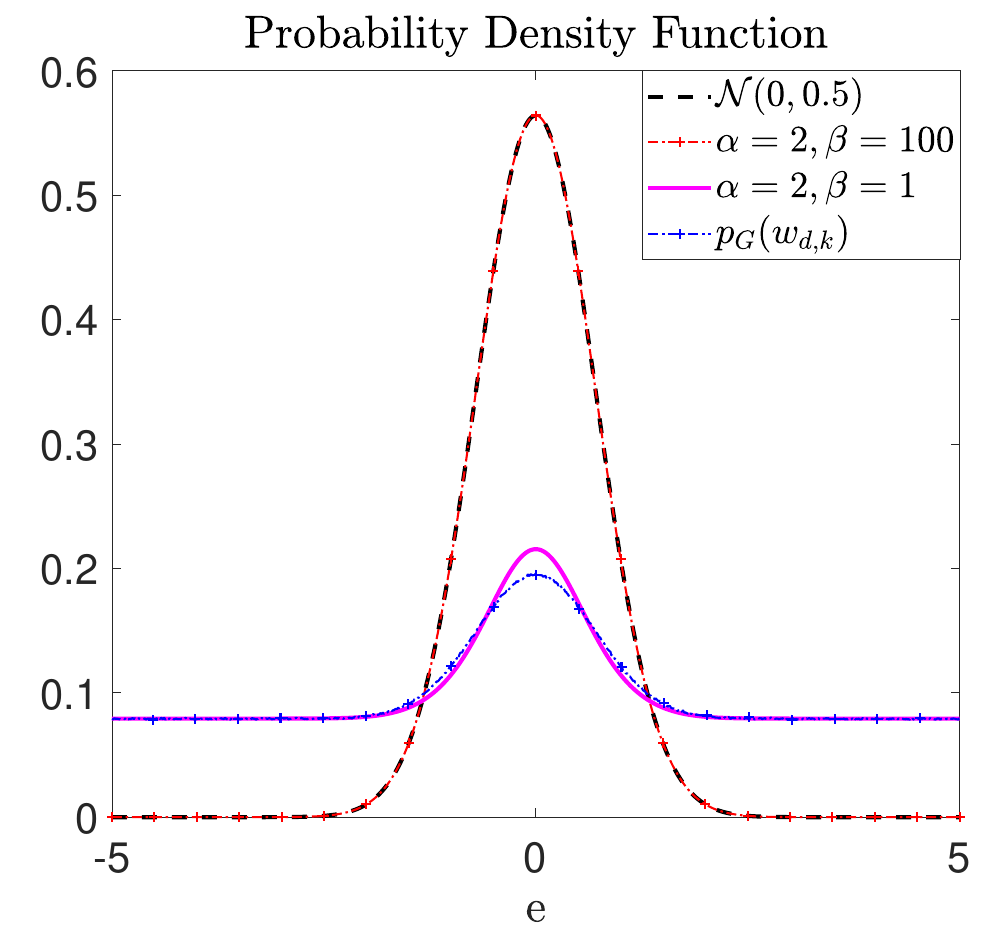}
			\label{gau}
		\end{minipage}%
	}%
	\caption{Laplace distribution, Gaussian distribution, $p(e)$ in \eqref{distribution} with different $\alpha$ and $\beta$, and $p(w_{d,k})$ using a $\varepsilon$-contaminated mixture model. The error domain is set to be $\mathscr{E}=[-5,5]$ in $p(e)$. The noise distribution $p_{L}(w_{d,k})$ follows $w_{d,k}\sim 0.2\mathcal{L}(0,1)+0.8\mathcal{U}(-5,5)$ while $p_{G}(w_{d,k})$ follows $w_{d,k}\sim 0.2\mathcal{N}(0,0.5)+0.8\mathcal{U}(-5,5)$.}	
	\label{pdfcom}
\end{figure}
\begin{remark}
	When the kernel bandwidth $\beta$ is very small, the distribution \eqref{distribution} is very similar to a uniform distribution. However, a very small kernel bandwidth may bring difficulty in the convergence when solving \eqref{GLKF} (see Theorems \ref{theorem8} and \ref{theorem9} in the following section). 
\end{remark}
\begin{remark}
	The GL has two advantages compared with the traditional correntropy-based loss function. Firstly, it has a shape parameter to tune so that it can accommodate many types of distributions. Secondly, it employs different kernel bandwidths at distinct channels. Moreover, the GL is associated with the noise distribution through \eqref{pdf1}, which provides the general guidance for kernel parameter selection, i.e., a smaller kernel bandwidth corresponds to a heavier tail distribution. A more detailed kernel parameter tuning strategy is available in Section \ref{tuning}. 
\end{remark}

\section{Algorithm Derivation}
In this section, we derive the GMKMCKF, analyze its convergence and complexity. Then, we apply this algorithm to disturbance estimation.
\subsection{Algorithm Derivation}
The system dynamics in \eqref{linear} can be rewritten as 
\begin{equation}
\left(\begin{array}{c}
{x}_k^{-}\\
{y}_k
\end{array}\right)=
\left(\begin{array}{c}
{I}\\
{C}
\end{array}\right){x}_k+{\nu}_k
\label{mcc}
\end{equation}
where ${x}_k^{-}$ is the \textit{a priori} estimate of state which can be obtained by \eqref{kf1}. The noise ${\nu}_k$ has
\begin{equation}\nonumber
{\nu}_k=\left(\begin{array}{c}
{x}_k^{-}-{x}_k\\
{v}_k
\end{array}\right)
\end{equation}
with 
\begin{equation}\nonumber
\begin{small}
\begin{aligned}
E({\nu}_k {\nu}_k^{\prime})&=\left(\begin{array}{cc}
{P}_{k}^{-}&0\\
0&{R_k}
\end{array}\right)=\left(\begin{array}{cc}
{B}_{p}{B}_{p}^{\prime}&0\\
0&{B}_{r}{B}_{r}^{\prime}
\end{array}\right)={B}_{k}{B}_{k}^{\prime}
\label{bk}
\end{aligned}
\end{small}
\end{equation}
where ${P}_{k}^{-}$ is the \textit{a priori} error covariance, and ${B}_{p}$ and ${B}_{r}$ can be obtained by Cholesky decomposition. Left multiplying ${B}_{k}^{-1}$ in both sides of (\ref{mcc}), we obtain
\begin{equation}
{T}_k={W}_k{x}_k+{\zeta}_k
\label{TWexpression}
\end{equation}
with
\begin{equation}
\begin{aligned}
{T}_k&={B}_{k}^{-1}\left(\begin{array}{c}
{x}_k^{-}\\
{y}_k
\end{array}\right),
{W}_k={B}_{k}^{-1}\left(\begin{array}{c}
{I}\\
{C}
\label{TW}
\end{array}\right)
\end{aligned}
\end{equation}
and $\zeta_k={B}_{k}^{-1}{\nu}_k$. Using the GL in \eqref{GLKF} as the loss function, we have
\begin{equation}\nonumber
\arg \min_{x_k} J_{GL,KF} = \sum_{i=1}^{n+m} \beta_{i}^{\alpha}\left(1-G_{\alpha,\beta_i}({e}_{i,k})\right)
\label{objgkf0}
\end{equation}
where ${e}_{i,k}={t}_{i,k}-{w}_{i,k}{x}_k$ is the error at time step $k$, ${t}_{i,k}$ is the $i$-th element of ${T}_k$, ${w}_{i,k}$ is the $i$-th row of ${W}_k$, and $\alpha$ and $\beta_i$ are kernel parameters. It follows that
\begin{equation}
\arg \min_{x_k} J_{GL,KF} = \arg \max_{x_k} J_{GC,KF}
\label{objgkf}
\end{equation}
with
\begin{equation}\nonumber
J_{GC,KF} =\sum_{i=1}^{n+m} \beta_{i}^{\alpha}G_{\alpha,\beta_i}({e}_{i,k}).
\end{equation}
Equation \eqref{objgkf} can be solved by 
\begin{equation}\nonumber
\frac{\partial J_{GC,KF}}{\partial {x}_k}=0
\end{equation}
and it follows that
\begin{equation}\nonumber
\sum_{i=1}^{n+m} w_{i,k}^{\prime}(t_{i,k}-w_{i,k}x_{k})\alpha|e_{i,k}|^{\alpha-2}\exp(-\beta_{i}^{-\alpha} |e_{i,k}|^{\alpha})=0.
\end{equation}
We denote $|e_{i,k}|^{\alpha-2}\exp(-\beta_{i}^{-\alpha} |e_{i,k}|^{\alpha})$ as $g_{C}(e_{i,k})$. Then, we have
\begin{equation}\nonumber
\begin{aligned}
\sum_{i=1}^{n+m}w_{i,k}^{\prime}g_{C}(e_{i,k}) t_{i,k}=\sum_{i=1}^{n+m}w_{i,k}^{\prime}g_{C}(e_{i,k})w_{i,k}x_{k}.
\end{aligned}
\end{equation}
It is easy to obtain that 
\begin{equation}
x_{k}=\left(\sum_{i=1}^{n+m}w_{i,k}^{\prime}g_{C}(e_{i,k})w_{i,k}\right)^{-1}\left(\sum_{i=1}^{n+m}w_{i,k}^{\prime}g_{C}(e_{i,k})t_{i,k}\right).
\label{fixp}
\end{equation}
One can see that the above question is a fixed-point equation since 
both sides of \eqref{fixp} contain $x_{k}$ (note that $e_{i,k}=t_{i,k}-w_{i,k}x_{k}$ is a function of $x_{k}$). It can be expressed as 
\begin{equation}
x_{k}=(W_{k}^{\prime}M_{k}W_{k})^{-1}(W_{k}^{\prime}M_{k}T_{k})
\label{plant}
\end{equation}
with
\begin{equation}\nonumber
M_{k}=\left[\begin{array}{cc}
M_{p}&0\\
0&M_{r}
\end{array}\right]
\end{equation}
where $M_{p}=diag(g_{C}(e_{1,k}),\ldots,g_{C}(e_{n,k}))$ and $M_{r}=diag(g_{C}(e_{n+1,k}),\ldots,g_{C}(e_{n+m,k}))$. Substituting the expression of $W_{k}$ from \eqref{TW} into \eqref{plant}, we have
\begin{equation}\nonumber
\begin{aligned}
({W}_k^{\prime}{M}_k{W}_k)^{-1}=&[({B}_{p}^{-1})^{\prime}{M}_{p}{B}_{p}^{-1}+{C}^{\prime}({B}_r^{-1})^{\prime}{M}_{r}{B}_{r}^{-1}{C}]^{-1}.
\end{aligned}
\end{equation}
Using the matrix inversion lemma, we arrive at
\begin{equation}
\begin{aligned}
({W}_k^{\prime}{M}_k{W}_k)^{-1}&={B}_{p}{M}_{p}^{-1}{B}_{p}^{\prime}-{B}_{p}{M}_{p}^{-1}{B}_{p}^{\prime}{C}^{\prime} ({B}_{r}{M}_{r}^{-1}{B}_{r}^{\prime}\\
&+{C}{B}_{p}{M}_{p}^{-1}{B}_{p}^{\prime}{C}^{\prime})^{-1}{C}{B}_{p}{M}_{p}^{-1}{B}_{p}^{\prime}.
\label{part1}
\end{aligned}
\end{equation}
Further, we have
\begin{equation}
\begin{aligned}
{W}_k^{\prime}{M}_k{T}_k=({B}_{p}^{-1})^{\prime}{M}_{p}{B}_{p}^{-1}{x}_{k}^{-}+{C}^{\prime}({B}_{r}^{-1})^{\prime}{M}_{r}{B}_{r}^{-1}{y}_{k}.
\end{aligned}
\label{part2}
\end{equation} 
Substituting the \eqref{part1} and \eqref{part2} into \eqref{plant}, we have
\begin{equation}
\begin{aligned}
{x}_{k}={x}_{k}^{-}+\tilde{{K}}({y}_{k}-{C}{x}_{k}^{-})
\label{mkmckf}
\end{aligned}
\end{equation} 
with
\begin{equation}
\begin{aligned}
\tilde{{K}}&=\tilde{{P}}_{k}^{-}{C}^{\prime}({C}\tilde{{P}}_{k}^{-}{C}^{\prime}+\tilde{{R}}_{k})^{-1}\\
\tilde{{P}}_{k}^{-}&={B}_{p}{M}_{p}^{-1}{B}_{p}^{\prime},~ \tilde{{R}}_{k}={B}_{r}{M}_{r}^{-1}{B}_{r}^{\prime}.
\end{aligned}
\end{equation}
The \emph{a posteriori} error covariance is given as 
\begin{equation}
{P}_{ k}^{+}=({I}-\tilde{{K}}_k {C}){{P}}_{ k}^{-}({I}-\tilde{{K}}_k {C})^{\prime}+\tilde{{K}}_k {R}_{k}\tilde{{K}}_k^{\prime}.
\end{equation} 
The detailed algorithm of the GMKMCKF is summarized in Algorithm~\ref{AlgorithmMK}. 
\begin{algorithm}[bt]
	\setstretch{0.9} 
	\caption{GMKMCKF}
	\label{AlgorithmMK}
	\begin{algorithmic}[1]
		\State \textbf{Step 1: Initialization}\\
		Choose $\alpha$, $\beta_1,\beta_2,\ldots,\beta_{n+m}$, maximum iteration number $m_{iter}$, and a threshold $\varepsilon$.
		\State {\textbf{Step 2: State Prediction}}\\
		$\hat{{x}}_{k}^{-}=A \hat{{x}}_{k-1}^{+} $\\
		${P}_{ k}^{-}={A} {P}_{k-1 }^{+} {A}^{\prime}+{Q}_k$\\
		Obtain ${B}_{p}$ with ${P}_{ k}^{-}={B}_{p}{B}_{p}^{{\prime}}$\\
		Obtain ${B}_{r}$ with ${R}_{k}={B}_{r}{B}_{r}^{{\prime}}$
		\State \textbf{Step 3: State Update}\\
		$\hat{{x}}_{k,0}^{+}=\hat{{x}}_{k}^{-}$
		\While{$\frac{\left\|\hat{{x}}_{k,t}^{+}-\hat{{x}}_{k,t-1}^{+}\right\|}{\left\|\hat{{x}}_{k,t}^{+}\right\|}>\varepsilon$ or $t \le m_{iter}$} \\
		$\hat{{x}}_{k,t}^{+}=\hat{{x}}_{k}^{-}+\tilde{{K}}_{k,t}({y}_k-{H}\hat{{x}}_{k}^{-})$ \Comment{$t$ starts from 1}\\
		$\tilde{{K}}_{k,t}=\tilde{{P}}_{ k}^{-}{H}^{\prime}({H}\tilde{{P}}_{ k}^{-}{H}^{\prime}+\tilde{{R}}_{k})^{-1}$\\
		$\tilde{{P}}_{ k}^{-}={B}_{p}\tilde{{M}}_{p}^{-1}{{B}}_{p}^{\prime}$\\
		$\tilde{{R}}_{ k}={B}_{r}\tilde{{M}}_{r}^{-1}{B}_{r}^{\prime}$\\
		$ {M}_{p}={diag}(g_{C}(e_{1,k}),\ldots,g_{C}(e_{n,k}))$\\
		$ {M}_{r}={diag}(g_{C}(e_{n+1,k}),\ldots,g_{C}(e_{n+m,k}))$\\
		${e}_{i,k}=t_{i,k}-w_{i,k}x_{k,t-1}^{+}$ \\
		$t=t+1$
		\EndWhile\\
		${P}_{k}^{+}=({I}-\tilde{{K}}_k {H}){{P}}_{ k}^{-}({I}-\tilde{{K}}_k {H})^{\prime}+\tilde{{K}}_k {R}_{k}\tilde{{K}}_k^{\prime}$		
	\end{algorithmic}
\end{algorithm}

\begin{theorem}
	The GMKMCKF is identical to the KF when $\alpha=2$ and $\beta_i \to \infty$. It is identical to the traditional MCKF~\cite{b18} when $\alpha=2$ and $\beta_1=\beta_2=\cdots=\beta_{n+m}=\sqrt{2}\sigma$. Moreover, it becomes the MKMCKF~\cite{b20} when $\alpha=2$ and $\beta_i=\sqrt{2}\sigma_i$.
	\label{Theorem7}
\end{theorem}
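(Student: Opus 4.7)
The plan is to verify each of the three reductions by specializing the weight function $g_C(e_{i,k}) = |e_{i,k}|^{\alpha-2}\exp(-\beta_i^{-\alpha}|e_{i,k}|^{\alpha})$ that appears in the fixed-point update \eqref{fixp} and then tracking the consequence through the matrices $M_p$, $M_r$, $\tilde{P}_k^{-}$, $\tilde{R}_k$ and the gain $\tilde{K}$. The central observation is that setting $\alpha=2$ collapses the exponent of $|e_{i,k}|$ outside the exponential to zero, so $g_C(e_{i,k}) = \exp(-e_{i,k}^2/\beta_i^2)$, which is exactly the Gaussian-kernel weight used by MCKF/MKMCKF up to the identification $\beta_i^2 = 2\sigma_i^2$.

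For the first claim ($\alpha=2$, $\beta_i\to\infty$), I would argue that $g_C(e_{i,k}) \to 1$ uniformly in $e_{i,k}$ on bounded sets, so $M_p \to I_n$ and $M_r \to I_m$. Consequently $\tilde{P}_k^{-} = B_p M_p^{-1} B_p^{\prime} \to B_p B_p^{\prime} = P_k^{-}$ and $\tilde{R}_k \to R_k$; the gain $\tilde{K}$ in \eqref{mkmckf} then coincides with the classical Kalman gain \eqref{gain1}, and because the weights no longer depend on $e_{i,k}$, the fixed-point iteration terminates at its first step with $\hat{x}_{k,1}^{+}$ equal to the KF update. The Joseph-form covariance in Algorithm \ref{AlgorithmMK} reduces algebraically to \eqref{p1} using $\tilde{K}_k = P_k^{-}C^{\prime}(CP_k^{-}C^{\prime}+R_k)^{-1}$.

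For the second claim ($\alpha=2$, all $\beta_i = \sqrt{2}\sigma$), substitute $\beta_i^2 = 2\sigma^2$ into $g_C$ to get $\exp(-e_{i,k}^2/(2\sigma^2))$, which is the single-kernel Gaussian weight of MCKF \cite{b18}. Since all bandwidths coincide, $M_p$ and $M_r$ are built from the same scalar kernel, so \eqref{mkmckf} together with $\tilde{P}_k^{-}$ and $\tilde{R}_k$ in Algorithm \ref{AlgorithmMK} matches the MCKF recursion term-by-term. The third claim ($\alpha=2$, $\beta_i = \sqrt{2}\sigma_i$) is analogous but retains channel-dependent $\sigma_i$, reproducing the MKMCKF \cite{b20}; no additional argument is required beyond renaming the bandwidths.

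The only non-routine point is the $\beta_i\to\infty$ limit, where one must be careful that the limit is taken inside the matrix inverses defining $\tilde{K}$ and inside the fixed-point iteration. The matrix-inverse step is harmless because $M_p, M_r \to I$ with everything well-conditioned, and the iteration issue disappears because for $\alpha=2$ the weight $g_C$ is continuous at the limit (it does not have the $|e|^{\alpha-2}$ singularity that would arise for $\alpha<2$). I therefore expect the proof to be a short, direct verification, with the limit argument being the only spot warranting explicit care.
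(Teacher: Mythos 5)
Your proposal is correct and follows essentially the same route as the paper's proof: specialize $g_{C}(e_{i,k})$ at $\alpha=2$ so that it becomes the Gaussian kernel $\exp(-e_{i,k}^{2}/\beta_i^{2})$, identify $\beta_i^{2}=2\sigma_i^{2}$, and read off the three cases from the resulting forms of $M_p$ and $M_r$. The paper's version is terser (it simply states $M_p=I$, $M_r=I$ in the first case and names the diagonal Gaussian weights in the other two), whereas you add useful but not strictly necessary care about the limit inside the matrix inverses and the collapse of the fixed-point iteration.
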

The proof of this theorem is shown in Appendix \ref{proofTheorem7}
\subsection{Convergence Issue}
The religious convergence of the GMKMCKF remains open. In this section, we provide a sufficient condition under which the fixed-point iteration  \eqref{fixp} surely converges to a unique solution when setting $\alpha=2$ . We drop the subscript $k$  and use $l=n+m$ for ease of notation. Then, \eqref{fixp} can be rewritten as
\begin{equation}
\begin{aligned}
x&=f(x)=R_{ww}^{-1} P_{wt}\\
&=\left(\sum_{i=1}^{l}{w}_{i}^{\prime}g_{C}({e}_{i}){w}_{i}\right)^{-1}\left(\sum_{i=1}^{l}{w}_{i}^{\prime}g_{C}({e}_{i}){t}_{i}\right)\\
&\overset{\alpha=2}{=}\left(\sum_{i=1}^{l}{w}_{i}^{\prime}G_{\beta_i}({e}_{i}){w}_{i}\right)^{-1}\left(\sum_{i=1}^{l}{w}_{i}^{\prime}G_{\beta_i}({e}_{i}){t}_{i}\right)
\label{cmcxkfree}
\end{aligned}
\end{equation}
where $g_{C}({e}_{i})=G_{\beta_i}({e}_{i})=\exp^{-e_i^2/\beta_i^{2}}$ when setting $\alpha=2$, $e_i=t_i-w_i x$, $w_i \in \mathbb{R}^{1 \times n}$, and $\beta_i$ is the kernel bandwidth for $i$-th channel. We assume that $R_{ww}$ is invertible with $\lambda_{\min}[R_{ww}]>0$ for any value of $\beta_i$ where $\lambda_{\min}[\cdot]$ denotes the minimum eigenvalue of a matrix for tractability. Then, we present the following lemma.
\begin{lemma}
	Based on contraction mapping theorem (also known as Banach fixed-point theorem)~\cite{b43}, the convergence of the fixed-point	algorithm \eqref{cmcxkfree} is guaranteed if $\exists~ \gamma>0$ and $0<\eta<1$ such that the initial vector $\|x_{0}\|_p < \gamma$, and $\forall x \in \{x \in \mathbb{R}^{n}: \|x\|_p \le \gamma\}$, it holds that
	\begin{equation}
	\begin{aligned}
	\left\{\begin{array}{l}
	\|f(x)\|_p \le \gamma\\
	\| \nabla_{x} f(x) \|_p \le \eta
	\end{array}\right.
	\end{aligned}
	\end{equation}
	where $\|\cdot \|$ denotes an $\ell_p$ norm of a vector or an induced norm of a matrix defined by $\|A\|_p = \max \limits_{\|x\|_p} \frac{\|Ax\|_p}{\|x\|_p}$ with $p \ge 1$, $A \in \mathbb{R}^{n \times n}$, $x \in \mathbb{R}^{n \times 1}$, and $\nabla_x f(x)$ is the Jacobian matrix of $f(x)$ given by 
	\begin{equation}\nonumber
	\nabla_{x} f(x)= \begin{bmatrix}\frac{\partial }{\partial {x}_1} f(x), \frac{\partial }{\partial {x}_2} f(x), \cdots, \frac{\partial }{\partial {x}_n} f(x)\end{bmatrix}
	\label{jacobian}
	\end{equation}
	with $x=[{x}_1,{x}_2,\cdots,{x}_n]^{\prime}$.
\end{lemma}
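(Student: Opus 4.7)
The plan is to recognize that this lemma is essentially a packaged invocation of the Banach fixed-point theorem on the closed ball $B_\gamma := \{x \in \mathbb{R}^n : \|x\|_p \le \gamma\}$, so the proof should just verify the two hypotheses (self-map and contraction) and then appeal to the theorem. First I would note that $B_\gamma$ is a closed subset of $\mathbb{R}^n$ equipped with the $\ell_p$ norm, hence a complete metric space; this is the ambient space on which I will run the iteration. Because the initial vector $x_0$ satisfies $\|x_0\|_p < \gamma$, it lies in $B_\gamma$, so starting the iteration there is legitimate.

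Second, the self-map property $f(B_\gamma) \subseteq B_\gamma$ follows immediately from the stated hypothesis $\|f(x)\|_p \le \gamma$ for all $x \in B_\gamma$, so no further argument is needed there. The real work is upgrading the Jacobian bound $\|\nabla_x f(x)\|_p \le \eta < 1$ to a Lipschitz / contraction property of $f$ itself. For this I would use the fact that $B_\gamma$ is convex: for any $x,y \in B_\gamma$ and $t \in [0,1]$, the point $x + t(y-x)$ remains in $B_\gamma$ by convexity, so the Jacobian bound applies along the whole segment. Invoking the fundamental theorem of calculus componentwise (which is legitimate because $f$, being built from smooth exponentials composed with the invertible matrix $R_{ww}^{-1}$, is $C^1$ on $B_\gamma$), I would write
\begin{equation}\nonumber
f(y) - f(x) \;=\; \int_{0}^{1} \nabla_x f\bigl(x + t(y-x)\bigr)\,(y-x)\, dt.
\end{equation}
Applying the $\ell_p$ norm, using the sub-multiplicative property of the induced matrix norm, and pulling the supremum of $\|\nabla_x f\|_p$ outside the integral then gives $\|f(y)-f(x)\|_p \le \eta\,\|y-x\|_p$, which is exactly the contraction condition with modulus $\eta < 1$.

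Third, having verified that $f: B_\gamma \to B_\gamma$ is a contraction on a complete metric space, I would invoke the Banach fixed-point theorem (reference [b43]) directly to conclude that $f$ admits a unique fixed point $x^\star \in B_\gamma$ and that the Picard iteration $x_{t+1} = f(x_t)$ initialized at any $x_0 \in B_\gamma$ converges to $x^\star$, with the usual geometric rate $\|x_t - x^\star\|_p \le \eta^{t}(1-\eta)^{-1}\|x_1-x_0\|_p$. Since the hypothesis places $x_0$ inside $B_\gamma$, this finishes the argument.

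The hard part is really not in this lemma at all (it is an almost verbatim statement of Banach), but in the surrounding analytical work that will follow in Theorem~\ref{theorem9}: producing explicit constants $\gamma$ and $\eta$ for the concrete map $f(x) = R_{ww}^{-1}P_{wt}$ defined in \eqref{cmcxkfree}, because $R_{ww}$, $P_{wt}$, and $g_C(e_i) = \exp(-e_i^2/\beta_i^2)$ all depend on $x$ through the residuals $e_i = t_i - w_i x$. Inside the present lemma, the only subtle step worth flagging is justifying the integral mean-value representation, which is why I would state the $C^1$ smoothness of $f$ explicitly (relying on the standing assumption $\lambda_{\min}[R_{ww}] > 0$) before differentiating under the integral sign.
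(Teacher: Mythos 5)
Your proposal is correct and matches the route the paper intends: the paper states this lemma without a separate proof, treating it as a direct packaging of the Banach fixed-point theorem from \cite{b43} (following the analogous lemma in \cite{b44}), and your argument --- completeness of the closed ball, the self-map property from the first hypothesis, and the contraction property obtained from the Jacobian bound via the integral mean-value representation on the convex ball --- is exactly the standard justification being invoked. Your care in using the integral form of the mean value inequality (rather than a componentwise equality, which fails for vector-valued maps) and in noting that $f$ is $C^1$ under the standing assumption $\lambda_{\min}[R_{ww}]>0$ is appropriate and closes the only points where a sloppier write-up could go wrong.
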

Denote bandwidth vector as $\bar{\beta}=[\beta_1, \beta_2, \ldots, \beta_l]^{\prime} \in \mathbb{R}^{l}$ and the unified bandwidth as $\beta_1=\beta_2=\cdots=\beta_l=\beta \in \mathbb{R}$. Then, we have the following two theorems.
\begin{theorem}
	If $\gamma > \xi$, where $\xi = \frac{\sqrt{n} \sum_{i=1}^{l}\|w_i^{\prime}\|_1 |t_i|}{\lambda_{min}[\sum_{i=1}^{l}w_i^{\prime}w_i]}$, and $\beta_i \ge \beta^{*}$ for $i=1,2,\cdots,l$, where $\beta^{*}$ is the solution of equation $\phi(\beta)=\gamma$ with
	\begin{equation}
	\phi(\beta)= \frac{\sqrt{n}\sum_{i=1}^{l}\|w_i^{\prime}\|_1 |t_i|}{\lambda_{min}\Big{[}\sum_{i=1}^{l} w_i^{\prime}G_{\beta}\big(\gamma \|w_i^{\prime}\|_1+t_i\big)w_i\Big{]}},
	\label{phibeta}
	\end{equation} 
	then $\|f(x)\|_1 \le \gamma$ for all $x \in \{x \in \mathbb{R}^{n}: \|x\|_1 \le \gamma\}$.
	\label{theorem8}
\end{theorem}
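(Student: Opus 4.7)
The plan is to derive an upper bound on $\|f(x)\|_1$ by splitting $f(x)$ as the product of the inverse of a symmetric positive semidefinite matrix and a vector, bounding the two pieces separately, and then invoking the defining equation $\phi(\beta^*)=\gamma$. Throughout, I assume $\|x\|_1 \le \gamma$.

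First I would write
\begin{equation}\nonumber
f(x) = A^{-1}(x)\, b(x),\quad A(x) = \sum_{i=1}^{l} w_i^{\prime} G_{\beta_i}(e_i) w_i,\quad b(x) = \sum_{i=1}^{l} w_i^{\prime} G_{\beta_i}(e_i)\, t_i,
\end{equation}
with $e_i = t_i - w_i x$, and use $\|f(x)\|_1 \le \|A^{-1}(x)\|_1 \cdot \|b(x)\|_1$. For the vector part I would use $0 < G_{\beta_i}(e_i) \le 1$ together with the triangle inequality to get $\|b(x)\|_1 \le \sum_i \|w_i^{\prime}\|_1 |t_i|$. For the matrix part I would use that $A(x)$ is symmetric positive definite (a weighted sum of rank-one PSD matrices $w_i^{\prime} w_i$, with positive weights), so $\|A^{-1}(x)\|_2 = 1/\lambda_{\min}[A(x)]$, and combine with the standard norm-equivalence $\|\cdot\|_1 \le \sqrt{n}\,\|\cdot\|_2$ to obtain $\|A^{-1}(x)\|_1 \le \sqrt{n}/\lambda_{\min}[A(x)]$.

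The heart of the argument is a lower bound on $\lambda_{\min}[A(x)]$ that is uniform over the ball $\{\|x\|_1 \le \gamma\}$. I would first bound $|e_i| = |t_i - w_i x| \le |t_i| + \|w_i^{\prime}\|_1 \|x\|_1 \le \gamma \|w_i^{\prime}\|_1 + |t_i|$. Since $G_{\beta}(e) = \exp(-e^2/\beta^2)$ is decreasing in $|e|$ and increasing in $\beta$, and since $\beta_i \ge \beta^*$, this gives the pointwise scalar bound
\begin{equation}\nonumber
G_{\beta_i}(e_i) \;\ge\; G_{\beta_i}\bigl(\gamma\|w_i^{\prime}\|_1 + |t_i|\bigr) \;\ge\; G_{\beta^*}\bigl(\gamma\|w_i^{\prime}\|_1 + |t_i|\bigr).
\end{equation}
Because each $w_i^{\prime} w_i \succeq 0$, these scalar inequalities lift to the Loewner order: $A(x) \succeq \sum_i w_i^{\prime} G_{\beta^*}(\gamma\|w_i^{\prime}\|_1 + |t_i|)\, w_i$, and therefore $\lambda_{\min}[A(x)] \ge \lambda_{\min}\bigl[\sum_i w_i^{\prime} G_{\beta^*}(\gamma\|w_i^{\prime}\|_1 + t_i)\, w_i\bigr]$. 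Assembling the pieces gives $\|f(x)\|_1 \le \phi(\beta^*)$, which equals $\gamma$ by construction.

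A subsidiary step I would include is a brief justification that $\beta^*$ exists under the hypothesis $\gamma > \xi$: $\phi$ is continuous and monotonically decreasing in $\beta$ on $(0,\infty)$, with $\phi(\beta) \to \xi$ as $\beta \to \infty$ (since $G_{\beta}\to 1$) and $\phi(\beta)\to\infty$ as $\beta\to 0^+$ (since $G_{\beta}\to 0$ at any nonzero argument, making the denominator collapse), so the intermediate value theorem supplies a unique $\beta^*$ with $\phi(\beta^*)=\gamma$. The main technical obstacle is the lifting from scalar kernel lower bounds to a Loewner lower bound on $A(x)$; this is easy here because every $w_i^{\prime} w_i$ is rank-one PSD and the weights are nonnegative, so multiplying each weight by a smaller nonnegative number preserves PSD ordering term by term. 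The remaining steps are just careful bookkeeping of induced norms.
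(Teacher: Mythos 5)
Your proposal is correct and follows essentially the same route as the paper's proof: the same factorization $f(x)=R_{ww}^{-1}P_{wt}$, the same bounds $\|P_{wt}\|_1\le\sum_i\|w_i'\|_1|t_i|$ and $\|R_{ww}^{-1}\|_1\le\sqrt{n}/\lambda_{\min}[R_{ww}]$, the same uniform error bound $|e_i|\le\gamma\|w_i'\|_1+|t_i|$ feeding a monotonicity argument in $\beta$, and the same intermediate-value argument for the existence of $\beta^{*}$. Your explicit justification of the Loewner-order lifting (term-by-term comparison of the rank-one PSD summands) is a welcome bit of extra care that the paper compresses into a single inequality.
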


\begin{proof}
	The proof of this theorem is shown in \ref{proofTheorem8}.
\end{proof}
\begin{theorem}
	If $\gamma > \xi = \frac{\sqrt{n}\sum_{i=1}^{l}{\|}{w}_i^{\prime}{\|_1}|{t}_i|}{\lambda_{min}\Big{[}\sum_{i=1}^{l}{w}_i^{\prime}{w}_i\Big{]}}$, and $\forall i,~ \beta_i \ge \max\{\beta^{*},\beta^{+}\}$, where $\beta^{*}$ is the solution of $\phi(\beta)=\gamma$ [see $\phi(\beta)$ in \eqref{phibeta}], and $\beta^{+}$ is the solution of $\psi(\beta)=\eta ~(0<\eta<1)$ with
	\begin{equation}\nonumber
	\begin{aligned}
	{\psi}({\beta})= \frac{2\sqrt{n} \sum_{i=1}^{l}(|t_i|+\gamma\|w_i^{\prime}\|_{1}) \|w_i^{\prime}\|_1  \big{(}\gamma\|w_i^{\prime}w_i \|_1+ \|w_i^{\prime}t_i\|_1 \big{)} }{\beta^2\lambda_{\min}\Big{[} 
	\sum_{i=1}^{l}{w}_i^{\prime}G_{\beta}\Big{(}\gamma \|w_i^{\prime}\|_1+|t_i|\Big{)}{w}_i\Big{]} },
\end{aligned}
\end{equation}
	then it holds that $\|f(x)\|\le \gamma$, and $\|\nabla_{x} f(x)\| \le \eta$ for all $x \in \{x \in \mathbb{R}^{n}: \|x\|_1 \le \gamma\}$.
	\label{theorem9}
\end{theorem}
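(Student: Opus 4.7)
The plan is to verify the two hypotheses of the contraction mapping lemma separately, namely $\|f(x)\|_1 \le \gamma$ and $\|\nabla_x f(x)\|_1 \le \eta$, on the closed ball $\{x: \|x\|_1 \le \gamma\}$, and then invoke the lemma. The first hypothesis is already handed to us: since we assume $\gamma > \xi$ and $\beta_i \ge \beta^{*}$, the conclusion of Theorem~\ref{theorem8} gives $\|f(x)\|_1 \le \gamma$ directly, so no additional work is needed there. The entire task therefore reduces to controlling the Jacobian $\nabla_x f(x)$ uniformly over the ball, and then arguing that the function $\psi(\beta)$ appearing in the statement majorizes $\|\nabla_x f(x)\|_1$ and is decreasing in $\beta$, so that $\beta_i \ge \beta^{+}$ forces $\|\nabla_x f\|_1 \le \eta < 1$.

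For the Jacobian bound, I would first rewrite $f(x)= R_{ww}(x)^{-1}P_{wt}(x)$ with $R_{ww}(x) = \sum_{i=1}^{l} w_i^{\prime} G_{\beta_i}(e_i) w_i$ and $P_{wt}(x) = \sum_{i=1}^{l} w_i^{\prime} G_{\beta_i}(e_i) t_i$, where $e_i = t_i - w_i x$. Using the matrix-inverse differentiation identity, the Jacobian decomposes as
\begin{equation}\nonumber
\nabla_x f(x) = R_{ww}^{-1}\bigl[\nabla_x P_{wt}(x) - (\nabla_x R_{ww}(x))\, R_{ww}^{-1} P_{wt}(x)\bigr].
\end{equation}
For $\alpha=2$ we have $\partial G_{\beta_i}(e_i)/\partial x = (2 e_i/\beta_i^{2}) G_{\beta_i}(e_i) w_i$, so each partial derivative of $R_{ww}$ and $P_{wt}$ is a sum of rank-one terms scaled by $2 e_i G_{\beta_i}(e_i)/\beta_i^{2}$. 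The submultiplicativity of the induced $\ell_1$-norm, together with the bounds $|G_{\beta_i}(e_i)| \le 1$, $|e_i| \le |t_i|+\gamma\|w_i^{\prime}\|_1$ on the ball, and the already-established $\|f(x)\|_1 \le \gamma$ (so $\|P_{wt}\|_1 \le \gamma \lambda_{\min}[R_{ww}] + \text{terms}$, or more cleanly $\|R_{ww}^{-1} P_{wt}\|_1 \le \gamma$), lets me peel off a factor $1/\beta^{2}$ and get an upper bound of exactly the form of $\psi(\beta)$ after replacing each $\beta_i$ by the common lower bound $\beta = \min_i \beta_i$ and lower-bounding $\lambda_{\min}[R_{ww}(x)]$ by $\lambda_{\min}\bigl[\sum_i w_i^{\prime} G_{\beta}(\gamma\|w_i^{\prime}\|_1 + |t_i|) w_i\bigr]$, using that $G_\beta$ attains its minimum over the admissible error range at the largest error magnitude.

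The main obstacle, and the step where care is required, is the Jacobian calculation itself: producing the cleanest upper bound that matches the author's definition of $\psi(\beta)$ requires a careful bookkeeping of the two contributions (one from differentiating $P_{wt}$, one from differentiating $R_{ww}^{-1}$), factoring out the common $2/\beta^{2}$, and combining the two scalar bounds $\|w_i^{\prime}w_i\|_1$ and $\|w_i^{\prime} t_i\|_1$ that appear inside the parentheses of $\psi$. Once this is done, monotonicity of $\psi$ in $\beta$ follows because the numerator is independent of $\beta$ while the denominator is non-decreasing (the $G_\beta$ values and $\beta^{2}$ both grow with $\beta$), so a solution $\beta^{+}$ of $\psi(\beta) = \eta$ exists and any $\beta_i \ge \beta^{+}$ guarantees $\psi(\beta_i) \le \eta$. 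Combined with the first hypothesis, the Banach fixed-point theorem then yields both $\|f(x)\|_1 \le \gamma$ and $\|\nabla_x f(x)\|_1 \le \eta$ on the ball, completing the proof.
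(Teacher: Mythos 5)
Your proposal is correct and follows essentially the same route as the paper's proof: invoke Theorem~\ref{theorem8} for the bound $\|f(x)\|_1\le\gamma$, differentiate $f=R_{ww}^{-1}P_{wt}$ via the matrix-inverse differentiation identity, bound the two resulting terms using $G_{\beta_i}(e_i)\le 1$, $|e_i|\le|t_i|+\gamma\|w_i^{\prime}\|_1$ and $\|f(x)\|_1\le\gamma$ to arrive at $\psi(\beta)$, and conclude from the monotonicity of $\psi$ together with the contraction mapping theorem. The only detail left implicit is carrying out the per-coordinate bookkeeping $\|\partial f/\partial x_j\|_1$ that produces the exact constants in $\psi$, which is routine given your outline.
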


\begin{proof}
	The proof of this theorem is in \ref{proofTheorem9}.
\end{proof}
\begin{remark}
	Theorems \ref{theorem8} and \ref{theorem9} are extensions of Theorem 1 and Theorem 2 in \cite{b44}, which give a sufficient condition for the fixed-point iteration of the GMKMCKF with $\alpha=2$. By Theorem \ref{theorem9} and the contraction mapping theorem~\cite{b43}, given the initial condition $\|x_0\|_1 < \gamma$, the fixed-point algorithm \eqref{cmcxkfree} will surely converge to a unique solution provided that $\alpha=2$ and $\beta_i$ is larger than a certain value and the value of $\eta$ guarantees the convergence speed. Theorem \ref{theorem9} also indicates that the algorithm may diverge if the kernel bandwidth $\beta_i$ is too small, although conceptually a small kernel bandwidth may be more effective in rejecting outliers or disturbance since it corresponds to a much heavier PDF (see Fig. \ref{pdfcom} for details). The rigorous convergence discussion of $\alpha \neq 2$ is ignored in this paper. However, in the simulation (as shown in Fig. \ref{ParameterSweep} of the following section), we observe that the convergence of \eqref{cmcxkfree} holds with a large range of $\alpha$.
\end{remark}
\subsection{Algorithm Complexity and Kernel Parameters Selection}
\label{tuning}
The main computational complexity of the GMKMCKF is summarized in Table \ref{complex}. Note that ${M}_p$ and ${M}_r$ are diagonal matrices and their inverse matrices are easy to compute. Assume that the average iteration number for the while loop in Algorithm \ref{AlgorithmMK} is $\bar{t}$. Then, the computational complexity of the GMKMCKF is
\begin{equation}
\begin{aligned}
S_{our}&= [8n^3+4nm^2+2mn^2-n^2-n+O(n^3)+O(m^3)]\\
&+\bar{t}[2m^3+2n^3+4n^2m+6m^2n+4n^2+2m^2+2mn\\
&+6n+6m+O(m^3)].
\end{aligned}
\end{equation}
Similarly, we can obtain the complexity of the KF in equations \eqref{kf1}-\eqref{p1}, which is 
\begin{equation}
S_{kf}=6n^3+6n^2m+4m^2n+mn-n+O(m^3).
\end{equation}
One can see that the complexity of the GMKMCKF is moderately heavier than that of the KF. In general, the fixed-point algorithm can converge very quickly~\cite{b18} which indicates that the computation complexity of the GMKMCKF is mild. 
\begin{table}[]
	\centering
	\caption{The Computation Complexity of Algorithm \ref{AlgorithmMK}.}
	\begin{tabular}{lll}
	\hline
	\hline
	\begin{tabular}[c]{@{}l@{}}Lines or\\ equations\end{tabular} & \begin{tabular}[c]{@{}l@{}}Absolute value,\\ addition/subtraction,\\and multiplication\end{tabular} & \begin{tabular}[c]{@{}l@{}} exponentiation,\\ exponent, division,\\ and Cholesky\\ decomposition\end{tabular} \\
	\hline
	Line 4& $2n^2-n$ & 0\\
	Line 5& $4n^3-n^2$ & 0\\
	Line 6& 0&$O(n^3)$\\ 
	Line 7&0&$O(m^3)$\\ 
	Line 11&  $4nm$  & 0\\   
	Line 12&$4n^2m+4m^2n-3nm$ &   $O(m^3)$ \\ 
	Line 13&$2n^3$&$n$  \\ 
	Line 14&$2m^3$ &$m$ \\                                             
	Line 15&$3n$ &  $3n$\\ 
	Line 16&$3m$ &$3m$\\                                                                                                                           
	Line 17&$2n$                                                               &$0$ \\
	Line 20&\begin{tabular}[c]{@{}l@{}}$4n^3+4n^2m$\\ $-2n^2+2nm^2$\end{tabular}                                           &   $0$\\
	\eqref{TW}&\begin{tabular}[c]{@{}l@{}}$2m^2n+2n^2+2m^2$\\ $-mn-m-n$\end{tabular}                                           &   $0$\\
	\hline
	\hline                                            
	\end{tabular}
	\label{complex}
\end{table}

In Algorithm \ref{AlgorithmMK}, we have to tune a total of $n+m$ bandwidths and a shape parameter $\alpha$. In general, these parameters can be tuned based on the noise PDF as indicated by \eqref{pdf1}. In the application of disturbance estimation, we can select $\beta_i \to \infty$ for channels without disturbance, and use $\beta_j=c_j$ for channels contaminated by disturbance. The shape parameter $\alpha$ can be tuned based on the shape of the nominal noises (i.e., without considering the disturbance). We can select $1 \le \alpha \le 2$ if the nominal noises are heavy-tailed, use $\alpha=2$ if they are Gaussian, and employ $\alpha>2$ if they are light-tailed. An alternative way to tune the kernel parameters is the optimization algorithm, e.g., Bayesian optimization in ~\cite{b46}.
\section{Simulations}
In this section, we employ the GMKMCKF as a disturbance observer for a robotic manipulator tracking problem. Moreover, we compare it with the ESO~\cite{b8}, KF-DOB~\cite{b11}, MCKF~\cite{b18}, and PF~\cite{b34}.
\subsection{System Modeling}
We consider a one-degree of freedom robotic manipulator tracking problem. The target of the robot is to track a predefined angle $\theta_d$ with or without disturbance $d$. The system dynamics of the robotic manipulator can be written as 
\begin{equation}
{I}_{m} \ddot{\theta} + b_m \dot{\theta} + {k_m}\theta+mgl\sin(\theta)  = \tau+ d
\label{robotdyn}
\end{equation}
where ${I}_{m}$ is the inertia, $m$ is the mass, $l$ is the length of the link, $b_m$ is the damping coefficient, $k_m$ is the stiffness coefficient, $\theta$ is the angle, $g$ is the gravity constant, $\tau$ is the motor output, and $d$ is the disturbance caused by unknown friction or the environment. To eliminate the nonlinear term in \eqref{robotdyn}, we use the feedback linearization technique~\cite{b31} by applying the control input $u_{g}=mgl\sin(\theta)$. In this case, the new model becomes
\begin{equation}
{I}_{m} \ddot{\theta} + b_m \dot{\theta} + k_m\theta= \bar{\tau}+ d.
\label{robotdyn1}
\end{equation}
where $\bar{\tau}=\tau-u_{g}$. Then, equation \eqref{robotdyn1} can be rewritten as a discrete state-space form by Euler discretization
\begin{equation}
\begin{aligned}
x_{k+1}&=A x_{k}+Fu_{k}+w_{k}\\
y_{k}&=Cx_{k}+v_{k}
\label{observer}
\end{aligned}
\end{equation}
with
\begin{equation}\nonumber
\begin{aligned}
A=\left[\begin{array}{ccc}
1&0&0\\
\frac{T}{{I}_{m}}& 1-\frac{b_{m}T}{{I}_{m}} &-\frac{k_{m}T}{{I}_{m}}\\
0&T&1\\
\end{array}\right]\\
F=\left[\begin{array}{c}
0\\
\frac{T}{{I}_{m}}\\
0
\end{array}\right], C=\left[\begin{array}{l}
0,0,1
\end{array}\right]
\end{aligned}
\end{equation}
where $u_{k}=\bar{\tau}_{k}={\tau}_{k}-u_{g,k}$, $x_{k}=[d_{k}, \dot{\theta}_{k},\theta_{k}]^{\prime}$ including the disturbance, the angular velocity, and the angle, $T$ is the sampling time, $w_k=[w_{d,k},w_{\dot{\theta},k},w_{\theta,k}]^{\prime}$ is the process noise, and $v_k$ is the measurement noise. 

In simulation, the desired angle follows $\theta_{d,k}=15 \sin (0.4\pi kT)$. As for the controller, we use a feedforward term $u_{ff,k}$ to compensate for the system dynamics, a feedback controller $u_{fb,k}$ to stabilize the plant, and a disturbance compensator $u_{d,k}$ to counteract the disturbance. The feedback controller is the PD controller~\cite{b32}. The overall controller has
\begin{equation}\nonumber
u_{k}=u_{ff,k}+u_{d,k}+u_{fb,k}
\label{controller}
\end{equation}
with
\begin{equation}\nonumber
\begin{aligned}
\left\{\begin{array}{l}
u_{ff,k}={I}_{m}{\ddot{\theta}}_{d,k}+b{\dot{\theta}}_{d,k}+k{\theta}_{d,k}\\
u_{d,k}=-\hat{d}_{k}\\
u_{fb,k}=k_{p}(\theta_{d,k}-\hat{\theta}_{k})+k_d(\dot{\theta}_{d,k}-\hat{\dot{\theta}}_{k})
\end{array}\right.
\end{aligned}
\end{equation}
where ${\ddot{\theta}}_{d,k}$ is the desired angular acceleration, ${\dot{\theta}}_{d,k}$ is the desired angular velocity, ${\theta}_{d,k}$ is the desired angle, $\hat{\dot{\theta}}_{k}$, $\hat{{\theta}}_{k}$, and $\hat{d}_{k}$ are the estimated angular velocity, angle, and disturbance, $k_{p}$ and $k_d$ are controller gains. The overall motor output is $\tau_{k}=u_{k}+u_{g,k}=u_{k}+mgl\sin{\hat{\theta}_{k}}$. Without considering the external disturbance, it actually is a proportional-–derivative (PD) controller with a feedforward term and its stability is proved in ~\cite{b32}.

In simulation, the disturbance is assumed to be step-like and follows
\begin{equation}\nonumber
{d}_{k}=\left\{\begin{array}{l}
50+{w}_{d,k},400\le k \le 600 \\
{w}_{d,k}, otherwise
\end{array}\right..
\end{equation}
 In simulation, the manipulator inertia is ${I}_{m}=0.1$~Nm$\cdot$s$^2/\deg$, the damping coefficient is $b_m=1$~Nm$\cdot$s$/\deg$, the stiffness coefficient is $k_m=0.1$~Nm$/\deg$, and the sampling time is $T=0.01$ s. We compare the performance of controller \eqref{controller} using the GMKMCKF, KF-DOB, MCKF, ESO, and PF as an observer in two situations: 1) the nominal noises are Laplacian; 2) the nominal noises are Gaussian. In those two cases, we use the same measurement covariance, process covariance, and initial error covariance for the KF-DOB, MCKF, and GMKMCKF. The particle number for the PF is $N=1000$ while the resampling method is the systematic resample. To investigate the error performance of different observers, we conduct 100 independent  Monte Carlo runs for each observer.
\subsection{Laplace Distribution with Unknown Disturbance}
For system dynamics in \eqref{observer} with nominal noise as Laplace distribution, we assume that
\begin{equation}
\begin{aligned}
w_{1,k} \sim \mathcal{L}(0,\frac{0.1\sqrt{2}}{2}),w_{2,k} \sim \mathcal{L}(0,\frac{0.01\sqrt{2}}{2})\\
w_{3,k} \sim \mathcal{L}(0,\frac{0.01\sqrt{2}}{2}), v_{k}\sim \mathcal{L}(0,\frac{0.01\sqrt{2}}{2}).
\label{lapnoise}
\end{aligned}
\end{equation}
It is worth mentioning that the disturbance process noise $w_{1,k}$ in \eqref{lapnoise} is the nominal noise rather than the practical noise since the modeling of the disturbance in \eqref{observer} is not accurate. To this end, we select $\beta_1=1$ to suppress the heavy-tailed noises for the disturbance channel. As for other channels, we use $\beta_2=\beta_3=\beta_4=10^{8}$. We employ the shape parameter $\alpha=1.6$ for the GMKMCKF1 and $\alpha=2$ for the GMKMCKF2. The maximum iteration number in each sample interval is set to be $m_{iter}=5$. The disturbance error using different observers in one Monte Carlo run is shown in Fig. \ref{lap_derror}. The corresponding tracking angle error is shown in Fig. \ref{lap_terror}. The root mean squared errors (RMSE) of the $x_1$ (disturbance), $x_2$ (angular velocity), $x_3$ (angle), $\theta_d-\theta_a$ (tracking error), and the average time consumption of different algorithms are summarized in Table \ref{lapDisturbnace}. These algorithms are executed on MATLAB 2019b on a laptop (Intel i7-8750H, 2.20GHz). One can see that the GMKMCKF1 outperforms the others and has a moderate complexity compared with the other algorithms, which reveals that $\alpha<2$ is suitable for Laplace nominal noises.
\begin{figure}[htbp]
	\centerline{\includegraphics[width=6.0cm]{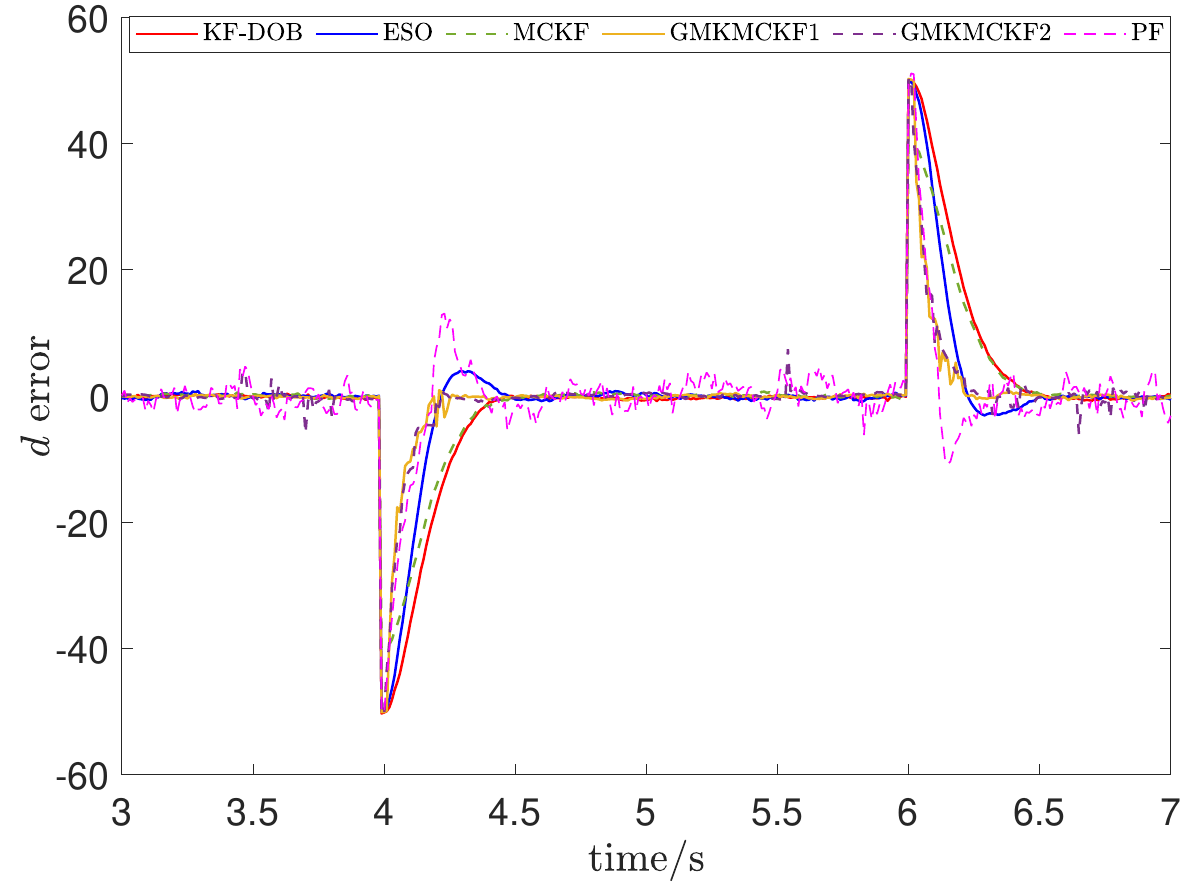}}
	\caption{Disturbance estimation error of different observers. The step-like disturbance is added at $t=4$ seconds and disappears at $t=6$ seconds.}	
	\label{lap_derror}
\end{figure}
\begin{figure}[htbp]
	\centerline{\includegraphics[width=6.0cm]{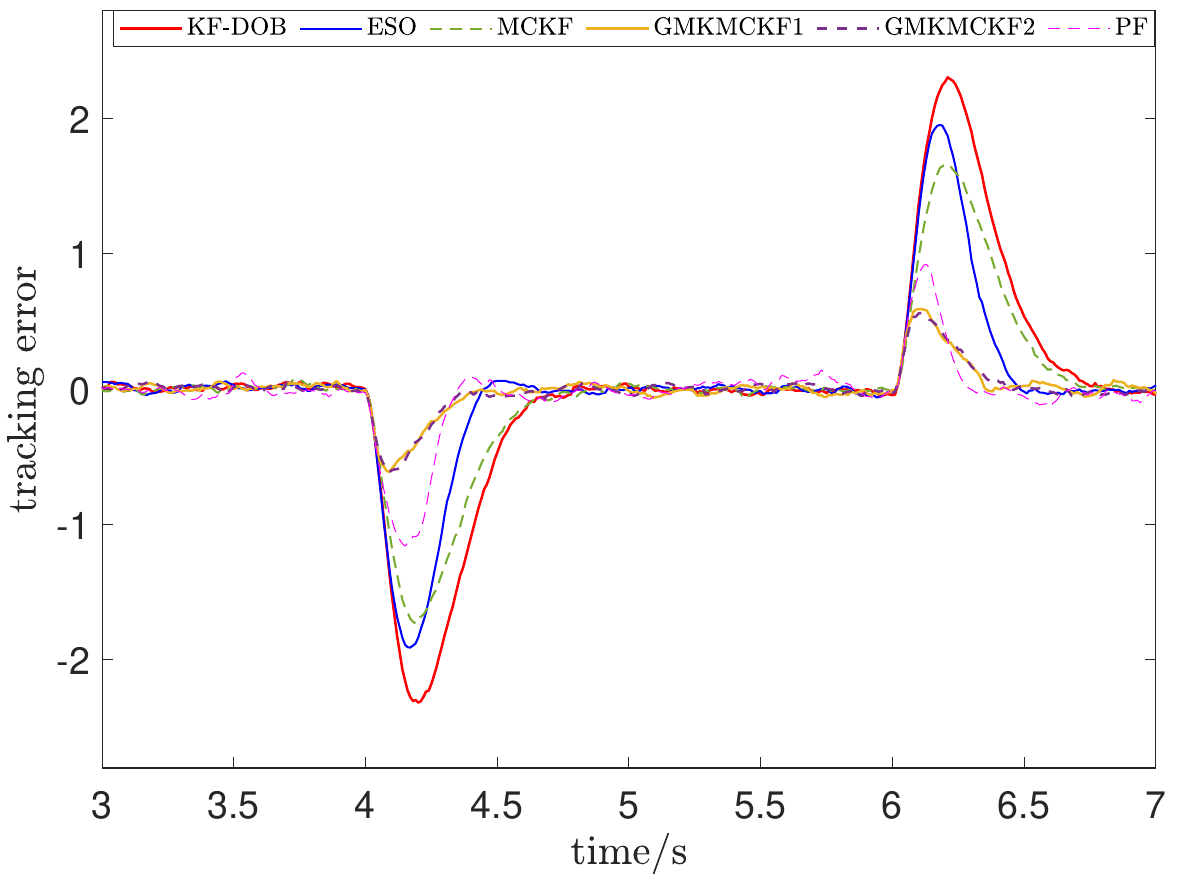}}
	\caption{Tracking error using different observers.}	
	\label{lap_terror}
\end{figure}

\begin{table}[htbp]
	\centering
	\caption{Performance of Different Observers with Laplace Noises.}
	\renewcommand{\arraystretch}{0.95}
	\begin{center}
		\scalebox{0.83}{
		\begin{tabular}{cccccc}
			\hline
			\hline
		\multirow{3}{*}{Observer} & \multirow{3}{*}{\begin{tabular}[c]{@{}c@{}}RMSE\\ of $x_1$\\ (Nm) \end{tabular}} & \multirow{3}{*}{\begin{tabular}[c]{@{}c@{}}RMSE\\ of ${x}_{2}$\\ ($\deg$/s)\end{tabular}} & \multirow{3}{*}{\begin{tabular}[c]{@{}c@{}}RMSE\\ of ${x}_{3}$\\ ($\deg$)\end{tabular}} & \multirow{3}{*}{\begin{tabular}[c]{@{}c@{}}RMSE\\ of $(\theta_d-\theta_a)$\\ ($\deg$)\end{tabular}} & \multirow{3}{*}{\begin{tabular}[c]{@{}c@{}}time\\cost\\ (s)\\ \end{tabular}} \\	                                                                 & & & & &\\
		& & & & &\\
		\hline
			{KF-DOB}&8.0460 & 4.1637&0.0241&0.4951&0.0934\\			
			\hline
			{ESO}& 6.8374 & 3.3209&0.0198&0.3606&0.0927\\
			\hline
			{MCKF}& 6.5352 & 3.4596&0.0249&0.3648&0.1239\\
			\hline
			\textbf{GMKMCKF1}& 	\textbf{4.9361} & 	\textbf{0.7900}&\textbf{0.0083}&\textbf{0.1085}&\textbf{0.1237}\\
			\hline
			{GMKMCKF2}& 	{5.0331} & 	{0.9480}&{0.0088}&{0.1086}&{0.1283}\\
			\hline
			{PF}& 5.3955 & 1.5472&0.0100&0.1473&{3.4388}\\
			\hline
			\hline
		\end{tabular}}
		\label{lapDisturbnace}
	\end{center}
\end{table}
\subsection{Gaussian Distribution with Unknown Disturbance}
We consider the nominal noise as Gaussian distribution for \eqref{observer} with
\begin{equation}
\begin{aligned}
w_{1,k} \sim \mathcal{N}(0,0.01),w_{2,k} \sim \mathcal{N}(0,0.0001)\\
w_{3,k} \sim \mathcal{N}(0,0.0001), v_{k}\sim \mathcal{N}(0,0.0001).
\end{aligned}
\end{equation}
Similarly, we apply $\beta_1=1$ for the disturbance channel, and $\beta_2=\beta_3=\beta_4=10^{8}$ for other channels in the GMKMCKF. Moreover, We employ the shape parameter $\alpha=1.6$ for the GMKMCKF1 and $\alpha=2$ for the GMKMCKF2. The maximum iteration number is set to be $m_{iter}=3$. The RMSE and the average time consumption of different observers are summarized in Table \ref{gauDisturbnace}. One can see the GMKMCKF2 outperforms the others which indicates that $\alpha=2$ is an better option for Gaussian nominal noises.
\begin{table}[htbp]
	\caption{Performance of Different Observers with Gaussian Noises.}
	\renewcommand{\arraystretch}{0.95}
	\begin{center}
	    \scalebox{0.83}{
		\begin{tabular}{cccccc}
		\hline
		\hline
		\multirow{3}{*}{\begin{tabular}[c]{@{}c@{}}Observer\\ \end{tabular}} & \multirow{3}{*}{\begin{tabular}[c]{@{}c@{}}RMSE \\ of ${x}_{1}$\\ (Nm)\end{tabular}} & \multirow{3}{*}{\begin{tabular}[c]{@{}c@{}}RMSE \\ of ${x}_{2}$\\ ($\deg$/s)\end{tabular}} & \multirow{3}{*}{\begin{tabular}[c]{@{}c@{}}RMSE \\ of ${x}_{3}$\\ ($\deg$)\end{tabular}} & \multirow{3}{*}{\begin{tabular}[c]{@{}c@{}}RMSE \\ of $(\theta_d-\theta_a)$\\ ($\deg$)\end{tabular}} & \multirow{3}{*}{\begin{tabular}[c]{@{}c@{}} time\\cost\\(s)\end{tabular}}\\	                             & & & & &\\
		& & & & &\\
			\hline
			{KF-DOB}&8.0397 & 4.1571&0.0241&0.4350&0.007\\			
			\hline
			{ESO}& 6.8507 & 3.3302&0.0198&0.3182&0.006\\
			\hline
			{MCKF}& 6.5286 & 3.4494&0.0250&0.3626&0.0287\\
			\hline
			{GMKMCKF1}& 	4.9871 & 0.8199&0.0083&0.0837&0.0288\\
			\hline
			\textbf{GMKMCKF2}& 	\textbf{4.9193} & 	\textbf{0.7494}&\textbf{0.0082}&\textbf{0.0777}&\textbf{0.0305}\\
			\hline
			{PF}& 5.2685 & 1.3566&0.0097&0.1005&{3.4380}\\
			\hline
			\hline
		\end{tabular}}
		\label{gauDisturbnace}
	\end{center}
\end{table}

To investigate the parameter sensitiveness of the GMKMCKF, we conduct simulations using different $\alpha$ and $\beta_1$ when the nominal noise is Gaussian. The result is shown in Fig. \ref{ParameterSweep}. One can see that the performance of the GMKMCKF is significantly better than the KF-DOB [i.e., RMSE of $x_1$ is 8.0397 as shown in Table \ref{gauDisturbnace}] under a range of kernel parameters. Moreover, we observe that a smaller bandwidth $\beta_1$ is more effective in terms of disturbance mitigation. However, a very small $\beta_1$ may induce the divergence of the fixed-point algorithm (see Theorem \ref{theorem9}). We also find that the estimation result is less sensitive to the shape parameter $\alpha$ compared with $\beta_1$, especially when the bandwidth $\beta_1$ is relatively small. The numerical results indicate that the estimation accuracy is preferable with $\alpha \approx 2$ when the nominal noise is Gaussian.
\begin{figure}[htbp]
	\centerline{\includegraphics[width=7cm]{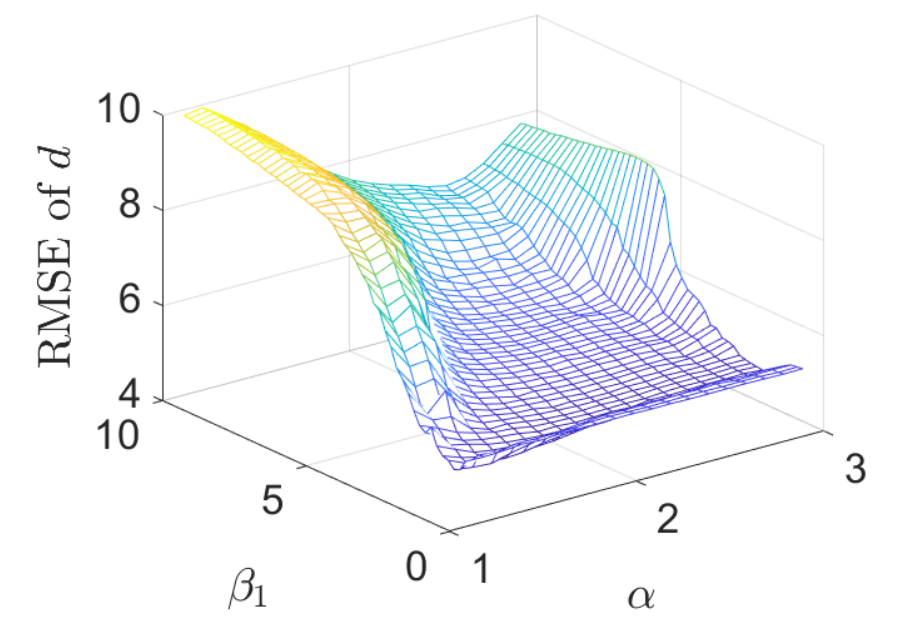}}
	\caption{RMSE of the disturbance (i.e., $x_1$) with different $\alpha$ and $\beta_1$.}	
	\label{ParameterSweep}
\end{figure}

\section{Conclusion}
In this paper, we derive a novel algorithm called the generalized multi-kernel maximum correntropy Kalman filter (GMKMCKF). Our algorithm is derived based on the generalized loss (GL) rather than the conventional mean square error criterion and is capable of situations with some channels contaminated by heavy-tail noise. The proposed algorithm is an extension of the MCKF and MKMCKF but is much more versatile. The convergence of the proposed algorithm can be guaranteed when the kernel bandwidth is bigger than a certain level and its complexity is moderate. Simulations on a robotic manipulator verify the effectiveness of the proposed method. One limitation of this work is that the selection of the kernel parameters is demanding. In the future, we would design adaptive kernel parameter strategies.
\section{Appendix}
\subsection{Proof of Theorem \ref{Theorem1}}
\label{proofTheorem1}
\begin{proof}
	In the case of $0<\alpha \le 2$, the GGD $\kappa_{\alpha,\beta_i}(x_i,y_i)=G_{\alpha,\beta_i}(x_i,y_i)$ is a positive definite kernel (see \cite{b26}, p434), which induces a mapping function $\Phi$ from input space to infinite dimensional reproducing kernel Hibert space (RKHS) with $\kappa_{\alpha,\beta_i}(x_i,y_i)=\Phi(x_{i})^{T}\Phi(y_{i})$. For random pairs $(\mathcal{X}_{i}$, $\mathcal{Y}_{i})$ in \eqref{correntropypair}, we obtain ${C}_{\alpha,\beta_i}(\mathcal{X}_{i},\mathcal{Y}_{i})=\frac{1}{N}\sum_{k=1}^{N}G_{\alpha,\beta_i}\left(x_{i}(k),y_{i}(k)\right)=E\left[\Phi(x_{i})^{T}\Phi(y_{i})\right]$. Thus, $\hat{{C}}(\mathcal{X},\mathcal{Y})=\sum_{i=1}^{l}\beta_i^{\alpha}{C}_{\alpha,\beta_i}(\mathcal{X}_{i},\mathcal{Y}_{i})=\sum_{i=1}^{l}\beta_i^{\alpha}E\left[\Phi(x_{i})^{T}\Phi(y_{i})\right]$. This completes the proof.
\end{proof}
\subsection{Proof of Theorem \ref{Theorem2}}
\label{proofTheorem2}
\begin{proof}
	Equation \eqref{gcloss} can be rewritten as
	\begin{equation}
	\begin{aligned}
	J_{GL}(\mathcal{X},\mathcal{Y})&=\sum_{i=1}^{l}\beta_{i}^{\alpha}\left(1-E\left[G_{\alpha,\beta_i}(x_{i},y_{i})\right]\right)
	\label{GLOSS}
	\end{aligned}
	\end{equation} 
	Taking Taylor series expansion of $G_{\alpha,\beta_i}(x_{i},y_{i})$ and substituting the result into \eqref{GLOSS}, we have
	\begin{equation}\nonumber
	\begin{aligned}
	J_{GL}(\mathcal{X},\mathcal{Y})=\sum_{i=1}^{l}\beta_{i}^{\alpha}\left(1-E\left[\sum_{n=0}^{\infty}\frac{(-1)^{n}}{{\beta}_{i}^{\alpha n}n!}|{x}_{i}-{y}_{i}|^{\alpha n}\right]\right).
	\end{aligned}
	\end{equation}
	It follows that
	\begin{equation}\nonumber
	\lim_{\beta_{i}^{\alpha} \to \infty} J_{GL}(\mathcal{X},\mathcal{Y}) = \sum_{i=1}^{l}E[|{x}_{i}-{y}_{i}|^{\alpha}]=E[\|\mathcal{X}-\mathcal{Y}\|_{\alpha}^{\alpha}]
	\end{equation}
	This completes the proof.
\end{proof}
\subsection{Proof of Theorem \ref{Theorem3}}
\label{proofTheorem3}
\begin{proof}
	Based on the definition of $\mathrm{GCIM}$, we have
	\begin{equation}\nonumber
	\begin{aligned}
	&\mathrm{GCIM}(\mathcal{X},\mathcal{Y})=\bigg(\sum_{i=1}^{l}\beta_{i}^{\alpha}\big(1-{C_{\alpha,\beta_i}}(\mathcal{X}_{i},\mathcal{Y}_{i})\big)\bigg)^{1/2}\\
	&=\bigg(\sum_{i=1}^{l}\beta_i^{\alpha}\Big(1-\frac{1}{N}\sum_{k=1}^{N}G_{\alpha,\beta_i}\big(x_i(k),y_i(k)\big)\Big)\bigg)^{1/2}.
	\end{aligned}
	\end{equation}
	 When $0<\alpha \le 2$, the kernel $\kappa_{\alpha,\beta_i}\big(x_i(k),y_i(k)\big)=G_{\alpha,\beta_i}\big(x_i(k),y_i(k)\big)=\exp \left(-\left|\frac{e_i(k)}{\beta}\right|^{\alpha}\right) \le 1$ is a Mercer kernel (see \cite{b26}, p434) with $e_i(k)=x_i(k)-y_i(k)$, which induces a mapping function $\Phi$ from input space to infinite dimensional reproducing kernel Hilbert space with $\kappa_{\alpha,\beta_i}\big(x_i,y_i\big)=\left<\Phi(x_{i}),\Phi(y_i)\right>_{\mathcal{F}}$. Then, it is clear that $\mathrm{GCIM}$ satisfies: 1) Nonnegativity: $\mathrm{GCIM}(\mathcal{X},\mathcal{Y})\ge 0$; 2) Identities of indiscernibles: $\mathrm{GCIM}(\mathcal{X},\mathcal{Y})=0$ if and only if $\mathcal{X}=\mathcal{Y}$; 3) Symmetry: $\mathrm{GCIM}(\mathcal{X},\mathcal{Y})=\mathrm{GCIM}(\mathcal{Y},\mathcal{X})$. For the triangle inequity: $\mathrm{GCIM}(\mathcal{X},\mathcal{Z}) \le \mathrm{GCIM}(\mathcal{X},\mathcal{Y}) +\mathrm{GCIM}(\mathcal{Y},\mathcal{Z})$. We construct vectors $\tilde{\mathcal{X}}_{i}=[\Phi(x_{i}(1)),\Phi(x_{i}(2)),\ldots,\Phi(x_{i}(N))]^{\prime}$ and $\tilde{\mathcal{Y}}_{i}=[\Phi(y_{i}(1)),\Phi(y_{i}(2)),\ldots,\Phi(y_{i}(N))]^{\prime}$ in Hilbert space $\mathcal{F}^{N}$ for random pairs $(\mathcal{X}_{i}$, $\mathcal{Y}_{i})$. Then, the square of the Euclidean distance $D(\tilde{\mathcal{X}}_i,\tilde{\mathcal{Y}}_i)$ has
	\begin{equation}\nonumber
	\begin{small}
	\begin{aligned}
	&D^{2}(\tilde{\mathcal{X}}_{i},\tilde{\mathcal{Y}}_{i})=\left<\tilde{\mathcal{X}}_{i}-\tilde{\mathcal{Y}}_{i},\tilde{\mathcal{X}}_{i}-\tilde{\mathcal{Y}}_{i}\right>\\
	&=\left<\tilde{\mathcal{X}}_{i},\tilde{\mathcal{X}}_{i}\right>-2\left<\tilde{\mathcal{X}}_{i},\tilde{\mathcal{Y}}_{i}\right>+\left<\tilde{\mathcal{Y}}_{i},\tilde{\mathcal{Y}}_{i}\right>\\
	&=\sum_{k=1}^{N}\kappa_{\alpha,\beta_i}(0)-2\sum_{k=1}^{N}\kappa_{\alpha,\beta_i}(x_{i}(k),y_{i}(k))+\sum_{k=1}^{N}\kappa_{\alpha,\beta_i}(0)\\
	&=2N\left(1-{C}_{\alpha,\beta_i}(\mathcal{X}_{i},\mathcal{Y}_{i})\right)
	\label{rkhs}
	\end{aligned}
	\end{small}
	\end{equation}
	Then, based on the property of Euclidean distance $D(\tilde{\mathcal{X}}_{i},\tilde{\mathcal{Z}}_{i}) \le D(\tilde{\mathcal{X}}_{i},\tilde{\mathcal{Y}}_{i})+D(\tilde{\mathcal{Y}}_{i},\tilde{\mathcal{Z}}_{i})$, and using the Minkowski inequality~\cite{b27}, we have
	\begin{equation}
	\begin{small}
	\begin{aligned}
	\left(\sum_{i=1}^{l}D^{2}(\tilde{\mathcal{X}}_{i},\tilde{\mathcal{Z}}_{i})\right)^{1/2} \le \left(\sum_{i=1}^{l}\left(D(\tilde{\mathcal{X}}_{i},\tilde{\mathcal{Y}}_{i})+D(\tilde{\mathcal{Y}}_{i},\tilde{\mathcal{Z}}_{i})\right)^{2}\right)^{1/2}\\
	\le \left(\sum_{i=1}^{l}D^{2}(\tilde{\mathcal{X}}_{i},\tilde{\mathcal{Y}}_{i})\right)^{1/2}+\left(\sum_{i=1}^{l}D^{2}(\tilde{\mathcal{Y}}_{i},\tilde{\mathcal{Z}}_{i})\right)^{1/2}
	\end{aligned}
	\end{small}
	\end{equation}
	Substituting \eqref{rkhs} into $\mathrm{GCIM}(\mathcal{X},\mathcal{Z})$, we have
	\begin{equation}\nonumber
	\begin{small}
	\begin{aligned}
	&\mathrm{GCIM}(\mathcal{X},\mathcal{Z})=\left(\sum_{i=1}^{l}\beta_{i}^{\alpha}\left(\frac{D^{2}(\tilde{\mathcal{X}}_{i},\tilde{\mathcal{Z}}_{i})}{2N}\right)\right)^{1/2} \\
	&\le\left(\sum_{i=1}^{l}\beta_{i}^{\alpha}\left(\frac{D^{2}(\tilde{\mathcal{X}}_{i},\tilde{\mathcal{Y}}_{i})}{2N}\right)\right)^{1/2} + \left(\sum_{i=1}^{l}\beta_{i}^{\alpha}\left(\frac{D^{2}(\tilde{\mathcal{Y}}_{i},\tilde{\mathcal{Z}}_{i})}{2N}\right)\right)^{1/2}\\
	&=\mathrm{GCIM}(\mathcal{X},\mathcal{Y})+\mathrm{GCIM}(\mathcal{Y},\mathcal{Z}).
	\end{aligned}
	\end{small}
	\end{equation}
	This completes the proof.
\end{proof}

\subsection{Proof of Theorem \ref{Theorem4}}
\label{proofTheorem4}
\begin{proof}
	Taking Taylor series expansion of $G_{\alpha,\beta_i}\left(e_{i}\right)$, one has
	\begin{equation}
	G_{\alpha,\beta_i}\left(e_{i}\right)=\sum_{n=0}^{\infty}\frac{(-1)^{n}}{{\beta}_{i}^{\alpha n}n!}|e_i|^{\alpha n}.
	\label{Taylor}
	\end{equation}
	Substituting \eqref{Taylor} into \eqref{gcloss1}, we have 
	\begin{equation}\nonumber
	\begin{aligned}
	\lim_{\beta_{i}^{\alpha} \to \infty}J_{GL}(e)=\sum_{i=1}^{l}|e_{i}|^{\alpha}=\|e\|_{\alpha}^{\alpha}.
	\end{aligned}
	\end{equation}
	In this case, $J_{GL}(e)$ is identical to $J_{LMP}(e)$ with $\alpha=p$. The Hessian matrix of $J_{GL}(e)$ has 
	\begin{equation}\nonumber
	\begin{aligned}
	H(J_{GL})&=\frac{\partial \nabla J_{GL}}{\partial e}=\begin{bmatrix}
	&\zeta_{1}, &0,&\ldots,0\\
	&0, &\zeta_{1},&\ldots,0\\
	&\vdots, &\vdots,&\ldots,\vdots\\
	&0, &0,&\ldots,\zeta_{l}\\
	\end{bmatrix}
	\end{aligned}
	\end{equation}
	with 
	\begin{equation}\nonumber
	\zeta_{i}=-\frac{\alpha e^{-\frac{|e_{i}|^{\alpha}}{\beta_{i}^{\alpha}}}|e_{i}|^{\alpha}\left(\alpha|e_{i}|^{\alpha}-(\alpha-1) \beta_{i}^{\alpha}\right)}{\beta_{i}^{\alpha} e_{i}^{2}}, i=0,1,\ldots,l.
	\end{equation}
	One can see that $H(J_{GL})$ is a diagonal matrix. When $0<\alpha \le 1$, we have $H(J_{GL}) \prec 0$ for any $e \neq 0$. Thus, $H(J_{GL})$ is concave in this case. When $\alpha>1$ and $|e_{i}|\le (\frac{\alpha-1}{\alpha})^{\frac{1}{\alpha}}\beta_{i}$, we have $H(J_{GL}) \succcurlyeq 0$. Then, $H(J_{GL})$ is convex in this situation. This completes the proof.
\end{proof}

\subsection{Proof of Theorem \ref{Theorem5}}
\label{proofTheorem5}
\begin{proof}
	A differentiable function $f: \mathbb{R}^{l} \rightarrow \mathbb{R}$ is said to be invex, if and only if~\cite{b48}
	\begin{equation}\nonumber
	f(x_2) \ge f(x_1) + q(x_1,x_2)^{T}\nabla f(x_1)
	\end{equation}
	where $\nabla f(x_1)$ is the gradient of $f(x_1)$ with respect to $x_1$, and $ q(x_1,x_2)$ is the vector valued function. In the case of $N=1$, the GL and its gradient are shown in \eqref{gcloss1} and \eqref{infl}. Then, we have $J_{GL}(e)>0$ for any $e \neq \textbf{0}$ and $ J_{GL}(\textbf{0})=0$ where $\textbf{0}$ is the zero vector. This indicates that $J_{GL}(\textbf{0})$ is a global minima of $J_{GL}$. We construct the vector valued function $q (e_1, e_2)$ as follow:
	\begin{equation}\nonumber
	\begin{aligned}
	q (e_1, e_2)=\left\{\begin{array}{cc}
	\frac{J_{GL}(e_2)-J_{GL}(e_1)}{\nabla J_{GL}(e_1)^{T}\nabla J_{GL}(e_1)}\nabla J_{GL}(e_1), ~\quad &e_1 \neq \textbf{0}\\
	\textbf{0}, ~\quad &e_1 = \textbf{0}
	\end{array}\right..
	\end{aligned}
	\end{equation}
	Then, it holds that
	\begin{equation}\nonumber
	J_{GL}(e_2) \ge J_{GL}(e_1) +q (e_1, e_2)^{T}\nabla J_{GL}(e_1)
	\end{equation}
	for any $e_1, e_2 \in \mathbb{R}^{l}$. This completes the proof.
\end{proof}
\subsection{Proof of Theorem \ref{Theorem7}}
\label{proofTheorem7}
\begin{proof}
	When $\alpha=2$, and $\beta_{i} \rightarrow \infty$, we have ${M}_{p}={I}_{n \times n}$ and ${M}_{r}={I}_{m \times m}$. Then, the GMKMCKF is equal to the KF. As $\alpha=2$ and $\beta_{i}=\sqrt{2}\sigma$ for all $i$, we have ${M}_{p}={diag}[G_{\sigma}(e_{1,k},\ldots,e_{n,k})]$, ${M}_{r}={diag}[G_{\sigma}(e_{n+1,k},\ldots,e_{n+m,k})]$. Then, it is identical to the MCKF. When $\alpha=2$ and $\beta_i=\sqrt{2}\sigma_i$, we have ${M}_{p}={diag}[G_{\sigma_{p}}(e_{1,k},\ldots,e_{n,k})]$, ${M}_{r}={diag}[G_{\sigma_{r}}(e_{n+1,k},\ldots,e_{n+m,k})]$ with $\sigma_{p}=[\sigma_1,\ldots,\sigma_n]^{\prime}$ and $\sigma_{r}=[\sigma_{n+1},\ldots,\sigma_{n+m}]^{\prime}$. In this case, it becomes the MKMCKF.
\end{proof}
\subsection{Proof of Theorem \ref{theorem8}}
\label{proofTheorem8}
\begin{proof}
	Due to the fact that the induced norm is compatible with the vector $\ell_p$ norm, then, we have
	\begin{equation}
	\|f(x)\|_1= \|R_{ww}^{-1}P_{wt}\|_1 \le  \|R_{ww}^{-1}\|_1 \|P_{wt}\|_1
	\label{fx}
	\end{equation}
	where   $\|R_{ww}^{-1}\|_1$ is the maximum absolute column of matrix $R_{ww}^{-1}$. Based on the matrix theory, the following inequity holds:
	\begin{equation}
	\|R_{ww}^{-1}\|_1 \le \sqrt{n} 	\|R_{ww}^{-1}\|_2 = \sqrt{n}\lambda_{max}[R_{ww}^{-1}]
	\label{RWW}
	\end{equation}
	where $\|R_{ww}^{-1}\|_2$ is the 2-norm of $R_{ww}^{-1}$ which is equal to the maximum  eigenvalue of the matrix. Then, we have
	\begin{equation}
	\begin{aligned}
	\lambda_{max}[R_{ww}^{-1}]&=\frac{1}{\lambda_{min}[R_{ww}]}=\frac{1}{\lambda_{min}[\sum_{i=1}^{l}{w_i}^{\prime}G_{\beta_i}(e_i){w_i}]}\\ &\overset{(\mathrm{I})}{\le}\frac{1}{\lambda_{min}\Big{[}\sum_{i=1}^{l}{w}_i^{\prime}G_{\beta_i}\Big(|t_i|+\gamma \|w_i^{\prime}\|_1\Big){w}_i\Big{]}}
	\label{RWW_lam}
	\end{aligned}
	\end{equation}
	where (I) comes from $|e_i|=|t_i-w_ix|\le |t_i| + |w_ix| \le |t_i| + \|x\|_1\|w_i^{\prime}\|_1  \le |t_i|+ \gamma \|w_i^{\prime}\|_1$. In addition, it holds that 
	\begin{equation}
	\begin{aligned}
	\|P_{wt}\|_1&=\Big{\|}\sum_{i=1}^{l}{w}_i^{\prime}G_{\beta_i}(e_i){t}_i\Big{\|_1}  \\
	&\overset{(\mathrm{II})}{\le} \sum_{i=1}^{l}\Big{\|}{w}_i^{\prime}G_{\beta_i}(e_i){t}_i\Big{\|_1}\overset{(\mathrm{III})}{\le} \sum_{i=1}^{l}{\|}{w}_i^{\prime}{\|_1}|{t}_i|
	\label{PWT}
	\end{aligned}
	\end{equation}
	where (II) comes from the convexity of $\ell_1$ norm, and (III) comes from $G_{\beta_i}(e_i) \le 1$ for any $e_i$. Substituting \eqref{RWW}, \eqref{RWW_lam}, and \eqref{PWT} into \eqref{fx}, we obtain
	\begin{equation}
	\|f(x)\|_1 \le \bar{\phi}(\bar{\beta}) = \frac{\sqrt{n}\sum_{i=1}^{l}{\|}{w}_i^{\prime}{\|_1}|{t}_i|}{\lambda_{min}\Big{[}\sum_{i=1}^{l}{w}_i^{\prime}G_{\beta_i}\Big{(}\gamma \|w_i^{\prime}\|_1+|t_i|\Big{)}{w}_i\Big{]}}
	\label{phi}
	\end{equation}
	with $\bar{\beta}=[\beta_1,\beta_2,\cdots,\beta_l]^{\prime}$. If we restrain $\beta_1=\beta_2=\cdots=\beta_l=\beta$, equation \eqref{phi} degenerates to a function of $\beta$, i.e.,
	\begin{equation}\nonumber
	\|f(x)\|_1 \le  \phi({\beta}) = \frac{\sqrt{n}\sum_{i=1}^{l}{\|}{w}_i^{\prime}{\|_1}|{t}_i|}{\lambda_{min}\Big{[}\sum_{i=1}^{l}{w}_i^{\prime}G_{\beta}\Big{(}\gamma \|w_i^{\prime}\|_1+|t_i|\Big{)}{w}_i\Big{]}}
	\label{phinew}
	\end{equation}
	where $\phi({\beta})$ is a continuous and monotonically decreasing function of $\beta$. It satisfies $\lim\limits_{\beta \rightarrow 0^{+}} \phi({\beta})= \infty$  and
	\begin{equation}\nonumber
	\lim\limits_{\beta \rightarrow \infty} \phi({\beta})= \xi = \frac{\sqrt{n}\sum_{i=1}^{l}{\|}{w}_i^{\prime}{\|_1}|{t}_i|}{\lambda_{min}\Big{[}\sum_{i=1}^{l}{w}_i^{\prime}{w}_i\Big{]}}.
	\end{equation}
	Therefore, if $\gamma> \xi$, the equation $\phi(\beta)=\gamma$ has a unique solution $\beta^{*}$ over $(0, \infty)$. Note that $G_{\beta_i}(e_i) \ge G_{\beta^{*}}(e_i)$ with $\beta_i \ge \beta^{*}$
	and $\lambda_{\min}[R_{ww}]=\lambda_{min}[\sum_{i=1}^{l}{w_i}^{\prime}G_{\beta_i}(e_i){w_i}]>0$ for any value of $\beta_i$ (this is the assumption). Hence one has $\lambda_{\min}\big[\sum_{i=1}^{l} {w}_i^{\prime} G_{\beta_i}\big{(}\gamma \|w_i^{\prime}\|_1+|t_i|\big{)}{w}_i\big] \ge \lambda_{\min}\big[\sum_{i=1}^{l} {w}_i^{\prime} G_{\beta}\big{(}\gamma \|w_i^{\prime}\|_1+|t_i|\big{)}{w}_i\big]$, and then
	$\bar{\phi}(\bar{\beta}) \le 	{\phi}({\beta}^{*})=\xi$
	with $\beta_i \ge \beta^{*}$ for $i=1,2,\cdots,\l$. Finally, we have $\|f(x)\|_1 \le \bar{\phi}(\bar{\beta}) \le 	{\phi}({\beta}^{*})=\xi \le \gamma$ for all $x \in \{x \in \mathbb{R}^{n}: \|x\|_1 \le \gamma\}$. This completes the proof.
\end{proof}

\subsection{Proof of Theorem \ref{theorem9}}
\label{proofTheorem9}
\begin{proof}
	By Theorem \ref{theorem8}, we have $f(x) \le \gamma$ if $\|x\|_1 \le \gamma$ and $\beta_i \ge \beta^{*}$ for all $i$. To prove $\| \nabla_{x} f(x) \|_1 \le \eta$, it is sufficient to prove $\forall j$, $\|\frac{\partial }{\partial x_j}\|_{1} \le \eta$. Based on the knowledge that $\frac{\partial \mathbf{U}^{-1}}{\partial \mathrm{x}}=-\mathbf{U}^{-1}\frac{\partial \mathbf{U}}{\partial \mathrm{x}}\mathbf{U}^{-1}$  and  $\frac{\partial \mathbf{U}\mathbf{V}}{\partial \mathrm{x}}= \frac{\partial \mathbf{U}}{\mathrm{x}}\mathbf{V}+\mathbf{U}\frac{\partial \mathbf{V}}{\mathrm{x}}$ where $\mathbf{U}$ and $\mathbf{V}$ are matrices and $\mathrm{x}$ is a scalar, we have 
	\begin{equation}
	\begin{small}
	\begin{aligned}
	&\frac{\partial }{\partial {x}_j} f(x)=\frac{\partial}{{x}_j}R_{ww}^{-1} P_{wt}\\
	&=-R_{ww}^{-1} \Big{[}\frac{\partial }{\partial {x}_j}R_{ww}\Big{]} R_{ww}^{-1} P_{wt} + R_{ww}^{-1}\Big{[}\frac{\partial }{{x}_j} P_{wt}\Big{]}\\
	&=-R_{ww}^{-1}\Big{[}\frac{\partial}{\partial {x}_j} \sum_{i=1}^{l}{w}_i^{\prime}G_{\beta_i}(e_i){w}_i\Big{]} f(x) + R_{ww}^{-1} \Big{[}\frac{\partial }{{x}_j} \sum_{i=1}^{l}{w}_i^{\prime}G_{\beta_i}(e_i){t}_i\Big{]}\\
	&= -R_{ww}^{-1}\Big{[} \sum_{i=1}^{l}{w}_i^{\prime}\Big(\frac{2e_i}{\beta_i^2}w_{i,j}G_{\beta_i}(e_i)\Big){w}_i\Big{]} f(x) \\&+ R_{ww}^{-1} \Big{[}\sum_{i=1}^{l}{w}_i^{\prime}\Big(\frac{2e_i}{\beta_i^2}w_{i,j}G_{\beta_i}(e_i)\Big){t}_i\Big{]}
	\label{gradient}
	\end{aligned}
	\end{small}
	\end{equation}
	where $w_{i,j}$ is the $j$-th element of $w_i$ and $x_j$ is $j$-th element of vector $x$. Taking one norm in both sides of \eqref{gradient}, we have 
	\begin{equation}
	\begin{aligned}
	&\|\frac{\partial }{\partial x_j} f(x) \|_{1}= \Big{\|}\Big{\{}-R_{ww}^{-1}\Big{[} \sum_{i=1}^{l}{w}_i^{\prime}\Big(\frac{2e_i}{\beta_i^2}w_{i,j}G_{\beta_i}(e_i)\Big){w}_i\Big{]} f(x) \\&+ R_{ww}^{-1} \Big{[}\sum_{i=1}^{l}{w}_i^{\prime}\Big(\frac{2e_i}{\beta_i^2}w_{i,j}G_{\beta_i}(e_i)\Big){t}_i\Big{]} \Big{\}}\Big{\|_1} \\
	& \le     
	\Big{\|}-R_{ww}^{-1}\Big{[} \sum_{i=1}^{l}{w}_i^{\prime}\Big(\frac{2e_i}{\beta_i^2}w_{i,j}G_{\beta_i}(e_i)\Big){w}_i\Big{]} f(x)\Big{\|_1} \\& + \Big{\|}R_{ww}^{-1} \Big{[}\sum_{i=1}^{l}{w}_i^{\prime}\Big(\frac{2e_i}{\beta_i^2}w_{i,j}G_{\beta_i}(e_i)\Big){t}_i\Big{]} \Big{\|_1}
	\label{gradient1}
	\end{aligned}
	\end{equation}
	Moreover, we have
	\begin{equation}
	\begin{aligned}
	&\Big{\|} -R_{ww}^{-1}\Big{[} \sum_{i=1}^{l}{w}_i^{\prime}\Big(\frac{2e_i}{\beta_i^2}w_{i,j}G_{\beta_i}(e_i)\Big){w}_i\Big{]} f(x) \Big{\|}_1 \\
	&\le 2\|R_{ww}^{-1}\|_1 \Big{\|}\Big{[} \sum_{i=1}^{l}{w}_i^{\prime}\Big(\frac{e_i}{\beta_i^2}w_{i,j}G_{\beta_i}(e_i)\Big){w}_i\Big{]} \Big{\|}_1 \|f(x)\|_1 \\
	&\overset{(\mathrm{IV})}{\le} 2\gamma\|R_{ww}^{-1}\|_1  \sum_{i=1}^{l}\Big{\|}{w}_i^{\prime}\Big(\frac{ e_i}{\beta_i^2}w_{i,j}G_{\beta_i}(e_i)\Big){w}_i \Big{\|}_1 \\
	&\overset{(\mathrm{V})}{\le} 2\gamma\|R_{ww}^{-1}\|_1  \sum_{i=1}^{l}\frac{|t_i|+\gamma\|w_i^{\prime}\|_{1}}{\beta_i^2}\|w_i^{\prime}\|_1\|w_i^{\prime}w_i\|_1 \\
	\label{gra_part1}
	\end{aligned}
	\end{equation}
	where (IV) comes from the convexity of vector $\ell_1$ norm and $f(x) \le \gamma$, (V) comes from $|e_iw_{i,j}| \le (|t_i|+\gamma \|w_i^{\prime}\|_1 )\|w_i^{\prime}\|_1$ and $G_{\beta_i}(e_i) \le 1$.
	Similarly, we have
	\begin{equation}
	\begin{aligned}
	&\Big{\|}R_{ww}^{-1} \Big{[}\sum_{i=1}^{l}{w}_i^{\prime}\Big(\frac{2e_i}{\beta_i^2}w_{i,j}G_{\beta_i}(e_i)\Big){t}_i\Big{]}
	\Big{\|}_1  \le  \\
	&2\|R_{ww}^{-1}\|_{1} \sum_{i=1}^{l} \frac{|t_i|+\gamma\|w_i^{\prime}\|_{1}}{\beta_i^2} \|w_i^{\prime}\|_1\|w_i^{\prime}t_i\|_1 
	\label{gra_part2}
	\end{aligned}
	\end{equation}
	Substituting \eqref{RWW}, \eqref{RWW_lam}, \eqref{gra_part1}, and \eqref{gra_part2} into \eqref{gradient1}, we obtain
	\begin{equation}
	\begin{aligned}
	&\|\frac{\partial }{\partial x_j} f(x) \|_{1} \le \bar{\psi}({\bar{\beta}})\\&= \frac{2\sqrt{n} \sum_{i=1}^{l}\frac{|t_i|+\gamma\|w_i^{\prime}\|_{1}}{\beta_i^2} \|w_i^{\prime}\|_1  \big{(}\gamma\|w_i^{\prime}w_i \|_1+ \|w_i^{\prime}t_i\|_1 \big{)} }{\lambda_{\min}\Big{[} 
		\sum_{i=1}^{l}{w}_i^{\prime}G_{\beta_i}\Big{(}\gamma \|w_i^{\prime}\|_1+|t_i|\Big{)}{w}_i\Big{]} }.
	\label{psibar}
	\end{aligned}
	\end{equation}
	If we set all kernel bandwidths to be the same with $\beta_i=\beta$ for all $i$, we arrive at 
	\begin{equation}
	\begin{aligned}
	&\|\frac{\partial }{\partial x_j} f(x) \|_{1} \le {\psi}({\beta})\\&
	= \frac{2\sqrt{n} \sum_{i=1}^{l}(|t_i|+\gamma\|w_i^{\prime}\|_{1}) \|w_i^{\prime}\|_1  \big{(}\gamma\|w_i^{\prime}w_i \|_1+ \|w_i^{\prime}t_i\|_1 \big{)} }{\beta^2\lambda_{\min}\Big{[} 
		\sum_{i=1}^{l}{w}_i^{\prime}G_{\beta}\Big{(}\gamma \|w_i^{\prime}\|_1+|t_i|\Big{)}{w}_i\Big{]} }.
	\label{psi}
	\end{aligned}
	\end{equation}
	One can see that \eqref{psi} is a continuous and monotonically decreasing function satisfying $\lim\limits_{\beta \to 0^{+}} \psi(\beta) = \infty$ and $\lim\limits_{\beta \to \infty} \psi(\beta) = 0$. This implies that $\psi(\beta)=\eta$ has a unique solution $\beta^{+}$ and $\psi(\beta) \le \eta$ if $
	\beta \ge
	\beta^{+}$. Observing \eqref{psibar} and \eqref{psi}, we have $\bar{\psi}(\bar{\beta}) \le \psi(\beta^{+})$ if $\beta_i \ge \beta^{+}$ for all $i$. This reveals that $0<\bar{\psi}(\bar{\beta}) \le \eta$ if $\forall i, \beta_i \ge \beta^{+}$. This completes the proof.
\end{proof}

\bibliographystyle{myIEEEtran}
\bibliography{GMKMCKF_twoColu_final}

\begin{thebibliography}{10}
\providecommand{\url}[1]{#1}
\csname url@samestyle\endcsname
\providecommand{\newblock}{\relax}
\providecommand{\bibinfo}[2]{#2}
\providecommand{\BIBentrySTDinterwordspacing}{\spaceskip=0pt\relax}
\providecommand{\BIBentryALTinterwordstretchfactor}{4}
\providecommand{\BIBentryALTinterwordspacing}{\spaceskip=\fontdimen2\font plus
\BIBentryALTinterwordstretchfactor\fontdimen3\font minus
  \fontdimen4\font\relax}
\providecommand{\BIBforeignlanguage}[2]{{%
\expandafter\ifx\csname l@#1\endcsname\relax
\typeout{** WARNING: IEEEtran.bst: No hyphenation pattern has been}%
\typeout{** loaded for the language `#1'. Using the pattern for}%
\typeout{** the default language instead.}%
\else
\language=\csname l@#1\endcsname
\fi
#2}}
\providecommand{\BIBdecl}{\relax}
\BIBdecl

\bibitem{b1}
W.-H. Chen, D.~J. Ballance, P.~J. Gawthrop, and J.~O'Reilly, ``A nonlinear
  disturbance observer for robotic manipulators,'' \emph{IEEE Transactions on
  industrial Electronics}, vol.~47, no.~4, pp. 932--938, 2000.

\bibitem{b2}
S.~Komada, N.~Machii, and T.~Hori, ``Control of redundant manipulators
  considering order of disturbance observer,'' \emph{IEEE Transactions on
  Industrial Electronics}, vol.~47, no.~2, pp. 413--420, 2000.

\bibitem{b3}
J.~Ishikawa and M.~Tomizuka, ``A novel add-on compensator for cancellation of
  pivot nonlinearities in hard disk drives,'' \emph{IEEE Transactions on
  Magnetics}, vol.~34, no.~4, pp. 1895--1897, 1998.

\bibitem{b4}
C.-K. Lin, ``Mixed {H2/H}{$\infty$} autopilot design of bank-to-turn missiles
  using fuzzy basis function networks,'' \emph{Fuzzy Sets and Systems}, vol.
  158, no.~20, pp. 2268--2287, 2007.

\bibitem{b5}
H.~Yoon and B.~N. Agrawal, ``Adaptive control of uncertain hamiltonian
  multi-input multi-output systems: With application to spacecraft control,''
  \emph{IEEE Transactions on Control Systems Technology}, vol.~17, no.~4, pp.
  900--906, 2009.

\bibitem{b6}
S.~Li, J.~Yang, W.-H. Chen, and X.~Chen, \emph{Disturbance Observer-based
  Control: Methods and Applications}.\hskip 1em plus 0.5em minus 0.4em\relax
  CRC press, 2014.

\bibitem{b7}
K.~Ohishi, M.~Nakao, K.~Ohnishi, and K.~Miyachi, ``Microprocessor-controlled dc
  motor for load-insensitive position servo system,'' \emph{IEEE Transactions
  on Industrial Electronics}, no.~1, pp. 44--49, 1987.

\bibitem{b8}
Y.~Huang and W.~Xue, ``Active disturbance rejection control: Methodology and
  theoretical analysis,'' \emph{ISA Transactions}, vol.~53, no.~4, pp.
  963--976, 2014.

\bibitem{b9}
J.~Han, ``From {PID} to active disturbance rejection control,'' \emph{IEEE
  Transactions on Industrial Electronics}, vol.~56, no.~3, pp. 900--906, 2009.

\bibitem{b10}
C.~Johnson, ``Further study of the linear regulator with disturbances--the case
  of vector disturbances satisfying a linear differential equation,''
  \emph{IEEE Transactions on Automatic Control}, vol.~15, no.~2, pp. 222--228,
  1970.

\bibitem{b11}
C.~Mitsantisuk, K.~Ohishi, S.~Urushihara, and S.~Katsura, ``Kalman filter-based
  disturbance observer and its applications to sensorless force control,''
  \emph{Advanced Robotics}, vol.~25, no. 3-4, pp. 335--353, 2011.

\bibitem{b12}
Q.-C. Zhong, A.~Kuperman, and R.~Stobart, ``Design of {UDE-based} controllers
  from their two-degree-of-freedom nature,'' \emph{International Journal of
  Robust and Nonlinear Control}, vol.~21, no.~17, pp. 1994--2008, 2011.

\bibitem{b13}
J.-H. She, X.~Xin, and Y.~Pan, ``Equivalent-input-disturbance
  approach—analysis and application to disturbance rejection in dual-stage
  feed drive control system,'' \emph{IEEE/ASME Transactions on Mechatronics},
  vol.~16, no.~2, pp. 330--340, 2010.

\bibitem{c9}
J.~Han, ``{Extended state observer for a class of uncertain plants}.
  ({Chinese}),'' \emph{Control Decision}, vol.~10, no.~1, pp. 85--88, 1995.

\bibitem{c13}
C.~Masreliez and R.~Martin, ``Robust {Bayesian} estimation for the linear model
  and robustifying the kalman filter,'' \emph{IEEE Transactions on Automatic
  Control}, vol.~22, no.~3, pp. 361--371, 1977.

\bibitem{c14}
L.~Chang, K.~Li, and B.~Hu, ``Huber’s {M}-estimation-based process
  uncertainty robust filter for integrated {INS/GPS},'' \emph{IEEE Sensors
  Journal}, vol.~15, no.~6, pp. 3367--3374, 2015.

\bibitem{c15}
L.~Chang, B.~Hu, G.~Chang, and A.~Li, ``Huber-based novel robust unscented
  kalman filter,'' \emph{IET Science, Measurement \& Technology}, vol.~6,
  no.~6, pp. 502--509, 2012.

\bibitem{c16}
Y.~Huang, Y.~Zhang, N.~Li, Z.~Wu, and J.~A. Chambers, ``A novel robust
  student's $t$-based kalman filter,'' \emph{IEEE Transactions on Aerospace and
  Electronic Systems}, vol.~53, no.~3, pp. 1545--1554, 2017.

\bibitem{c17}
Y.~Huang, Y.~Zhang, N.~Li, and J.~Chambers, ``Robust student’s t based
  nonlinear filter and smoother,'' \emph{IEEE Transactions on Aerospace and
  Electronic Systems}, vol.~52, no.~5, pp. 2586--2596, 2016.

\bibitem{b14}
B.~Chen, X.~Wang, N.~Lu, S.~Wang, J.~Cao, and J.~Qin, ``Mixture correntropy for
  robust learning,'' \emph{Pattern Recognition}, vol.~79, pp. 318--327, 2018.

\bibitem{b15}
A.~Singh and J.~C. Principe, ``Using correntropy as a cost function in linear
  adaptive filters,'' in \emph{2009 International Joint Conference on Neural
  Networks}.\hskip 1em plus 0.5em minus 0.4em\relax IEEE, 2009, pp. 2950--2955.

\bibitem{b16}
X.~Liu, Z.~Ren, H.~Lyu, Z.~Jiang, P.~Ren, and B.~Chen, ``Linear and nonlinear
  regression-based maximum correntropy extended kalman filtering,'' \emph{IEEE
  Transactions on Systems, Man, and Cybernetics: Systems}, vol.~51, no.~5, pp.
  3093--3102, 2019.

\bibitem{b17}
X.~Liu, H.~Qu, J.~Zhao, and P.~Yue, ``Maximum correntropy square-root cubature
  kalman filter with application to {SINS/GPS} integrated systems,'' \emph{ISA
  Transactions}, vol.~80, pp. 195--202, 2018.

\bibitem{b18}
B.~Chen, X.~Liu, H.~Zhao, and J.~C. Principe, ``Maximum correntropy kalman
  filter,'' \emph{Automatica}, vol.~76, pp. 70--77, 2017.

\bibitem{b19}
W.~Liu, P.~P. Pokharel, and J.~C. Principe, ``Correntropy: Properties and
  applications in non-gaussian signal processing,'' \emph{IEEE Transactions on
  signal processing}, vol.~55, no.~11, pp. 5286--5298, 2007.

\bibitem{c18}
R.~Izanloo, S.~A. Fakoorian, H.~S. Yazdi, and D.~Simon, ``Kalman filtering
  based on the maximum correntropy criterion in the presence of non-gaussian
  noise,'' in \emph{2016 Annual Conference on Information Science and Systems
  (CISS)}, 2016, pp. 500--505.

\bibitem{c19}
B.~Hou, Z.~He, X.~Zhou, H.~Zhou, D.~Li, and J.~Wang, ``Maximum correntropy
  criterion kalman filter for $\alpha$-jerk tracking model with non-gaussian
  noise,'' \emph{Entropy}, vol.~19, no.~12, 2017.

\bibitem{d19}
M.~Kulikova\vspace{0mm}, ``Chandrasekhar-based maximum correntropy kalman
  filtering with the adaptive kernel size selection,'' \emph{IEEE Transactions
  on Automatic Control}, vol.~65, no.~2, pp. 741--748, 2020.

\bibitem{d20}
M.~Kulikova, ``Sequential maximum correntropy kalman filtering,'' \emph{Asian
  Journal of Control}, vol.~22, no.~1, pp. 25--33, 2020.

\bibitem{d21}
M.~Kulikova, ``Square-root algorithms for maximum correntropy estimation of
  linear discrete-time systems in presence of {non-Gaussian} noise,''
  \emph{Systems \& Control Letters}, vol. 108, pp. 8--15, 2017.

\bibitem{d22}
X.~Liu, H.~Qu, J.~Zhao, and B.~Chen, ``Extended kalman filter under maximum
  correntropy criterion,'' in \emph{2016 International Joint Conference on
  Neural Networks (IJCNN)}.\hskip 1em plus 0.5em minus 0.4em\relax IEEE, 2016,
  pp. 1733--1737.

\bibitem{d23}
X.~Liu, B.~Chen, B.~Xu, Z.~Wu, and P.~Honeine, ``Maximum correntropy unscented
  filter,'' \emph{International Journal of Systems Science}, vol.~48, no.~8,
  pp. 1607--1615, 2017.

\bibitem{d24}
W.~Qin, X.~Wang, and N.~Cui, ``Maximum correntropy sparse {Gauss--Hermite}
  quadrature filter and its application in tracking ballistic missile,''
  \emph{IET Radar, Sonar \& Navigation}, vol.~11, no.~9, pp. 1388--1396, 2017.

\bibitem{d25}
X.~Liu, B.~Chen, H.~Zhao, J.~Qin, and J.~Cao, ``Maximum correntropy kalman
  filter with state constraints,'' \emph{IEEE Access}, vol.~5, pp.
  25\,846--25\,853, 2017.

\bibitem{d26}
S.~Fakoorian, M.~Moosavi, R.~Izanloo, V.~Azimi, and D.~Simon, ``Maximum
  correntropy criterion constrained kalman filter,'' in \emph{Dynamic Systems
  and Control Conference}, vol. 58288.\hskip 1em plus 0.5em minus 0.4em\relax
  American Society of Mechanical Engineers, 2017, p. V002T04A008.

\bibitem{e21}
G.~Wang, R.~Xue, and J.~Wang, ``A distributed maximum correntropy kalman
  filter,'' \emph{Signal Processing}, vol. 160, pp. 247--251, 2019.

\bibitem{e22}
X.~Fan, G.~Wang, J.~Han, and Y.~Wang, ``Interacting multiple model based on
  maximum correntropy kalman filter,'' \emph{IEEE Transactions on Circuits and
  Systems II: Express Briefs}, vol.~68, no.~8, pp. 3017--3021, 2021.

\bibitem{b23}
B.~Chen, L.~Xing, H.~Zhao, N.~Zheng, J.~C. Pr{\i} \emph{et~al.}, ``Generalized
  correntropy for robust adaptive filtering,'' \emph{IEEE Transactions on
  Signal Processing}, vol.~64, no.~13, pp. 3376--3387, 2016.

\bibitem{b24}
Y.~Zhu, H.~Zhao, X.~Zeng, and B.~Chen, ``Robust generalized maximum correntropy
  criterion algorithms for active noise control,'' \emph{IEEE/ACM Transactions
  on Audio, Speech, and Language Processing}, vol.~28, pp. 1282--1292, 2020.

\bibitem{b25}
S.~Hakimi and G.~Abed~Hodtani, ``Generalized maximum correntropy detector for
  non-gaussian environments,'' \emph{International Journal of Adaptive Control
  and Signal Processing}, vol.~32, no.~1, pp. 83--97, 2018.

\bibitem{b47}
J.~Wang, D.~Lyu, Z.~He, H.~Zhou, and D.~Wang, ``Cauchy kernel-based maximum
  correntropy kalman filter,'' \emph{International Journal of Systems Science},
  vol.~51, no.~16, pp. 3523--3538, 2020.

\bibitem{b52}
H.~Song, D.~Ding, H.~Dong, and X.~Yi, ``Distributed filtering based on
  cauchy-kernel-based maximum correntropy subject to randomly occurring
  cyber-attacks,'' \emph{Automatica}, vol. 135, p. 110004, 2022.

\bibitem{b20}
S.~Li, D.~Shi, W.~Zou, and L.~Shi, ``Multi-kernel maximum correntropy kalman
  filter,'' \emph{IEEE Control Systems Letters}, vol.~6, pp. 1490--1495, 2021.

\bibitem{b46}
S.~Li, L.~Li, D.~Shi, W.~Zou, P.~Duan, and L.~Shi, ``Multi-kernel maximum
  correntropy kalman filter for orientation estimation,'' \emph{IEEE Robotics
  and Automation Letters}, vol.~7, no.~3, pp. 6693--6700, 2022.

\bibitem{b49}
S.~Li, P.~Duan, D.~Shi, W.~Zou, P.~Duan, and L.~Shi, ``Compact maximum
  correntropy-based error state kalman filter for exoskeleton orientation
  estimation,'' \emph{IEEE Transactions on Control Systems Technology}, pp.
  1--8, 2022.

\bibitem{b34}
F.~Gustafsson, ``Particle filter theory and practice with positioning
  applications,'' \emph{IEEE Aerospace and Electronic Systems Magazine},
  vol.~25, no.~7, pp. 53--82, 2010.

\bibitem{b51}
V.~Agarwal and H.~Parthasarathy, ``Disturbance estimator as a state observer
  with extended kalman filter for robotic manipulator,'' \emph{Nonlinear
  Dynamics}, vol.~85, pp. 2809--2825, 2016.

\bibitem{b37}
Y.~Fan, Y.~Zhang, G.~Wang, X.~Wang, and N.~Li, ``Maximum correntropy based
  unscented particle filter for cooperative navigation with heavy-tailed
  measurement noises,'' \emph{Sensors}, vol.~18, no.~10, p. 3183, 2018.

\bibitem{b39}
G.~Wang, C.~Yang, and X.~Ma, ``A novel robust nonlinear kalman filter based on
  multivariate laplace distribution,'' \emph{IEEE Transactions on Circuits and
  Systems II: Express Briefs}, vol.~68, no.~7, pp. 2705--2709, 2021.

\bibitem{b50}
L.~Shi, K.~H. Johansson, and R.~M. Murray, ``Kalman filtering with uncertain
  process and measurement noise covariances with application to state
  estimation in sensor networks,'' in \emph{2007 IEEE International Conference
  on Control Applications}, 2007, pp. 1031--1036.

\bibitem{b40}
B.~Chen, Y.~Xie, X.~Wang, Z.~Yuan, P.~Ren, and J.~Qin, ``Multikernel
  correntropy for robust learning,'' \emph{IEEE Transactions on Cybernetics},
  pp. 1--12, 2021.

\bibitem{b41}
L.~Dang, Y.~Huang, Y.~Zhang, and B.~Chen, ``Multi-kernel correntropy based
  extended kalman filtering for state-of-charge estimation,'' \emph{ISA
  Transactions}, 2022.

\bibitem{b30}
S.~Pei and C.~Tseng, ``Least mean p-power error criterion for adaptive fir
  filter,'' \emph{IEEE Journal on Selected Areas in Communications}, vol.~12,
  no.~9, pp. 1540--1547, 1994.

\bibitem{b36}
J.~Wiora, ``Least lp-norm low-pass filter,'' \emph{Automatica}, vol. 133, p.
  109854, 2021.

\bibitem{b28}
A.~Aravkin, J.~V. Burke, L.~Ljung, A.~Lozano, and G.~Pillonetto, ``Generalized
  kalman smoothing: Modeling and algorithms,'' \emph{Automatica}, vol.~86, pp.
  63--86, 2017.

\bibitem{b29}
F.~R. Hampel, E.~M. Ronchetti, P.~J. Rousseeuw, and W.~A. Stahel, \emph{Robust
  statistics: the approach based on influence functions}.\hskip 1em plus 0.5em
  minus 0.4em\relax John Wiley \& Sons, 2011, vol. 196.

\bibitem{c1}
Q.~Zhang and H.~Muhlenbein, ``On the convergence of a class of estimation of
  distribution algorithms,'' \emph{IEEE Transactions on Evolutionary
  Computation}, vol.~8, no.~2, pp. 127--136, 2004.

\bibitem{b33}
D.~Simon, \emph{Optimal state estimation: Kalman, H infinity, and nonlinear
  approaches}.\hskip 1em plus 0.5em minus 0.4em\relax John Wiley \& Sons, 2006.

\bibitem{b45}
Y.~Bar-Shalom, X.~R. Li, and T.~Kirubarajan, \emph{Estimation with applications
  to tracking and navigation: theory algorithms and software}.\hskip 1em plus
  0.5em minus 0.4em\relax John Wiley \& Sons, 2004.

\bibitem{b43}
R.~P. Agarwal, M.~Meehan, and D.~O'regan, \emph{Fixed point theory and
  applications}.\hskip 1em plus 0.5em minus 0.4em\relax Cambridge university
  press, 2001, vol. 141.

\bibitem{b44}
B.~Chen, J.~Wang, H.~Zhao, N.~Zheng, and J.~C. Príncipe, ``Convergence of a
  fixed-point algorithm under maximum correntropy criterion,'' \emph{IEEE
  Signal Processing Letters}, vol.~22, no.~10, pp. 1723--1727, 2015.

\bibitem{b31}
H.~J. Marquez, \emph{Nonlinear control systems: analysis and design}.\hskip 1em
  plus 0.5em minus 0.4em\relax John Wiley \& Sons, 2003, vol. 161.

\bibitem{b32}
R.~M. Murray, Z.~Li, and S.~S. Sastry, \emph{A mathematical introduction to
  robotic manipulation}.\hskip 1em plus 0.5em minus 0.4em\relax CRC press,
  2017.

\bibitem{b26}
V.~N. Vapnik, ``An overview of statistical learning theory,'' \emph{IEEE
  Transactions on Neural Networks}, vol.~10, no.~5, pp. 988--999, 1999.

\bibitem{b27}
P.~S. Bullen, \emph{Handbook of means and their inequalities}.\hskip 1em plus
  0.5em minus 0.4em\relax Springer Science \& Business Media, 2013, vol. 560.

\bibitem{b48}
M.~N. Syed, P.~M. Pardalos, and J.~C. Principe, ``On the optimization
  properties of the correntropic loss function in data analysis,''
  \emph{Optimization Letters}, vol.~8, no.~3, pp. 823--839, 2014.

\end{thebibliography}

\end{document}